%% file: main.tex
\documentclass[pra,onecolumn,floatfix,superscriptaddress,longbibliography, nofootinbib]{revtex4-2}

\usepackage[utf8]{inputenc}
\usepackage[left]{lineno}
\usepackage[english]{babel}
\usepackage[colorlinks=true,citecolor=teal,linkcolor=teal,urlcolor=teal]{hyperref}

\usepackage{
    algorithm,
    algpseudocode,
    amsmath,
    amssymb,
    amsthm,
    array,
    bbm,
    csquotes,
    dsfont,
    graphicx,
    hyperref,
    lipsum,
    mathtools,
    multirow,
    nicefrac,
    qcircuit,
    slashed,
    xcolor,
    xspace,
    times,
    braket,
    physics
}  

\usepackage{chngcntr}
\usepackage[capitalize]{cleveref}

\newtheorem{theorem}{Theorem}

\newtheorem{lemma}[theorem]{Lemma}

\newtheorem{corollary}[theorem]{Corollary}

\newtheorem{observation}{Observation}

\crefname{observation}{observation}{observations}
\Crefname{observation}{Observation}{Observations}

\newenvironment{example}[1][Example]{\begin{trivlist}
\item[\hskip \labelsep {\bfseries #1}]}{\end{trivlist}}

\renewcommand{\vec}[1]{\boldsymbol{#1}}

\input{commands.tex}

\graphicspath{ {images/} }

\begin{document}

\title{A PAC-Bayesian approach to generalization for quantum models}

\newcommand{\FU}{Dahlem Center for Complex Quantum Systems,
Freie Universität 
Berlin, 
14195 Berlin, Germany}
\newcommand{\hhi}{Fraunhofer Heinrich Hertz Institute, 10587 Berlin, Germany}
\newcommand{\hzb}{Helmholtz-Zentrum Berlin f{\"u}r Materialien und Energie, 14109 Berlin, Germany}
\newcommand{\ubc}{Department of Physical Chemistry, University 
of the Basque Country UPV/EHU, 48080 Bilbao, Spain}
\newcommand{\ehu}{EHU Quantum Center, University of 
the Basque Country UPV/EHU, 48080 Bilbao, Spain}

\newcommand{\tec}{TECNALIA, Basque Research and Technology Alliance (BRTA), 48160 Derio, Spain}
\newcommand{\war}{Department of Computer Science, University of Warwick, UK}

\author{Pablo Rodriguez-Grasa}
\altaffiliation{These authors contributed equally to this work.}
\affiliation{\ubc}
\affiliation{\ehu}
\affiliation{\tec}

\author{Matthias C. Caro}
\altaffiliation{These authors contributed equally to this work.}
\affiliation{\war}

\author{Jens Eisert}
\affiliation{\FU}
\affiliation{\hhi}
\affiliation{\hzb}

\author{Elies Gil-Fuster}
\affiliation{\FU}
\affiliation{\hhi}

\author{Franz J. Schreiber}
\affiliation{\FU}

\author{Carlos Bravo-Prieto}
\affiliation{\FU}

\begin{abstract}

\noindent Generalization is a central concept in machine learning theory, yet for quantum models, it is predominantly analyzed through uniform bounds that depend on a model's overall capacity rather than the specific function learned. These capacity-based uniform bounds are often too loose and entirely insensitive to the actual training and learning process. Previous theoretical guarantees have failed to provide non-uniform, data-dependent bounds that reflect the specific properties of the learned solution rather than the worst-case behavior of the entire hypothesis class. To address this limitation, we derive the first PAC-Bayesian generalization bounds for a broad class of quantum models by analyzing layered circuits composed of general quantum channels, which include dissipative operations such as mid-circuit measurements and feedforward. Through a channel perturbation analysis, we establish non-uniform bounds that depend on the norms of learned parameter matrices; we extend these results to symmetry-constrained equivariant quantum models; and we validate our theoretical framework with numerical experiments. This work provides actionable model design insights and establishes a foundational tool for a more nuanced understanding of generalization in quantum machine learning.
\end{abstract}

\maketitle

\section{Introduction}\label{sec:introduction}

Quantum devices hold the promise of addressing computational problems that lie beyond the reach of classical computers. While the emerging field of \emph{quantum machine learning} (QML) offers potential improvements over classical algorithms \cite{biamonte2017quantum,RevModPhys.91.045002,PACLearning,TemmeML,OurDiffusionModels,MindTheGaps}, the practical capabilities and potential advantages of variationally trained, layered quantum models \cite{McClean_2016,Variational} remain poorly understood. To better assess their potential for algorithmic advantage, recent efforts have increasingly focused on the expressivity, trainability, and classical simulatability of these models \cite{Dequantization,PhysRevLett.131.100803, masot2025prospects}.

A concept that is increasingly recognized as a key factor in understanding the performance of quantum models is that of \emph{generalization}. However, most rigorous generalization guarantees for quantum models currently mirror classical capacity-based analyses. These \emph{uniform} bounds control the worst-case generalization gap across an entire hypothesis class in terms of measures such as the number of independent parameters, covering numbers, pseudo-dimension, or variants of Rademacher complexity adapted to quantum settings~\cite{caro2022generalization,huang2021power,caro2020pseudo,Banchi2021,Caro2021encodingdependent,abbas2021power,bu2021rademachercomplexitynoisyquantum,caro2023out,Du2022,hur2024understanding,wu2026generalization}. Because they depend solely on overall model capacity, they fail to account for the specific function selected during the training process. As seen in classical deep learning, this insensitivity is well known to produce pessimistic, often vacuous bounds in overparameterized regimes where models interpolate training data yet still generalize well~\cite{zhang2017_rethinking, Zahn2021_still_rethinking}. Recent work suggests that quantum models can exhibit analogous phenomena, for example perfectly fitting random labels in quantum phase recognition tasks~\cite{gil2024understanding}. Together, these observations motivate the search for non-uniform, data-dependent bounds for QML that can reflect properties of the learned solution rather than worst-case behavior of the entire model class. Crucially, these non-uniform guarantees offer a path toward principled model design by identifying specific architectural features that actively promote generalization.

In classical machine learning, PAC-Bayesian approaches have provided a flexible and remarkably successful path to non-uniform generalization guarantees: PAC-Bayesian inequalities tie a predictor's true risk to its empirical risk plus a complexity term determined by the \emph{Kullback-Leibler} (KL) divergence between a distribution over the learned parameters (the posterior) and a data-independent reference distribution (the prior). By tailoring the posterior to concentrate around the trained solution one can obtain bounds that depend on norms or spectral properties of the learned parameters and correlate well with empirical generalization~\cite{mcallester1999pac, McAllester2003, neyshabur2015norm, bartlett2017spectrally, perturbation_bound, jiang2019fantasticgeneralizationmeasures}. Motivated by this, in this work we derive the first PAC-Bayesian generalization bounds for a broad class of QML models. 
The quantum models we consider are structured as \emph{layered} quantum circuits, where each layer is defined by a parameterized quantum channel. Our approach emphasizes expressive models incorporating dissipation and feedforward dynamics, such as recently introduced dynamic quantum circuits~\cite{deshpande2024dynamicparameterizedquantumcircuits}. Moreover, this formalism also lends itself to principled incorporation of inductive biases such as symmetry constraints. Choosing priors that are supported on symmetry-constrained subspaces (i.e., priors that respect equivariance) reduces the effective KL penalty and yields provably tighter bounds for equivariant quantum models, thus quantifying how symmetry reduces effective complexity in a data-dependent manner.

Taken together, our results provide a new, broadly applicable analytical toolkit for QML (see Fig.~\ref{fig:framework}): a framework that allows a unified treatment of a broad class of 
quantum channels under a 
PAC-Bayesian lens, yields operationally interpretable, solution-dependent complexity terms, gives concrete guidance for model design, and more importantly, provides a more nuanced understanding of generalization in quantum machine learning.

\begin{figure*}[t]
        \centering
          \includegraphics[width=.8 \textwidth]{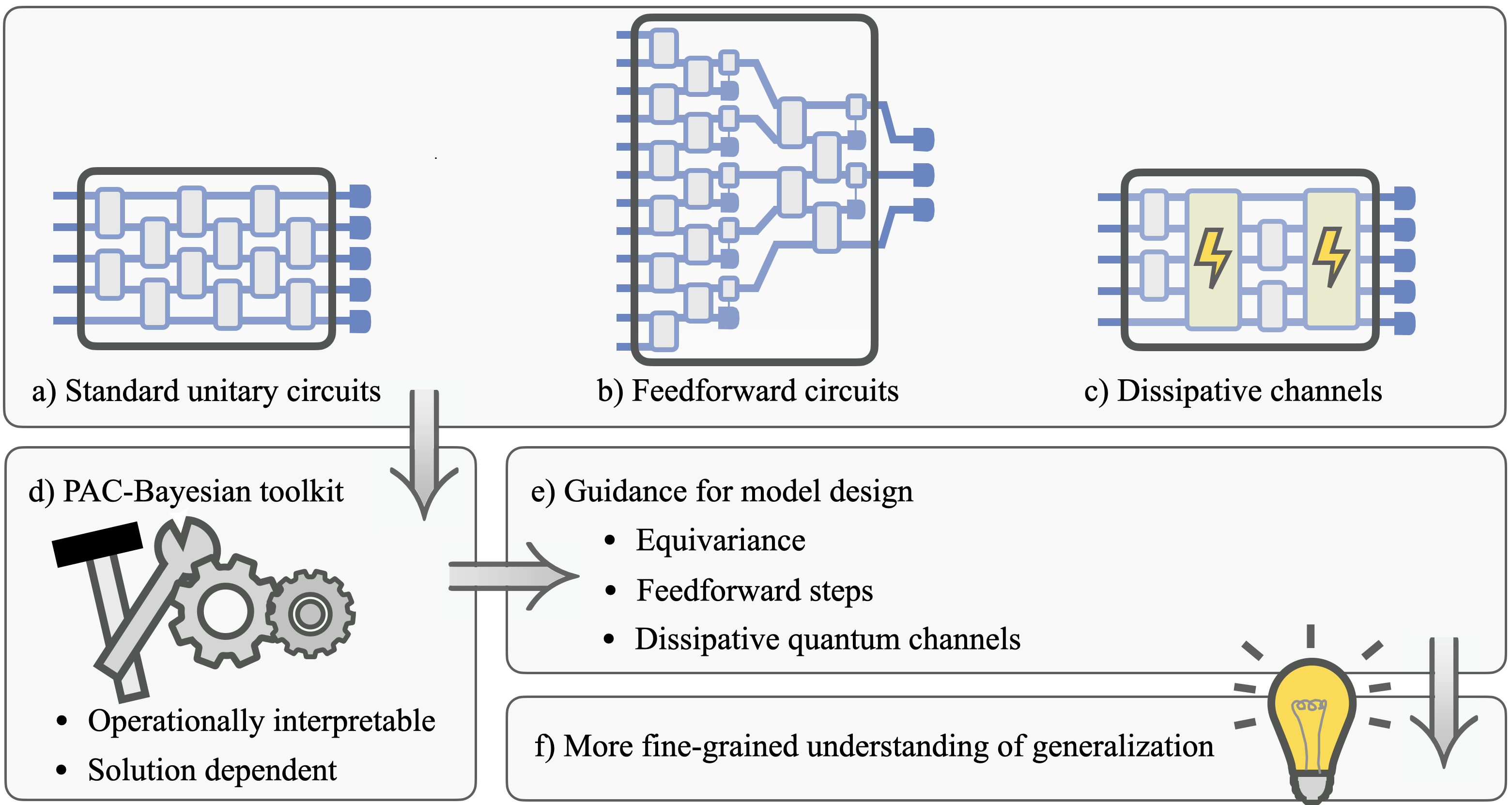}
        \caption{
            \textbf{Visualization of the proposed PAC-Bayesian framework for QML.} The framework unifies a broad spectrum of quantum architectures, including \textbf{(a)} standard unitary circuits, \textbf{(b)} mid-circuit measurement-and-feedforward architectures, and \textbf{(c)} dissipative channels. \textbf{(d)} Under the PAC-Bayesian lens, this methodology produces operationally interpretable and solution-dependent bounds that serve two main purposes: \textbf{(e)} providing theoretical guidance for model design, and \textbf{(f)} establishing a more nuanced understanding of generalization in quantum models. }
         
        \label{fig:framework}
    \end{figure*}

\section{Preliminaries}\label{s:pre}

In this section, we introduce the foundational concepts and notation used throughout this work. We begin by defining the supervised learning task and loss function. Next, we present the mathematical formalisms used to represent layered quantum models, specifically detailing the process matrix, Pauli transfer matrix, and equivariant channel representations. Finally, we review the classical PAC-Bayesian framework that underpins the derivation of our generalization bounds for quantum models.

\subsection{Learning setup}
Consider a supervised multi-class classification task with input domain $\calX$ and output classes labeled $y \in \{1,\dots,K\}$. A quantum model aims to solve this task by mapping an input $\vec{x} \in \calX$ to a probability distribution over the classes. First, the classical input $\vec{x}$ is encoded into an initial quantum state $\rho(\vec{x})$ in a Hilbert space $\calH_{\mathrm{in}}^{(1)}$. The model consists of $L$ layers, where each layer is characterized by a set of parameters. We denote the vector of all learned parameters across these layers as $\vec{w}$. The model then maps the input state $\rho(\vec{x})$ to a final output state $\rho_{\mathrm{\mathrm{out}}}(\vec{x})$ in $\calH_{\mathrm{\mathrm{out}}}^{(L)}$ via the composition of these parameterized layers, which we will formalize in the subsequent section. Finally, a measurement is performed on $\rho_{\mathrm{\mathrm{out}}}(\vec{x})$ to produce the output vector $f_{\vec{w}}(\vec{x}) \in \mathbb{R}^K$. Here, we assume without loss of generality that $\calH_{\mathrm{\mathrm{out}}}^{(L)}$ is $K$-dimensional, and we model our measurement as a standard computational basis measurement described by the \emph{positive operator-valued measure} (POVM) $\{\Pi_k = \ket{k}\bra{k}\}_{k=1}^K$. The $k$-th entry of the output vector represents the probability of obtaining outcome $k$ given by
\begin{equation}
    f_{\vec{w}}(\vec{x})[k] = \mathrm{Tr}(\Pi_k \rho_{\mathrm{\mathrm{out}}}(\vec{x}))\,.
\end{equation}
By construction, $f_{\vec{w}}(\vec{x})[k] \geq 0$ and $\sum_{k=1}^K f_{\vec{w}}(\vec{x})[k] = 1$.

We consider a classification task where the goal is to predict the correct label $y \in \{1,\dots,K\}$ for a given input $\vec{x}$. For a distribution $\calD$ over labeled examples $(\vec{x}, y)$ and a margin parameter $\gamma > 0$, the expected margin loss is defined as
\begin{equation}\label{eq:margin_loss}
    L_{\gamma}(f_{\vec{w}})=\bbP_{(\vec{x}, y)\sim \calD}\left[f_{\vec{w}}(\vec{x})[y]\leq \gamma + \max_{k\neq y}f_{\vec{w}}(\vec{x})[k]\right]\,.
\end{equation}
This loss measures the probability that the probability assigned to the correct class $y$ does not exceed the probability of the most likely incorrect class by at least the margin $\gamma$. The empirical estimate of this loss is denoted as $\hL_{\gamma}(f_{\vec{w}})$, where we drop the explicit dependence on the samples for notational simplicity. Setting $\gamma = 0$ recovers the standard classification loss, for which we write $L_0$ and $\hL_0$ as the expected risk and empirical risk, respectively. By definition, both $L_\gamma$ and $\hL_\gamma$ take values in the interval $[0, 1]$.

\subsection{Representations of quantum channels}

As introduced previously, we consider a quantum model as a sequence of $L$ layers processing an initial state $\rho(\vec{x})$. Unlike standard approaches that often restrict layers to unitary evolutions, we treat each layer $j$ as a general quantum channel $\phi_j$, which maps density operators on an input Hilbert space $\calH_{\mathrm{in}}^{(j)}$ to density operators on an output Hilbert space $\calH_{\mathrm{out}}^{(j)}$ (where $\calH_{\mathrm{out}}^{(j)} = \calH_{\mathrm{in}}^{(j+1)}$ for $j < L$). This framework captures a significantly broader class of architectures than standard parameterized quantum circuits (PQCs) analysis, including those with dissipative dynamics, mid-circuit measurements, and feedforward operations.
Crucially, we employ these formalisms not as prescriptions for the physical implementation of the circuit, but as mathematical representations of the \emph{effective map} implemented by each layer. 

The overall quantum model maps the input state $\rho(\vec{x})$ to the final output state $\rho_{\mathrm{out}}(\vec{x})$ via composition,
\begin{equation}
    \rho_{\mathrm{out}}(\vec{x}) = \left(\phi_L \circ \dots \circ \phi_1\right)\rho(\vec{x})\,.
\end{equation}
To analyze these channels, we employ two distinct formalisms: the \emph{process matrix} (PM) and the \emph{Pauli transfer matrix} (PTM). In these formulations, we characterize the action of $\phi_j$ via a weight matrix $W_j$ that models the deviation from the maximally depolarizing channel $\phi_\mathrm{DEP}(\rho)=\mathrm{Tr}(\rho)\mathbb{I}/d$, where $d$ is the dimension of the Hilbert space. This establishes a physical baseline for our analysis: a channel with $W_j=0$ yields the maximally mixed state regardless of the input, representing a model with no capacity to distinguish data. Consequently, our PAC-Bayes bounds will naturally penalizes models based on how far they drift from this baseline. The set of weight matrices $\mathrm{vec}(\{W_j\}_{j=1}^L)$ collectively parameterizes the quantum model, and we denote the vector of all these parameters as $\vec{w}$.

\subsubsection{Process matrix formalism}\label{sss:pm_formalism}
The PM framework, while requiring equal input and output dimensions ($d^{(j)}_\mathrm{in}=d^{(j)}_\mathrm{out}=2^n$, where $n$ is the number of qubits), offers a compact representation. We denote the action of a layer $\phi_j=\phi_\mathrm{PM}^{W_j}$ as
\begin{equation}
    \phi_\mathrm{PM}^{W_j}(\rho) = \sum_{A,B} \chi(A,B) \sigma_A \rho \sigma_B\,,
\end{equation}
where $\sigma_A,\sigma_B$ denote Pauli matrices, and 
\begin{equation}
\chi(A,B)=\left(\frac{\delta_{A,B}}{4^n} + W_j(A,B)\right)\,,
\end{equation}
is the \emph{process matrix} of the channel $\phi_\mathrm{PM}^{W_j}$. At $W_j=0$, we correctly recover the maximally depolarizing channel $\phi_\mathrm{DEP}=\phi_\mathrm{PM}^0$. To quantify the complexity of the learned models in our subsequent bounds, we will use specific norms of the parameter matrices. Concretely, we will rely on the $0$-norm (i.e., the sparsity) $\xi_j :=\norm{W_j}_0$, the Frobenius norm $\norm{W_j}_F$, and the entry-wise $1$-norm $\norm{W_j}_{1,1}$.

\subsubsection{Pauli transfer matrix formalism}\label{sss:ptm_formalism}
The PTM framework, which was recently employed for quantum learning in \cite{caro2212learning}, allows an independent choice of input and output dimensions. Let $A \in \{0,1,2,3\}^{n_\mathrm{\mathrm{out}}}$ and $B \in \{0,1,2,3\}^{n_\mathrm{in}}$. The action of layer $\phi_j=\phi_{\mathrm{PTM}}^{W_j}$ is given by
\begin{equation}
    \phi_{\mathrm{PTM}}^{W_j}(\rho) = \frac{1}{\sqrt{d^{(j)}_\mathrm{in} d^{(j)}_\mathrm{\mathrm{out}}}} \sum_{A,B} R(A,B)\, \mathrm{Tr}(\sigma_B \rho) \sigma_A\,,
\end{equation}
where
\begin{equation}
R(A,B)= \left(\delta_{A,0} \delta_{B,0} \sqrt{\frac{d^{(j)}_\mathrm{in}}{d^{(j)}_\mathrm{\mathrm{out}}}} + W_j(A,B)\right)\,,
\end{equation}
is the \emph{Pauli transfer matrix} of the channel $\phi^{W_j}_{\text{PTM}}$. Note that, even though the same notation is used, the weight matrices $W_j$ in the PM and PTM frameworks are different. However, the specific context will always clarify which formalism is in use. Analogous to the PM representation, our generalization bounds for this framework will depend on the sparsity $\xi_j :=\norm{W_j}_0$, the Frobenius norm $\norm{W_j}_F$, and the entry-wise $1$-norm $\norm{W_j}_{1,1}$ of these PTM weight matrices.

\subsubsection{Equivariant channel formalism}\label{sss:equivariant_formalism}
Finally, when a task exhibits symmetries described by a compact group $G$, we explicitly model the circuit structure to respect the underlying group action. An $L$-layer equivariant quantum model is defined by a sequence of unitary representations $(R^{\mathrm{in}}, R^{(1)},\dots, R^{(L)} = R^{\mathrm{out}})$ acting on the Hilbert spaces connecting the layers. The choice of intermediate representations $R^{(j)}$ dictates the internal symmetry spaces of the model, acting as an architectural hyperparameter that determines how symmetric features are processed, as illustrated in Fig.~\ref{fig:structure_EQNN}(a).

Each layer $\phi_j$ is a quantum channel mapping states from layer $j-1$ to layer $j$. Crucially, $\phi_j$ is required to be $(G, R^{(j-1)}, R^{(j)})$-equivariant, meaning its operation commutes with the group action: 
\begin{equation}
\phi_j \circ \mathrm{Ad}_{R^{(j-1)}}(g) = \mathrm{Ad}_{R^{(j)}}(g) \circ \phi_j \quad \text{for all } g \in G\,.
\end{equation}
This condition imposes constraints on the parameters of the channel. Specifically, the channel's Choi operator $J^{\phi_j}$ must lie in the commutant algebra of the product representation $R^{(j-1)\ast} \otimes R^{(j)}$.

To parameterize this efficiently, we employ the isotypic decomposition~\cite{ragone2022representation}, which decomposes the joint input-output space into a direct sum of \emph{irreducible representations} (irreps) of $G$ as $\calH_{\mathrm{in}} \otimes \calH_{\mathrm{out}} \simeq \bigoplus_\lambda \calH_\lambda \otimes \calH_{m_{\lambda}}$. Here, $\lambda$ indexes the unique irreps contained in the tensor product $R^{(j-1)\ast} \otimes R^{(j)}$, $d_\lambda$ is the dimension of the irrep space, and $m_\lambda$ is the dimension of the multiplicity space.
Because the Choi operator lies in the commutant algebra, Schur's Lemma guarantees that in the basis aligned with this isotypic decomposition, $J^{\phi_j}$ takes a strictly block-diagonal form, as seen in Fig.~\ref{fig:structure_EQNN}(b). We therefore define the parameterization for layer $j$ directly in this irrep basis. Without symmetry constraints, representing the channel would require a fully parameterized matrix across the entire joint space. By enforcing equivariance, however, the layer is completely defined by a smaller set of parameter matrices $\{W_{j,\lambda}\}_\lambda$ that act exclusively on the multiplicity spaces. The Choi operator takes the specific form
\begin{equation} \label{eq:equivariant_param} 
J^{\phi_j} \simeq \bigoplus_{\lambda} \left( \mathbb{I}_{d_{\lambda}} \otimes (J_{0,\lambda} + W_{j,\lambda}) \right)\,. 
\end{equation}
Here, $J_{0,\lambda} = \mathbb{I}_{m_\lambda}/{d^{(j)}_{\mathrm{out}}}$ represents the maximally depolarizing channel restricted to this symmetry sector, serving as a fixed, equivariant baseline. The matrices $W_{j,\lambda}$ (of size $m_\lambda \times m_\lambda$) capture the learned deviations from this baseline within each symmetry sector. This representation ensures that the model is strictly equivariant while significantly reducing the number of free parameters compared to a general quantum channel. For our theoretical guarantees, we will evaluate the complexity of these models using symmetry-adapted measures, specifically the number of equivariant parameters $\Xi_j = \sum_\lambda m^2_{j,\lambda}$, as well as the equivariant trace and Frobenius norms defined over the irrep blocks $W_{j,\lambda}$.

\begin{figure*}[t]
        \centering
        \includegraphics[width=.9\textwidth]{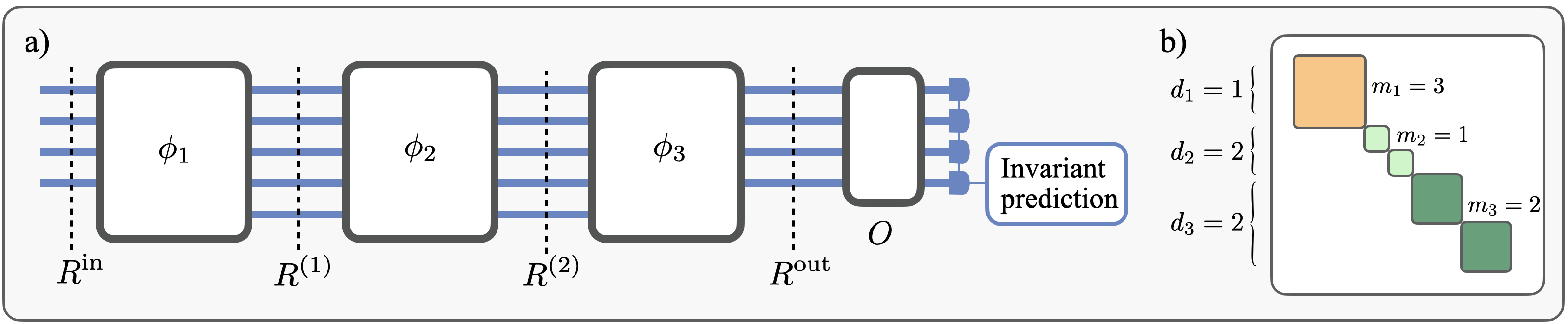}
        \caption{
            \textbf{Architecture and parameterization of equivariant quantum models.} \textbf{(a)}  Schematic of a multi-layer equivariant quantum model. The model consists of a sequence of quantum channels $\phi_j$ and unitary representations $R^{(j)}$ that define the symmetry constraints at each layer. The implementation concludes with an observable $O$ yielding an invariant prediction. \textbf{(b)} The block-diagonal structure of a channel's Choi operator arising from the isotypic decomposition. By virtue of Schur's Lemma, the parameters decouple into independent blocks acting on the multiplicity spaces. Here, the colored squares represent the learnable parameter matrices $W_{j,\lambda}$ of size $m_\lambda \times m_\lambda$, while the repetition count $d_\lambda$ corresponds to the 
            dimension of the associated 
            irreducible representation.
        } 
        \label{fig:structure_EQNN}
    \end{figure*}

\subsection{PAC-Bayesian framework}\label{ss:framework}
The \emph{PAC-Bayesian framework}~\cite{mcallester1998some, mcallester1999pac} provides generalization guarantees, often for randomized predictors. Instead of analyzing a single hypothesis $f_{\vec{w}}$ learned from data, this approach considers a distribution over predictors, termed the posterior distribution $Q$, which typically depends on the training data. The framework bounds the risk of hypotheses drawn from $Q$ by relating it to their empirical performance and to the 
\emph{Kullback-Leibler} (KL) divergence between $Q$ and a 
data-independent prior distribution $P$.

A common strategy involves defining the posterior $Q$ as a distribution centered around the learned parameters $\vec{w}$, often by adding random noise. Let $\vec{w}$ represent the parameters learned from the training data. Consider a random perturbation $\vec{u}$, and let $Q$ be the distribution of the pertubed parameters $\vec{w} + \vec{u}$. A foundational PAC-Bayes result bounds the expected risk under $Q$.

\begin{lemma}[PAC-Bayes bound for randomized predictors~\cite{McAllester2003}]\label{lemma:pac-bayes_rand}
        Let $f_{\vec{w}}(\vec{x}):\calX \to \bbR^{K}$ be any predictor (not
        necessarily a neural network) with parameters $\vec{w}$, and let $P$ be
        any distribution on the parameters that is independent of the
        training data. Then, for any $\delta >0$, with probability $\geq 1-\delta$ over the training set of size $N$, for any $\vec{w}$,
        and any random perturbation $\vec{u}$, we have
\begin{align}
    \mathbb{E}_{\vec{u}} [L_0 (f_{\vec{w}+\vec{u}})]
    &\leq \mathbb{E}_{\vec{u}} [\hL_0 (f_{\vec{w}+\vec{u}})] + 2\sqrt{\frac{2\left(D_{\mathrm{KL}}\left(\vec{w}+\vec{u}\|P\right)+\ln\left(\frac{2N}{\delta}\right)\right)}{N-1}}\,.
\end{align}
\end{lemma}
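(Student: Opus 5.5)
This is McAllester's classical PAC-Bayes bound, and it holds for any measurable predictor, so the quantum structure plays no role. The plan is to derive it from the standard change-of-measure argument applied to the bounded $0$--$1$ loss $\ell(\vec{w},(\vec{x},y))=\mathbb{1}[f_{\vec{w}}(\vec{x})[y]\le \max_{k\neq y} f_{\vec{w}}(\vec{x})[k]]\in\{0,1\}$. Write $Q$ for the law of $\vec{w}+\vec{u}$. For each fixed parameter vector $\vec{v}$, set $\Delta(\vec{v}) = L_0(f_{\vec{v}})-\hL_0(f_{\vec{v}})$. The plan has three steps: a moment bound for a single $\vec{v}$, a transfer to arbitrary $Q$ via Donsker--Varadhan, and an optimization over a free parameter.

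\textbf{Moment bound for fixed $\vec{v}$.} Since $\hL_0(f_{\vec{v}})$ is an average of $N$ i.i.d.\ Bernoulli variables, I would take $\phi(\vec{v}) = 2(N-1)\Delta(\vec{v})^2$. By Hoeffding, $\mathbb{P}(|\Delta|\ge t)\le 2e^{-2Nt^2}$. Integrating the tail then gives
\begin{equation}
\mathbb{E}_S\, e^{2(N-1)\Delta(\vec{v})^2}\le 2N.
\end{equation}
Concretely, $\int_0^\infty \mathbb{P}(e^{\phi}>s)\,ds \le 1+\int_1^\infty 2 s^{-N/(N-1)}ds = 1+2(N-1)\le 2N$.

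\textbf{Transfer to arbitrary posteriors.} Exchange the order of integration over the prior, which is legitimate because $P$ is data-independent: $\mathbb{E}_S\mathbb{E}_{\vec{v}\sim P} e^{\phi(\vec{v})}\le 2N$. Markov's inequality then gives, with probability at least $1-\delta$, $\mathbb{E}_P e^{\phi}\le 2N/\delta$. On this event, apply the Donsker--Varadhan variational formula simultaneously to every $Q$, in particular to every $\vec{w}$ and every perturbation law $\vec{u}$:
\begin{equation}
\mathbb{E}_{Q}\,\phi \le D_{\mathrm{KL}}(Q\|P)+\ln \mathbb{E}_P e^{\phi}\le D_{\mathrm{KL}}(Q\|P)+\ln\frac{2N}{\delta}.
\end{equation}
By Jensen, $2(N-1)\big(\mathbb{E}_Q\Delta\big)^2\le \mathbb{E}_Q \phi$. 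This already yields $\mathbb{E}_{\vec{u}}[L_0]\le \mathbb{E}_{\vec{u}}[\hL_0]+\sqrt{(D_{\mathrm{KL}}+\ln(2N/\delta))/(2(N-1))}$, which is stronger than the stated bound.

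The stated constant $2\sqrt{2(\cdot)/(N-1)}$ equals $4$ times $\sqrt{(\cdot)/(2(N-1))}$, so the weaker form in the lemma follows immediately. Alternatively, one could quote McAllester's original bound, whose constants correspond exactly to the lemma, and note that it applies to any parametrized family.

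\textbf{Main obstacle.} The only delicate point is measure-theoretic rather than quantitative: the bound must hold uniformly over all data-dependent $Q$, including all $\vec{w}$ selected after seeing the sample. This is exactly why the high-probability event is taken on $\mathbb{E}_P e^{\phi}$, which does not depend on $Q$, and Donsker--Varadhan is applied only afterwards. Measurability of $\vec{v}\mapsto f_{\vec{v}}$ is routine because $f$ is polynomial in the weight entries. When $D_{\mathrm{KL}}=\infty$ the bound is trivially true.
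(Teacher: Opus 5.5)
Your proposal is correct. The paper gives no proof of Lemma~\ref{lemma:pac-bayes_rand}; it simply cites Ref.~\cite{McAllester2003}, whereas you derive it from scratch via the standard exponential-moment and change-of-measure argument. Your computations check out:
\begin{itemize}
\item The two-sided Hoeffding tail integrates to $\mathbb{E}_S\, e^{2(N-1)\Delta(\vec v)^2}\le 2N-1$.
\item Tonelli and Markov place the high-probability event on the posterior-independent quantity $\mathbb{E}_{P}\, e^{\phi}$.
\item Donsker--Varadhan plus Jensen then give $\mathbb{E}_{\vec u}[L_0(f_{\vec w+\vec u})]-\mathbb{E}_{\vec u}[\hL_0(f_{\vec w+\vec u})]\le\sqrt{(D_{\mathrm{KL}}+\ln(2N/\delta))/(2(N-1))}$.
\end{itemize}
This is a factor of $4$ sharper than the stated deviation term and therefore implies it. Your fallback of appealing to McAllester's original constants is unnecessary.

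The citation buys brevity. Your derivation makes explicit two features the paper relies on later. First, the guarantee holds simultaneously for every data-dependent posterior, which is what allows taking $Q=\mathcal{N}(\vec w,\sigma^2\mathbb{I})$ centered at the trained weights. Second, the prior, including its variance, must be fixed before the data are seen. This is precisely why the paper union-bounds over a covering net of values $\tilde\beta$ rather than choosing $\sigma$ directly from the learned $\beta$.

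One minor point: your measurability justification (polynomial dependence on the weights) is specific to the PM/PTM models, while the lemma is stated for an arbitrary predictor. In that general setting, joint measurability of the loss in the parameters and the data is a standing assumption, and $N\ge 2$ is implicit.
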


To derive a bound on the risk $L_0(f_{\vec{w}})$ of the specific, non-randomized predictor $f_{\vec{w}}$, we can relate $L_0(f_{\vec{w}})$ to the empirical margin loss $\hL_{\gamma}(f_{\vec{w}})$ via the expected risk of the randomized predictor $f_{\vec{w}+\vec{u}}$. The following lemma provides such a connection, contingent on the perturbation $\vec{u}$ being sufficiently small in terms of its effect on the output.
\begin{lemma}[PAC-Bayes margin bound~\cite{perturbation_bound}]\label{lemma:pac-bayes_margin}
        Let $f_{\vec{w}}(\vec{x}):\calX \to \bbR^{K}$ be any predictor (not
  necessarily a neural network) with parameters $\vec{w}$, and $P$ be
  any distribution on the parameters that is independent of the
  training data. Then, for any $\gamma, \delta >0$, with probability
  $\geq 1-\delta$ over the training set of size $N$, for any $\vec{w}$,
  and any random perturbation $\vec{u}$ s.t. $\bbP_{\vec{u}}
  \left[\max_{\vec{x} \in
      \calX} \lVert{f_{\vec{w}+\vec{u}}(\vec{x})-f_{\vec{w}}(\vec{x})} \rVert_\infty <
      \frac{\gamma}{4} \right]\geq \frac{1}{2}$, we have
\begin{equation}
L_0(f_{\vec{w}}) \leq \hL_{\gamma}(f_{\vec{w}})+ 4\sqrt{\frac{D_{\mathrm{KL}} \left(\vec{w}+\vec{u}\|P\right)+\ln\left(\frac{6N}{\delta}\right)}{N-1}}\,.
\end{equation}
\end{lemma}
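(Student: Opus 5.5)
The idea is to reduce the statement to \cref{lemma:pac-bayes_rand}, applied to a conditioned perturbation together with a margin-loss surrogate. Let $S := \{\vec{u} : \max_{\vec{x}\in\calX}\lVert f_{\vec{w}+\vec{u}}(\vec{x})-f_{\vec{w}}(\vec{x})\rVert_\infty < \gamma/4\}$. By hypothesis $\bbP_{\vec{u}}[S]\ge 1/2$. Define $\tilde{\vec{u}}$ as $\vec{u}$ conditioned on $S$, and let $\tilde Q$ be the law of $\vec{w}+\tilde{\vec{u}}$.

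\textbf{Step 1: the KL term of the conditioned posterior.} Its density is $q(\vec{v})\mathbbm{1}_S/\bbP[S]$, so
\begin{equation}
D_{\mathrm{KL}}(\tilde Q\|P) = \frac{1}{\bbP[S]}\int_S q\ln\frac{q}{p} - \ln\bbP[S].
\end{equation}
We control the first term using the complementary piece over $S^c$. The log-sum inequality gives $\int_{S^c} q\ln(q/p)\ge \bbP[S^c]\ln(\bbP[S^c]/P[S^c])\ge -1/e$, since $x\ln x\ge -1/e$. This yields
\begin{equation}
D_{\mathrm{KL}}(\tilde Q\|P)\le 2\bigl(D_{\mathrm{KL}}(Q\|P)+1/e\bigr)+\ln 2 \le 2D_{\mathrm{KL}}(Q\|P)+2.
\end{equation}

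\textbf{Step 2: margin sandwich.} Fix any $\tilde{\vec{u}}\in S$. Every coordinate of the output moves by less than $\gamma/4$. So if $f_{\vec{w}}$ misclassifies with margin $0$, i.e.\ $f_{\vec{w}}(\vec{x})[y]\le\max_{k\ne y}f_{\vec{w}}(\vec{x})[k]$, then the perturbed predictor has $f_{\vec{w}+\tilde{\vec{u}}}(\vec{x})[y]\le \gamma/2+\max_{k\ne y}f_{\vec{w}+\tilde{\vec{u}}}(\vec{x})[k]$. In the other direction, a $\gamma/2$-margin failure of the perturbed predictor implies a $\gamma$-margin failure of $f_{\vec{w}}$. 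Hence, pointwise on $S$,
\begin{equation}
L_0(f_{\vec{w}})\le L_{\gamma/2}(f_{\vec{w}+\tilde{\vec{u}}}),\qquad \hL_{\gamma/2}(f_{\vec{w}+\tilde{\vec{u}}})\le \hL_\gamma(f_{\vec{w}}),
\end{equation}
and the same holds after taking expectation over $\tilde{\vec{u}}$.

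\textbf{Step 3: apply the randomized bound to the margin loss.} The derivation of \cref{lemma:pac-bayes_rand} only uses that the loss is $[0,1]$-valued. It therefore applies verbatim to $L_{\gamma/2}$ with posterior $\tilde Q$. Chaining Step 2, Lemma 1 with loss $L_{\gamma/2}$, and Step 2 again gives
\begin{equation}
L_0(f_{\vec{w}})\le \hL_\gamma(f_{\vec{w}})+2\sqrt{\frac{2(2D_{\mathrm{KL}}+2+\ln(2N/\delta))}{N-1}}.
\end{equation}
The remaining work is constant-matching: the right-hand side is at most $4\sqrt{(D_{\mathrm{KL}}+\ln(6N/\delta))/(N-1)}$ once we use $1+\tfrac12\ln(2N/\delta)\le\ln(6N/\delta)$. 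This inequality holds because $\ln 3\ge 1$ and $N\ge1$, $\delta\le1$ make $\tfrac12\ln(2N/\delta)\ge0$.

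\textbf{Main obstacle.} The delicate point is Step 1. Conditioning on $S$ could in principle inflate the KL divergence, and the probability-$\tfrac12$ hypothesis is exactly what keeps the inflation to a factor of $2$ plus a constant. A second subtlety: the high-probability event must hold simultaneously for all posteriors. Otherwise the data-dependent choice of $\vec{w}$ and $\tilde Q$ would invalidate the bound. Lemma 1 already provides this uniformity, since it holds "for any $\vec{w}$ and any perturbation".
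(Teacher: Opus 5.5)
Your proposal is correct and follows essentially the same route as the proof in the reference the paper cites for this lemma; the paper itself gives no proof. That route conditions $\vec u$ on the good set $S$ with $\bbP[S]\ge 1/2$, bounds the conditioned KL by $2(D_{\mathrm{KL}}+1)$, uses the sandwich $L_0(f_{\vec w})\le L_{\gamma/2}(f_{\vec w+\tilde{\vec u}})$ and $\hL_{\gamma/2}(f_{\vec w+\tilde{\vec u}})\le \hL_\gamma(f_{\vec w})$, and then applies McAllester's bound to the $[0,1]$-valued $\gamma/2$-margin loss, with your constants ($2/e+\ln 2\le 2$ and $1+\tfrac12\ln(2N/\delta)\le \ln(6N/\delta)$) checking out.
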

The lemma, variations of which exist in the literature (see, e.g., Refs.~\cite{LS2003, McAllester2003, behboodi2022pac, perturbation_bound}), provides a powerful tool that applies broadly, independently of whether the predictor is a neural network or some other parameterized model. It will form the foundation for deriving specific generalization guarantees for quantum models using PAC-Bayes theory.

\section{Results}

In this section, we present our main theoretical and empirical findings. We first establish general PAC-Bayesian generalization bounds for quantum models parameterized as quantum channels, providing a rigorous guarantee that links the true risk to the empirical loss and the spectral properties of the learned parameters. We then extend this framework to the setting of geometric QML, deriving a specialized bound for equivariant models that explicitly quantifies the benefits of symmetry through group-theoretic properties. Finally, we corroborate our theoretical insights through a set of numerical experiments on a standard quantum learning task.

\subsection{PAC-Bayes generalization bound for quantum models}\label{s:pac-bayes}

Here, we derive generalization bounds for quantum models using the PAC-Bayesian framework. Our approach is inspired by the perturbation-based method from Ref.~\cite{perturbation_bound}, originally developed for classical feedforward networks with ReLU activations using weight norms. Next, we use the PAC-Bayes framework to analytically derive margin-based bounds in terms of weight norms of the quantum model. As established in Lemma~\ref{lemma:pac-bayes_margin}, the PAC-Bayes framework requires bounding the change in the model's output under perturbations to its parameters. To achieve this, we quantify how the output of the quantum model changes under small perturbations, see \Cref{ss:perturbation} and Appendix~\ref{a:perturbation-bounds}. By 
leveraging our perturbation bounds alongside Lemma~\ref{lemma:pac-bayes_margin}, we establish PAC-Bayes generalization bounds for quantum models. Below, we present \emph{informal} versions of our main theorems using simplified notation. To highlight the core complexity terms that drive generalization---which are the primary focus of our analysis---we abuse the $\widetilde{\mathcal{O}}$ notation to suppress a further (typically non-dominant) term. We refer to \Cref{a:generalization-bounds} for the formal, mathematically precise statements of all theorems.

\begin{theorem}[PAC-Bayes generalization bounds for quantum models -- PM framework (Informal)]\label{th:pac-bayes-PM}
      For any $L > 0$, let $f_{\vec{w}}^{\mathrm{PM}}: \mathcal{X} \to \mathbb{R}^K$ be an $L$-layer quantum model represented via PM framework $\phi_j = \phi_{\mathrm{PM}}^{W_j}$ with $\xi_j\coloneqq \norm{W_j}_0$, $\xi_{\max}=\max_j \xi_j$, and $\beta_{\rm PM}=\sum_{j=1}^L\left(\prod_{\ell=j+1}^L\norm{W_\ell}_{1,1}\right)$. Then, for any $\gamma > 0$, with high probability over the training set of size $N$, for any  vector of learned parameters $\vec{w}=\mathrm{vec}(\{W_j\}_{j=1}^L)$, we have
    \begin{equation}
        L_0(f_{\vec{w}}^{\mathrm{PM}}) \leq \hat{L}_{\gamma}(f_{\vec{w}}^{\mathrm{PM}})+ \widetilde{\mathcal{O}}\left(\sqrt{\frac{\beta_{\rm PM}^2\:\xi_{\max} \ln\left(\sum_{j=1}^L2^{\xi_j}\right)\:\sum_{j=1}^L\norm{W_j}_F^2}{\gamma^2 N}} \right)\,.
    \end{equation}
\end{theorem}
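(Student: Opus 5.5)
We follow the perturbation-based route of Ref.~\cite{perturbation_bound}: we apply Lemma~\ref{lemma:pac-bayes_margin} with a Gaussian perturbation supported on the sparsity pattern of the learned weights, and then take a union bound over a finite grid of scales and sparsity patterns.

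\textbf{Step 1: perturbation bound.} For perturbed weights $W_j+U_j$, we bound $\max_{\vec{x}}\norm{f_{\vec{w}+\vec{u}}(\vec{x})-f_{\vec{w}}(\vec{x})}_\infty$ by the trace-norm distance between the two output states. Write $\Phi=\phi_L\circ\dots\circ\phi_1$ and $\Phi'$ for the perturbed composition, and telescope:
\begin{equation}
\Phi'-\Phi=\sum_{j}\phi'_L\circ\cdots\circ\phi'_{j+1}\circ(\phi'_j-\phi_j)\circ\phi_{j-1}\circ\cdots\circ\phi_1 .
\end{equation}
Since $\phi^{W+U}_{\mathrm{PM}}-\phi^{W}_{\mathrm{PM}}=\sum_{A,B}U(A,B)\sigma_A(\cdot)\sigma_B$ and each $\sigma_A X\sigma_B$ preserves the trace norm, the $j$-th difference has induced $1\to1$ norm at most $\norm{U_j}_{1,1}$. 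The states entering it are normalized.

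The maps applied afterwards cannot simply be treated as contractions, because the perturbed maps need not be CPTP. However, each acts on a traceless operator $X$. The depolarizing part kills any traceless operator, so only $\sum W_\ell(A,B)\sigma_A X\sigma_B$ survives, with norm at most $(\norm{W_\ell}_{1,1}+\norm{U_\ell}_{1,1})\norm{X}_1$. The perturbation remains traceless only if $\sum_A U(A,A)=0$, i.e.\ the perturbation must preserve trace; we either restrict $\vec{u}$ to this subspace or carry the trace term as a lower-order correction. This gives
\begin{equation}
\norm{f_{\vec{w}+\vec{u}}-f_{\vec{w}}}_\infty\le \sum_j\norm{U_j}_{1,1}\prod_{\ell>j}\bigl(\norm{W_\ell}_{1,1}+\norm{U_\ell}_{1,1}\bigr).
\end{equation}
If $\norm{U_\ell}_{1,1}\le\norm{W_\ell}_{1,1}/L$, the product is at most $e\prod_{\ell>j}\norm{W_\ell}_{1,1}$, so the deviation is at most $e\,\beta_{\rm PM}\max_j\norm{U_j}_{1,1}$.

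\textbf{Step 2: choice of posterior and prior.} Let $\vec{u}$ be Gaussian with variance $\sigma^2$ on the $\xi_j$ nonzero entries of each $W_j$ and zero elsewhere; restricting to the support keeps the sparsity fixed. Then $\norm{U_j}_{1,1}\le\sqrt{\xi_j}\norm{U_j}_F$, and concentration of $\norm{U_j}_F$ gives $\norm{U_j}_F\lesssim\sigma\sqrt{\xi_j+\ln L}$ with probability at least $1/2$ simultaneously for all $j$. So $\max_j\norm{U_j}_{1,1}\lesssim\sigma\,\xi_{\max}$ up to logarithmic factors. Choosing
\begin{equation}
\sigma\approx\gamma\big/\bigl(\beta_{\rm PM}\sqrt{\xi_{\max}}\cdot\sqrt{\xi_{\max}}\bigr)
\end{equation}
meets the condition $\gamma/4$ of Lemma~\ref{lemma:pac-bayes_margin}. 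Take the prior $P$ to be the same Gaussian centered at zero on the same support. Then $D_{\mathrm{KL}}=\sum_j\norm{W_j}_F^2/(2\sigma^2)\approx\beta_{\rm PM}^2\xi_{\max}\sum_j\norm{W_j}_F^2/\gamma^2$. This is the claimed complexity once one factor of $\xi_{\max}$ is sharpened; that factor comes from the looseness of the $1,1$ versus Frobenius comparison. I would aim for a tighter estimate via $\mathbb{E}\norm{U_j}_{1,1}=\sigma\xi_j\sqrt{2/\pi}$ together with a Lipschitz-concentration argument, which should leave a single $\xi_{\max}$.

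\textbf{Step 3: removing data dependence.} The prior may not depend on $\beta_{\rm PM}$ or on the support of $\vec{w}$. We therefore take a union bound over a grid of $\mathcal{O}(\ln)$ values of $\beta_{\rm PM}$ and over all sparsity patterns. Counting patterns contributes a $\ln\bigl(\sum_j 2^{\xi_j}\bigr)$-type term, which multiplies or adds to the KL as in the stated $\widetilde{\mathcal{O}}$; the remaining $\ln(N/\delta)$ terms are suppressed. The main obstacle is Step 1: controlling the non-CPTP perturbed layers so that the product of factors is governed by $\norm{W_\ell}_{1,1}$ rather than by constants. This requires handling the traceless, depolarizing-killed component carefully.
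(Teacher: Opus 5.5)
Your skeleton matches the paper's. Step~1 is its perturbation lemma: the Pauli twirl annihilates the depolarizing part on traceless inputs, the difference channel costs $\norm{U_j}_{1,1}$ on a normalized state, and $(1+1/L)^L\le e$. This is followed by Lemma~\ref{lemma:pac-bayes_margin} with prior $\mathcal N(0,\sigma^2\mathbb I)$, posterior $\mathcal N(\vec w,\sigma^2\mathbb I)$, and a net over $\beta_{\rm PM}$.

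You are right to flag trace preservation. The paper passes over it by treating $\rho_j^{\vec w+\vec u}-\rho_j^{\vec w}$ as a difference of density matrices. However, the constraint you need is the full linear condition $M_U:=\sum_{A,B}U(A,B)\sigma_B\sigma_A=0$, the analogue of Observation~\ref{observation:trace_W}. The reason is that $\mathrm{Tr}\bigl(\sum_{A,B}U(A,B)\sigma_A\rho\sigma_B\bigr)=\mathrm{Tr}(M_U\rho)$ must vanish for every state, and $\sum_A U(A,A)=0$ is only the identity component of that condition. Restricting prior and posterior to this subspace is harmless, since $\vec w$ lies in it: the KL stays $\norm{\vec w}_2^2/2\sigma^2$ and the $\ell_1$ tail bound is unchanged. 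By contrast, carrying the trace as a correction is not obviously lower order. The twirl then leaves $\frac{\mathbb I}{2^n}\mathrm{Tr}X$ at every later layer, giving terms with factor $1$ in place of $\norm{W_\ell}_{1,1}$.

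The genuine problem is the bookkeeping in Steps~2--3.

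\textbf{The sharpening is unnecessary.} With $\sigma\approx\gamma/(\beta_{\rm PM}\xi_{\max})$ the KL is $\approx\beta_{\rm PM}^2\xi_{\max}^2\sum_j\norm{W_j}_F^2/\gamma^2$; your display drops a factor $\xi_{\max}$. More precisely, your own Cauchy--Schwarz estimate gives $\max_j\norm{U_j}_{1,1}^2\lesssim\sigma^2\xi_{\max}(\xi_{\max}+\ln L)\asymp\sigma^2\xi_{\max}\ln\bigl(\sum_j2^{\xi_j}\bigr)$, because $\ln\sum_j2^{\xi_j}\ge\max\{\xi_{\max}\ln 2,\ln L\}$. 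That is already exactly the stated complexity term. It is also what the paper obtains from the sign-vector Chernoff bound $\mathbb P(\norm{U_j}_{1,1}\ge t)\le 2^{\xi_j}e^{-t^2/(2\xi_j\sigma^2)}$ plus a union over layers, giving $t=\sigma\sqrt{2\xi_{\max}\ln(2\sum_j2^{\xi_j})}$. So the factor $\ln\sum_j2^{\xi_j}$ comes from the tail bound, not from counting supports.

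\textbf{The sharpening is also impossible.} Since $\mathbb E\norm{U_j}_{1,1}=\sigma\xi_j\sqrt{2/\pi}$, any threshold that holds with probability $1/2$ is $\Omega(\sigma\xi_{\max})$.

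\textbf{Counting supports does not give the stated factor.} A union over data-dependent sparsity patterns adds $\sum_j\ln\binom{16^n}{\xi_j}\approx\sum_j\xi_j\ln(16^n e/\xi_j)$ next to the KL. This is a parameter-counting term that does not vanish as $\norm{W_j}_F\to0$, and it is not what the $\widetilde{\mathcal O}$ hides. The paper avoids it by perturbing a fixed set of $\xi_j$ coordinates per layer. Its only union is then over the $\tilde\beta$-net, whose range you still have to supply:
\begin{itemize}
\item $\beta_{\rm PM}\ge1$, from the empty product at $j=L$;
\item $\beta_{\rm PM}\le\min\{\gamma\sqrt N/(\sum_j\norm{W_j}_F^2)^{1/2},M(n,L,\xi_{\max})\}$, from vacuity of the bound or the CPTP estimate $\norm{W}_{1,1}\le(2\xi^2+\xi-2)/4^n$.
\end{itemize}

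\textbf{A gap shared with the paper.} Neither you nor the paper checks that the chosen $\sigma$ makes $\norm{U_\ell}_{1,1}\le\norm{W_\ell}_{1,1}/L$ hold on the good event. Since $\max_j\norm{U_j}_{1,1}$ can be as large as $\approx\gamma/(4e\beta_{\rm PM})$, this can fail whenever some $\norm{W_\ell}_{1,1}<L\gamma/(4e\beta_{\rm PM})$. Without ReLU-style rebalancing, the factor $e$ in Step~1 then needs a separate argument.
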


\begin{theorem}[PAC-Bayes generalization bound for quantum models -- PTM framework (Informal)]\label{th:pac-bayes-PTM}
      For any $L > 0$, let $f_{\vec{w}}^{\mathrm{PTM}}: \mathcal{X} \to \mathbb{R}^K$ be an $L$-layer quantum model represented via PTM framework $\phi_j = \phi_{\mathrm{PTM}}^{W_j}$ with $\xi_j\coloneqq \norm{W_j}_0$, $\xi_{\max}=\max_j \xi_j$, and $\beta_{\rm PTM} = \sqrt{d_\mathrm{out}^{(L)}}\sum_{j=1}^L\frac{1}{\sqrt{d_\mathrm{in}^{(j)}}}\prod_{\ell=j+1}^L \lVert W_\ell\rVert_{1,1}$. Then, for any $\gamma > 0$, with high probability over the training set of size $N$, for any  vector of learned parameters $\vec{w}=\mathrm{vec}(\{W_j\}_{j=1}^L)$, we have 
    \begin{equation}\label{eq:formal_PM}
        L_0(f_{\vec{w}}^{\mathrm{PTM}}) \leq \hat{L}_{\gamma}(f_{\vec{w}}^{\mathrm{PTM}})+ \widetilde{\mathcal{O}}\left(\sqrt{\frac{\beta_{\rm PTM}^2\:\xi_{\max} \ln\left(\sum_{j=1}^L2^{\xi_j}\right)\:\sum_{j=1}^L\norm{W_j}_F^2}{\gamma^2 N}} \right)\,.
    \end{equation}
\end{theorem}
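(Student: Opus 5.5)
The plan is to follow the perturbation approach of Ref.~\cite{perturbation_bound} and feed it into Lemma~\ref{lemma:pac-bayes_margin}. This requires three ingredients: a perturbation bound for the PTM model, a Gaussian prior/posterior pair supported on the sparsity pattern, and a union bound over the discrete choices of prior. I would first write each layer as $\phi_{\mathrm{PTM}}^{W_j} = \phi_{\mathrm{DEP}} + \Delta_{W_j}$, where
\begin{equation}
\Delta_{W_j}(\rho)=\frac{1}{\sqrt{d^{(j)}_{\mathrm{in}}d^{(j)}_{\mathrm{out}}}}\sum_{A,B}W_j(A,B)\,\mathrm{Tr}(\sigma_B\rho)\,\sigma_A .
\end{equation}
This map is linear in $W_j$, so a perturbation $W_j\to W_j+U_j$ adds exactly $\Delta_{U_j}$.

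\textbf{Perturbation step.} Telescope the composition, replacing one layer at a time:
\begin{equation}
\rho'_{\mathrm{out}}-\rho_{\mathrm{out}}=\sum_{j=1}^L \Phi'_{>j}\circ\Delta_{U_j}\circ\Phi_{<j}(\rho).
\end{equation}
Here $\Phi_{<j}$ is the product of the unperturbed layers before $j$ and $\Phi'_{>j}$ the product of the perturbed layers after $j$. Only traceless operators need tracking, since $\Delta_{U_j}$ outputs a traceless operator when $U_j(0,B)=0$; otherwise we bound it anyway.

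\begin{itemize}
\item On a traceless input, the depolarizing part of each later layer vanishes, so only $\Delta_{W_\ell+U_\ell}$ acts.
\item Measure traceless operators by their vector of Pauli coefficients in $\ell_\infty$. Then $\Delta_W$ maps a coefficient vector $c$ to $Wc/\sqrt{d_{\mathrm{in}}d_{\mathrm{out}}}$ up to normalization, giving a gain bounded by $\|W\|_{1,1}$ (indeed by $\max_A\sum_B|W(A,B)|$).
\item Bound $\Delta_{U_j}$ acting on a state by $\|U_j\|_{1,1}/\sqrt{d^{(j)}_{\mathrm{in}}}$, using $|\mathrm{Tr}(\sigma_B\rho)|\le 1$.
\item Convert the final coefficients to outcome probabilities $\mathrm{Tr}(\Pi_k\cdot)$, which costs a factor tied to $\sqrt{d^{(L)}_{\mathrm{out}}}$.
\end{itemize}

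Tracking the normalizations carefully should give
\begin{equation}
\|f_{\vec w+\vec u}-f_{\vec w}\|_\infty\lesssim \sqrt{d^{(L)}_{\mathrm{out}}}\sum_j \frac{\|U_j\|_{1,1}}{\sqrt{d^{(j)}_{\mathrm{in}}}}\prod_{\ell>j}\bigl(\|W_\ell\|_{1,1}+\|U_\ell\|_{1,1}\bigr).
\end{equation}
If $\|U_\ell\|_{1,1}\le\|W_\ell\|_{1,1}/L$, the factor $(1+1/L)^L\le e$ absorbs the perturbation, giving a bound of order $e\,\beta_{\mathrm{PTM}}\max_j\|U_j\|_{1,1}$. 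Getting these dimension factors right, and handling the perturbed rather than unperturbed later layers, is the main technical obstacle.

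\textbf{PAC-Bayes step.} Restrict the perturbation $U_j$ to the support $S_j$ of $W_j$, with $|S_j|=\xi_j$ entries, each i.i.d.\ $\mathcal N(0,\sigma^2)$. Then $\|U_j\|_{1,1}$ is a sum of $\xi_j$ half-normals. A Gaussian tail bound plus a union bound over the entries gives, with probability at least $1/2$,
\begin{equation}
\|U_j\|_{1,1}\le \xi_j\sigma\sqrt{2\ln(4\textstyle\sum_j\xi_j)}\quad\text{for all } j.
\end{equation}
A sharper alternative uses Cauchy--Schwarz with $\sqrt{\xi_j}\|U_j\|_F$ and $\|U_j\|_F\lesssim\sigma\sqrt{\xi_j}$; I would aim for whichever reproduces the stated $\xi_{\max}$ dependence. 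Choose $\sigma\approx\gamma/(c\,\beta_{\mathrm{PTM}}\sqrt{\xi_{\max}\ln(\cdot)})$ so that the output change stays below $\gamma/4$.

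The prior is $P=\mathcal N(0,\sigma^2)$ on the same supports. This gives $D_{\mathrm{KL}}=\sum_j\|W_j\|_F^2/(2\sigma^2)$, which yields the claimed
\begin{equation}
\frac{\beta_{\mathrm{PTM}}^2\,\xi_{\max}\ln(\cdot)\sum_j\|W_j\|_F^2}{\gamma^2}.
\end{equation}

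\textbf{Union bound step.} The prior must be data-independent, but it depends on the supports $S_j$ and on $\sigma$, hence on $\beta_{\mathrm{PTM}}$. To fix this:
\begin{itemize}
\item Union-bound over all sparsity patterns, which contributes $\ln\prod_j\binom{\cdot}{\xi_j}$. Weighting the patterns appropriately yields the $\ln(\sum_j 2^{\xi_j})$-type factor inside the complexity term.
\item Handle $\beta_{\mathrm{PTM}}$ with a geometric grid of candidate values, as in Ref.~\cite{perturbation_bound}. The grid is finite because $\|W_\ell\|_{1,1}$ is bounded by complete positivity of the channel, and outside a trivial range the bound is vacuous.
\item Verify that the condition $\|U_\ell\|_{1,1}\le\|W_\ell\|_{1,1}/L$ holds for the chosen $\sigma$; otherwise the bound is already trivial.
\end{itemize}
The grid and logarithmic terms are absorbed into $\widetilde{\mathcal O}$, and applying Lemma~\ref{lemma:pac-bayes_margin} concludes the proof.
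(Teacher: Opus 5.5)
Your skeleton matches the paper's proof. It has a layerwise perturbation bound $\|f_{\vec w+\vec u}-f_{\vec w}\|_\infty\le e\,\beta_{\rm PTM}\max_j\|U_j\|_{1,1}$, a Gaussian prior $\mathcal N(0,\sigma^2\mathbb I)$ and posterior $\mathcal N(\vec w,\sigma^2\mathbb I)$ with $\sigma$ tuned to the $\gamma/4$ margin condition, and $D_{\mathrm{KL}}=\sum_j\|W_j\|_F^2/(2\sigma^2)$. It closes with a union bound over a grid of $\tilde\beta_{\rm PTM}$ with $|\beta_{\rm PTM}-\tilde\beta_{\rm PTM}|\le\beta_{\rm PTM}/L$. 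The paper bounds that grid below by the $j=L$ term ($\beta_{\rm PTM}\ge\sqrt{d_{\rm out}^{(L)}/d_{\rm in}^{(L)}}$), and above by the CP/TP-based bounds on $\|W_\ell\|_{1,1}$ together with the triviality threshold. Your hybrid telescoping in normalized Pauli coefficients is a valid substitute for the paper's trace-norm recursion (its template lemma with $\mu_j=\nu_j=\sqrt{d_{\rm out}^{(j)}/d_{\rm in}^{(j)}}$) and gives the same $\beta_{\rm PTM}$. It does so provided you use that only the $d_{\rm out}^{(L)}$ diagonal Paulis have $\langle k|\sigma_A|k\rangle\neq 0$; that is what keeps the final conversion factor at $\sqrt{d_{\rm out}^{(L)}}$.

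The step that fails is the union bound over sparsity patterns. Its cost is $\sum_j\ln\binom{D_j}{\xi_j}$, with $D_j=(d_{\rm in}^{(j)}d_{\rm out}^{(j)})^2$ the number of PTM entries. This enters additively next to $\ln(1/\delta)$ (equivalently, as $\ln(1/\pi_S)$ added to the KL under a mixture prior). It is never multiplied by $\beta_{\rm PTM}^2\sum_j\|W_j\|_F^2/\gamma^2$, so no weighting of patterns produces the factor $\ln(\sum_j 2^{\xi_j})$ in the complexity term.

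In the paper that factor comes from the noise-magnitude step. Write $\|U_j\|_{1,1}=\max_{s\in\{\pm1\}^{\xi_j}}\langle s,\mathrm{vec}(U_j)\rangle$, where each inner product is $\mathcal N(0,\xi_j\sigma^2)$. Then $\mathbb P(\|U_j\|_{1,1}\ge t)\le 2^{\xi_j}e^{-t^2/(2\xi_j\sigma^2)}$, and a union over layers gives $t=\sigma\sqrt{2\xi_{\max}\ln(2\sum_j 2^{\xi_j})}$. Your Cauchy--Schwarz variant (concentration of the Gaussian norm $\|U_j\|_F$ plus a union over the $L$ layers) gives $t^2\lesssim\sigma^2\xi_{\max}(\xi_{\max}+\ln L)$, which matches this up to constants. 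Your entrywise variant loses up to a factor $\ln(L\xi_{\max})$.

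If you keep the union over supports, you prove something weaker than the statement. It carries an extra term of order $\sqrt{\sum_j\xi_j\ln(d_{\rm in}^{(j)}d_{\rm out}^{(j)})/N}$. This is a parameter-count term, not one of the logarithmic terms hidden by $\widetilde{\mathcal O}$. It dominates exactly in the near-depolarizing, small-$\|W_j\|_F$ regime the theorem is meant to capture. The paper does no union over supports: it perturbs a fixed set of $\xi_j$ entries per layer and covers only $\tilde\beta_{\rm PTM}$. You are right that if $\xi_j=\|W_j\|_0$ is read as data-dependent, the support is left unaccounted for. The remedy is either to take the supports as fixed by the architecture, as the paper implicitly does, or to pay the additive term above. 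It is not the displayed form.

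Separately, your remark that $\|U_\ell\|_{1,1}\le\|W_\ell\|_{1,1}/L$ holds ``or the bound is already trivial'' needs its own argument. Unlike the ReLU setting, there is no layer rebalancing here. $\beta_{\rm PTM}$ does not grow when a single $\|W_\ell\|_{1,1}$ is small, and a nearly constant model need not have $\hat L_\gamma=1$. The paper does not verify this hypothesis either, so this is not where you diverge from it.
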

\begin{proof}[Proof sketch]
    The generalization bounds in \Cref{th:pac-bayes-PM,th:pac-bayes-PTM} are derived via the PAC-Bayesian framework in Lemma~\ref{lemma:pac-bayes_margin}, using a Gaussian prior $P=\mathcal{N}(0,\sigma^2\mathbb{I})$ and a posterior $Q = \mathcal{N}(\vec{w},\sigma^2\mathbb{I})$ that introduces a random perturbation around the trained parameters $\vec{w}$. 
    To apply \Cref{lemma:pac-bayes_margin}, we need to show that the perturbed model output deviates from the original by at most $\gamma/4$ with high probability. \Cref{lemma:pert_bound_PM,lemma:pert_bound_PTM} in \Cref{sec:methods} provide sensitivity bounds for such perturbations, expressed in terms of $\norm{W_j}_{1,1}$ and the norms of the perturbation matrices. As the perturbations are Gaussian, their norms can, with high probability, be controlled using standard tail bounds for random matrices. This allows us to determine a suitable variance $\sigma^2$ to ensure that the output deviation remains within the required margin.

     With the perturbation scale $\sigma$ determined, we compute the KL divergence between the posterior and prior. A technical challenge arises from dependence of $\sigma$ on the learned parameters, which means that $P$ is not a data-independent prior as required in \Cref{lemma:pac-bayes_margin}. To resolve this, we use a covering net argument over discretized values of $\norm{W_j}_{1,1}$; controlling the covering net size is made feasible by structural constraints on $W_j$ that arise from \emph{complete positivity} and \emph{trace preservation} of the associated quantum channels. A union bound over this net ensures that the PAC-Bayes bound holds for any set of learned parameters while using data-independent priors. Substituting the KL bound into  Lemma~\ref{lemma:pac-bayes_margin}  and accounting for the union bound from the covering net completes the proof. Full proof details are provided in Appendix~\ref{a:generalization-bounds}.
\end{proof}

\Cref{th:pac-bayes-PM,th:pac-bayes-PTM} established PAC-Bayes generalization bounds for general quantum models in two different representations, linking the true risk to the empirical margin loss and a non-uniform complexity measure of the model after training.
To better interpret the operational meaning of these bounds, the terms can be understood as follows:
\begin{itemize}
    \item The $\beta_{\text{P(T)M}}$ term effectively measures how much a local parameter perturbation in a given layer is amplified as it propagates through the subsequent layers.
    \item The Frobenius norm term $\sum_{j=1}^L\norm{W_j}_F^2$ acts a geometric distance evaluating how far the learned model is from the maximally depolarizing channel. Operationally, it measures how much the optimization algorithm had to pull the model away from a completely uninformative, constant-valued physical baseline in order to fit the training data.
    \item The sparsity term $\xi_{\max} \ln\left(\sum_{j=1}^L2^{\xi_j}\right)$ represents the number of active parameters across layers, controlling the effective volume of the hypothesis space. It ensures that the random perturbations remain bounded within the required margin.
\end{itemize}
Crucially, this highlights the difference between our results and uniform bounds: these terms do not merely depend on general static features of the model (e.g., the total number of trainable gates or encoding operations), but actually on where the model ends up \emph{after} training. This data-dependent complexity is captured explicitly through the learned parameter norms, linking the generalization gap directly to the specific solution found by the optimization algorithm. 
\Cref{th:pac-bayes-PM,th:pac-bayes-PTM} are phrased for quantum models in which an initial data encoding is followed by trainable layers.
However, these results immediately extend to data re-uploading quantum models~\cite{perez2020data}, as we show next.

\begin{corollary}[PAC-Bayes generalization bounds for data re-uploading quantum models]\label{cor:reuploading} 
The PAC-Bayes generalization bounds established in \Cref{th:pac-bayes-PM,th:pac-bayes-PTM} also apply to data re-uploading quantum models of the form
\begin{equation}
   \rho_{\mathrm{\mathrm{out}}}(\vec{x}) = \left(\phi_L \circ \mathcal{E}_L^{\vec{x}} \circ \phi_{L-1} \circ \mathcal{E}_{L-1}^{\vec{x}} \circ \cdots \circ \phi_1 \circ \mathcal{E}_1^{\vec{x}}\right)(\rho_0)\,,
\end{equation}
where each $\phi_j=\phi_{\mathrm{P(T)M}}^{W_j}$ is a parameter-dependent quantum channel, and each $\mathcal{E}_j^{\vec{x}}$ is an input-dependent quantum channel implementing the data re-uploading step.
\end{corollary}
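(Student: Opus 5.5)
The plan is to reduce the re-uploading model to the setting of \Cref{th:pac-bayes-PM,th:pac-bayes-PTM} by inspecting exactly where the input enters their proofs. The proof sketch makes the input $\vec{x}$ appear in only two places. First, in the perturbation bounds of \Cref{lemma:pert_bound_PM,lemma:pert_bound_PTM}, which must hold uniformly over $\vec{x}\in\calX$, because \Cref{lemma:pac-bayes_margin} requires $\max_{\vec{x}}\norm{f_{\vec{w}+\vec{u}}(\vec{x})-f_{\vec{w}}(\vec{x})}_\infty<\gamma/4$. Second, in the empirical margin loss, which is unaffected by the model's internal structure. The prior $P=\mathcal{N}(0,\sigma^2\mathbb{I})$, the posterior, the KL computation and the covering net over $\norm{W_j}_{1,1}$ all depend only on the trainable matrices $\{W_j\}$. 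None of them refers to the encoding. It therefore suffices to show that the perturbation bounds survive, with the same constants, when the fixed input-dependent channels $\mathcal{E}_j^{\vec{x}}$ are interleaved with the layers.

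\textbf{Step 1: rewrite the perturbation difference as a telescoping sum.} Write $\Phi_j^{\vec{x}}:=\phi_j\circ\mathcal{E}_j^{\vec{x}}$, and let $\tilde\phi_j$ denote the layer with perturbed matrix $W_j+U_j$. Then
\begin{equation}
\tilde\Phi_L\circ\cdots\circ\tilde\Phi_1-\Phi_L\circ\cdots\circ\Phi_1=\sum_{j=1}^L \tilde\Phi_L\circ\cdots\circ\tilde\Phi_{j+1}\circ(\tilde\phi_j-\phi_j)\circ\mathcal{E}_j^{\vec{x}}\circ\Phi_{j-1}\circ\cdots\circ\Phi_1 .
\end{equation}
The rightmost factor, $\mathcal{E}_j^{\vec{x}}\circ\Phi_{j-1}\circ\cdots\circ\Phi_1(\rho_0)$, is a valid density operator, since it is a composition of channels applied to $\rho_0$. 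In the original proof the analogous factor is $\phi_{j-1}\circ\cdots\circ\phi_1(\rho(\vec{x}))$, also a density operator, and the lemma only uses that it is a state of trace norm one. So the factor $(\tilde\phi_j-\phi_j)$ acting on it is bounded exactly as before, in terms of the entries of $U_j$.

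\textbf{Step 2: check the amplification factors.} The subsequent maps $\tilde\Phi_\ell=\tilde\phi_\ell\circ\mathcal{E}_\ell^{\vec{x}}$ must contribute at most the factor $\norm{W_\ell+U_\ell}_{1,1}$ or its analogue, as in $\beta_{\rm P(T)M}$. I expect this to be the main obstacle. It depends on how the original lemma bounds propagation, which is presumably the PM/PTM expansion of $\phi_\ell$ together with the norm used on the traceless part of the propagated operator.

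The key fact I would use is that $\mathcal{E}_\ell^{\vec{x}}$ is CPTP. It is therefore contractive in trace norm, maps traceless operators to traceless operators, and so preserves the structure the lemma relies on. If the lemma tracks a Pauli-coefficient norm instead of the trace norm, I would switch to bounding with the trace norm (or Schatten-1 to Schatten-$\infty$ conversions). In that case the contraction of $\mathcal{E}_\ell^{\vec{x}}$ is immediate, and the $d$-dependent prefactors still match the stated $\beta_{\rm PTM}$ because $\mathcal{E}_\ell^{\vec{x}}$ maps $\calH_{\mathrm{out}}^{(\ell-1)}$ to $\calH_{\mathrm{in}}^{(\ell)}$. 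The perturbation deviation is then bounded by the same expression as in the lemmas, uniformly in $\vec{x}$, because no property of $\mathcal{E}_j^{\vec{x}}$ beyond being CPTP was used.

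\textbf{Step 3: conclude.} The remaining steps carry over verbatim: the choice of $\sigma$ from Gaussian tail bounds, the KL divergence, the covering net built from CP/TP constraints on $W_j$ alone, and the union bound. Hence the bounds of \Cref{th:pac-bayes-PM,th:pac-bayes-PTM} hold unchanged.
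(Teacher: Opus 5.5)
Your proposal is correct and is essentially the paper's argument. The paper redefines the intermediate states to include the encoding channels and bounds $\Delta_{j+1}$ recursively, using that $\mathcal{E}_{j+1}^{\vec{x}}$ is trace-norm contractive (and sends differences of states to differences of states). That recursion unrolls to exactly the bound your telescoping sum gives, and the rest of the PAC-Bayes argument is then unchanged. One correction to your Step~2: the PTM prefactors collapse to those in $\beta_{\rm PTM}$ only if each $\mathcal{E}_\ell^{\vec{x}}$ preserves dimension, i.e.\ $d_{\mathrm{out}}^{(\ell-1)}=d_{\mathrm{in}}^{(\ell)}$, not merely because it maps $\mathcal{H}_{\mathrm{out}}^{(\ell-1)}$ into $\mathcal{H}_{\mathrm{in}}^{(\ell)}$; the paper also makes this assumption tacitly.
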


Full proof details can be found in Appendix~\ref{a:data-re-uploading}. Allowing the data re-uploading steps as well as the trainable layers to be arbitrary (potentially non-unitary) quantum channels, the class of quantum models covered by \Cref{cor:reuploading} now captures all models commonly studied in variational quantum machine learning.

To illustrate how our abstract mathematical framework applies to concrete QML architectures, we analyze two representative examples: \emph{quantum convolutional neural networks} (QCNNs)~\cite{cong2019quantum} and \emph{dynamic PQCs}~\cite{deshpande2024dynamicparameterizedquantumcircuits}. We specifically focus on these models because they naturally incorporate the non-unitary operations (such as mid-circuit measurements, feedforward, and partial traces) that our bounds are uniquely designed to handle. Furthermore, these architectures will serve as the foundation for our numerical experiments in~\Cref{s:numerics}, providing a direct link between our theoretical guarantees and empirically observed performance.

\begin{example}[Application to dynamic PQCs.]
Dynamic PQCs represent a promising class of models that utilize mid-circuit measurements and feedforward to enhance expressivity and mitigate barren plateaus \cite{deshpande2024dynamicparameterizedquantumcircuits}. We consider a dynamic PQC architecture (which we later evaluate numerically) with a single effective layer, used here for the sake of illustration and to match the setting of our numerical experiments. This effective layer consists of two consecutive dynamic operations per qubit, followed by a global unitary. Focusing on a single layer is sufficient to capture the key mechanism of interest, namely how local dissipative operations can drive each qubit channel toward a depolarizing regime, and the same reasoning extends straightforwardly to deeper circuits composed of multiple such layers. Since the dynamic operations act locally, the overall channel factorizes across qubits, implying that the Frobenius norm factorizes as $\|\chi\|_F^2=\prod_{i=1}^n \|\chi_i\|_F^2$. Moreover, $\|W\|_F^2=\|\chi\|_F^2-1/4^n$, where $\|\chi\|_F^2$ equals the purity of the corresponding normalized Choi state. This structure allows for an efficient evaluation of the norm, as it only requires computing single-qubit process matrices, with a cost linear in $  n  $. As detailed in \Cref{a:pac-bayes_PM}, with proper parameter choices, each effective single-qubit channel could implement a maximally depolarizing channel, minimizing the local purity to $\tr(\chi_i^2)=1/4$. In this regime, $\|\chi\|_F^2=(1/4)^n$, yielding $\|W\|_F^2=0$ and thus providing an explicit physically implementable example of a vanishing main term in our bound. Moreover, this configuration provides a natural reference point, since small deviations from the maximally depolarizing channel induce controlled perturbations of the main term.
\end{example}

\begin{example}[Application to QCNNs.]
    QCNNs are a widely studied architecture for quantum learning tasks, naturally featuring non-unitary pooling operations \cite{cong2019quantum}. We consider a QCNN in which the number of qubits is halved at each layer, $n_{\ell+1}=n_\ell/2$, up to a maximum depth $L=\log_2 n$ (taking $n$ a power of two). To derive concrete scaling behaviors for our bounds (see~\Cref{a:pac_bayes-PTM}), we introduce a modeling assumption for the parameter sparsity: we assume a sparsity model $\xi_\ell=\xi_L\,2^{\alpha(n_\ell-n_L)}$ which starts from a fixed sparsity $\xi_L$ at the final layer and increases exponentially as qubits are added toward earlier layers ($l\xrightarrow{}1$), where $\alpha>0$ is a constant. For the standard QCNN block---consisting of a global unitary acting on $n_\ell$ qubits followed by a partial trace that leaves $n_{\ell+1}$ qubits---the Frobenius norm satisfies $\|W_\ell\|_F^2=d_{\rm in}^{(\ell)}\,d_{\rm in}^{(\ell)}-d_{\rm in}^{(\ell)}/d_{\rm out}^{(\ell)}=2^{n_\ell/2}(2^{n_\ell}-1)$.
    Importantly, this quantity is independent of the learned parameters, although $\beta_{\text{PTM}}$ is not. By contrast, as in the dynamic PQCs discussed earlier, introducing dissipative blocks at each layer renders $\|W_\ell\|_F^2$ fully parameter-dependent throughout the circuit. In this hybrid architecture (which we also employ in our upcoming experiments), the Frobenius norm can in principle be driven arbitrarily close to zero, thereby enabling the small-norm regime and yielding tighter non-uniform generalization guarantees.
\end{example}

To contextualize our results, we can also compare our non-uniform PAC-Bayesian guarantees to the uniform generalization bounds typically found in the literature, such as those established in Refs.~\cite{Du2022,caro2022generalization}. The defining strength of our bounds is that they depend explicitly on the specific post-trained parameters of the model, rather than just the model's overall capacity. However, to facilitate a direct theoretical comparison, we can consider a worst-case scenario over these post-training parameters to derive a uniform version of our worst-case bounds. As detailed in~\Cref{a:generalization-bounds}, for models evaluated under the PM formalism, there is a distinctly favorable regime where this worst-case PAC-Bayesian bound analytically outperforms the standard uniform bound. While we were unable to identify a similarly favorable regime for the PTM formalism under the considered sparsity and dimensionality assumptions, it should be noted that this analysis strictly evaluates a conservative, uniform relaxation of our non-uniform bounds. The main advantage of our PAC-Bayesian approach emerges when the explicit parameter dependence is preserved, allowing us to circumvent the pessimistic bottlenecks inherent to parameter-independent uniform bounds.

Having established this general framework, we now proceed to specialize our analysis and derive a refined bound tailored to equivariant quantum models.

\subsection{Extension to equivariant quantum models}\label{s:equivariant}

While the general bounds established in~\Cref{th:pac-bayes-PM,th:pac-bayes-PTM} provide a robust and universally applicable foundation for any quantum channel, our framework's flexibility further allows for the direct incorporation of architectural inductive biases. A prominent example is equivariance, where quantum models are built to respect inherent symmetries in the data or task, often leading to improved parameter efficiency and generalization~\cite{ragone2022representation,theory_equivariant,Schatzki,tuysuz2024symmetry,PRXQuantum.4.010328}. To establish a more nuanced understanding of their generalization properties and potentially derive a tighter bound, we specialize our PAC-Bayesian approach to leverage the specific structure of these equivariant channels. Analogous to the general case, we analytically derive a margin-based bound, but here we express it in terms of the specific properties arising from the equivariant parameterization in the irrep basis. As established in \cref{lemma:pac-bayes_margin}, this requires bounding the change in the model's output under perturbations to its parameters. By quantifying how the output changes under small perturbations to its block-diagonal equivariant representation---utilizing specific properties of the irreducible representations (see \Cref{ss:perturbation} and Appendix~\ref{a:perturbation-bounds})---we establish a PAC-Bayes generalization bound tailored to the symmetry of the model. 

\begin{theorem}[PAC-Bayes generalization bound for equivariant quantum models (Informal)]\label{th:pac-bayes-equivariant}
For any depth $L > 0$, let $f_{\vec{w}}^{\mathrm{eq}}: \mathcal{X} \to \mathbb{R}^K$ be an $L$-layer equivariant quantum model parameterized by $w = \{W_{j,\lambda}\}_{j,\lambda}$. Let $\Xi_j = \sum_\lambda m_{j,\lambda}^2$ be the number of equivariant parameters per layer, where $\Xi_{\max} = \max_j \Xi_j$, $\beta_{\mathrm{eq}} = \sum_{j=1}^L \left( \prod_{\ell=j+1}^L \|W_l\|_{\mathrm{eq},1} \right)$ and let $d_{\max} = \max_\lambda d_\lambda$ be the maximum dimension of the irreducible representations.
Then, for any margin $\gamma > 0$, with high probability over the training set of size $N$, for any  vector of learned parameters $\vec{w}=\mathrm{vec}(\{W_j\}_{j=1}^L)$, we have
\begin{equation}
    L_0(f_{\vec{w}}^{\mathrm{eq}}) \leq \hat{L}_{\gamma}(f_{\vec{w}}^{\mathrm{eq}}) + \widetilde{\mathcal{O}}\left(\sqrt{\frac{d_{\max}^2 \, \Xi_{\max} \, \ln(\sum^L_{j=1} 2^{\Xi_j}) \, \beta_{\mathrm{eq}}^2 \sum_{j=1}^L \|W_j\|_{\mathrm{eq},F}^2}{\gamma^2 N}} \right)\,,
\end{equation}
where the layer-wise equivariant norms 
are defined in terms of the irrep blocks $W_{j,\lambda}$ as
\begin{equation}
    \|W_j\|_{\mathrm{eq},1} = \sum_{\lambda} d_\lambda \|W_{j,\lambda}\|_1 \quad \text{and} \quad \|W_j\|_{\mathrm{eq},F}^2 = \sum_{\lambda} d_\lambda \|W_{j,\lambda}\|_F^2 \,.
\end{equation}
\end{theorem}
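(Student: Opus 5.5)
We follow the template of the proof sketch for \Cref{th:pac-bayes-PM,th:pac-bayes-PTM}, adapted to the block-diagonal parameterization in \eqref{eq:equivariant_param}. The plan has four steps: a perturbation bound in the irrep basis, a Gaussian posterior and variance choice, the KL computation, and a covering argument that makes the prior data-independent.

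\textbf{Step 1: perturbation bound.} We first prove an equivariant analogue of \Cref{lemma:pert_bound_PM}. Perturb each block $W_{j,\lambda}\mapsto W_{j,\lambda}+U_{j,\lambda}$, and write $\phi_j$ and $\phi_j'$ for the unperturbed and perturbed channels. The telescoping decomposition is
\begin{equation}
\phi_L'\circ\cdots\circ\phi_1'-\phi_L\circ\cdots\circ\phi_1=\sum_{j=1}^L \phi_L'\circ\cdots\circ\phi'_{j+1}\circ(\phi_j'-\phi_j)\circ\phi_{j-1}\circ\cdots\circ\phi_1 .
\end{equation}
Each later layer acts on traceless operators only through its deviation $W_\ell$, since the depolarizing baseline annihilates traceless inputs. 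Its induced $1\to1$ norm on such operators is therefore bounded by the trace norm of the deviation part of its Choi operator. That trace norm is $\sum_\lambda d_\lambda\|W_{\ell,\lambda}\|_1=\|W_\ell\|_{\mathrm{eq},1}$, because the identity factor $\mathbb{I}_{d_\lambda}$ multiplies trace norms by $d_\lambda$. The difference $(\phi_j'-\phi_j)$ applied to a state has trace norm at most $\sum_\lambda d_\lambda\|U_{j,\lambda}\|_1$. Bounding trace norms by Frobenius norms in the posterior step below is where the factor $d_{\max}$ enters. Summing over $j$ gives an output deviation of order
\begin{equation}
\sum_j\Big(\prod_{\ell>j}\|W_\ell+U_\ell\|_{\mathrm{eq},1}\Big)\|U_j\|_{\mathrm{eq},1}.
\end{equation}
The perturbed products are controlled as in the classical argument: when $\|U_\ell\|_{\mathrm{eq},1}\le\|W_\ell\|_{\mathrm{eq},1}/L$, the product inflates by at most $(1+1/L)^L\le e$.

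\textbf{Step 2: posterior and choice of $\sigma$.} Take prior $\mathcal N(0,\sigma^2\mathbb I)$ and posterior $\mathcal N(\vec w,\sigma^2\mathbb I)$ on the $\sum_j\Xi_j$ free real parameters. Only entries of $W_{j,\lambda}$ are perturbed, so equivariance is preserved automatically. For the Gaussian blocks, a standard bound gives $\|U_{j,\lambda}\|_1\le\sqrt{m_\lambda}\,\|U_{j,\lambda}\|_F$. Gaussian tail bounds with a union bound over layers then give, with probability at least $1/2$,
\begin{equation}
\|U_j\|_{\mathrm{eq},1}\lesssim d_{\max}\,\sigma\sqrt{\Xi_{\max}\ln\Big(\sum_j 2^{\Xi_j}\Big)}.
\end{equation}
Here the logarithmic term plays the role of $\ln(\sum_j2^{\xi_j})$ in the earlier theorems, arising from a net or union bound over layers. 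Requiring the output deviation to be below $\gamma/4$ fixes
\begin{equation}
\sigma\asymp\frac{\gamma}{d_{\max}\,\beta_{\mathrm{eq}}\sqrt{\Xi_{\max}\ln\big(\sum_j2^{\Xi_j}\big)}} .
\end{equation}

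\textbf{Step 3: KL term.} The KL divergence is $\|\vec w\|^2/(2\sigma^2)$, where $\|\vec w\|^2$ is the squared Euclidean norm of the free parameters. To reach the equivariant Frobenius norm, we choose the prior variance per block to scale as $\sigma^2/d_\lambda$, or equivalently parameterize by $\sqrt{d_\lambda}W_{j,\lambda}$. This weighting makes the KL term proportional to $\sum_j\|W_j\|_{\mathrm{eq},F}^2$. Substituting into \Cref{lemma:pac-bayes_margin} then yields the stated complexity term.

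\textbf{Step 4: data-independent prior.} $\sigma$ depends on the learned $\beta_{\mathrm{eq}}$, so we need a covering argument. We discretize the values of $\|W_j\|_{\mathrm{eq},1}$ over a finite grid and take a union bound. Complete positivity and trace preservation bound these norms a priori by dimension-dependent constants, so the grid is finite with logarithmic size. The resulting extra log term is the suppressed $\widetilde{\mathcal O}$ part.

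The main obstacle is Step 1. We must verify that the later channels contract traceless operators by the equivariant trace norm of $W_\ell$ alone, with the baseline $J_{0,\lambda}$ contributing nothing. We also need to track precisely how the $d_\lambda$ weights combine, so that a single factor $d_{\max}^2$ appears rather than a product of such factors over layers. If the naive induced-norm estimate fails, the fallback is to bound the action through Choi-matrix Hölder inequalities computed block by block in the isotypic basis.
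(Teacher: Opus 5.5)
Your proposal is correct and follows the paper's proof essentially step for step. The shared steps are: the block-trace-norm perturbation lemma (the paper invokes $\|\phi_V\|_\diamond\le\|J_V\|_1=\|V\|_{\mathrm{eq},1}$ inside its recursive template with the $(1+1/L)^L\le e$ trick); a Gaussian tail bound where $d_{\max}$ enters through $\|U_j\|_{\mathrm{eq},1}\le d_{\max}\|u_j\|_1$, with a union bound over layers fixing $\sigma$ from the $\gamma/4$ margin; the Gaussian KL term; and a covering net over $\tilde\beta_{\mathrm{eq}}$. Your deviations are cosmetic. The $1/d_\lambda$ rescaling of the prior variance is unnecessary, since with isotropic variance the KL is $\sum_{j,\lambda}\|W_{j,\lambda}\|_F^2/(2\sigma^2)\le\sum_j\|W_j\|_{\mathrm{eq},F}^2/(2\sigma^2)$, and a smaller variance only strengthens your Step 2. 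Likewise, because $\sigma$ depends on the weights only through $\beta_{\mathrm{eq}}\ge 1$, it is simpler to discretize $\beta_{\mathrm{eq}}$ directly, as the paper does, than to grid each layer norm, which can be arbitrarily close to zero.
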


Full proof details are provided in Appendix~\ref{a:generalization-bounds}. \Cref{th:pac-bayes-equivariant} establishes a PAC-Bayes generalization bound specifically for equivariant quantum models. This bound mirrors the form of the general case but replaces the general complexity measures with symmetry-adapted quantities: multiplicities $m_\lambda$, irrep dimensions $d_\lambda$, and the equivariant norms. This highlights that for highly symmetric problems, the generalization guarantee can be significantly tighter than what general bounds would suggest.

\subsection{Numerical experiments}\label{s:numerics}
To validate the theoretical framework established in the previous sections, we perform proof-of-concept numerical experiments designed to analyze the correlation between our analytically derived generalization bounds and the true performance of quantum models. Specifically, we investigate how the norms of the learned parameter matrices relate to the actual generalization gap in a classification task.

We focus on the classification of quantum phases of matter, which is a relevant task for the study of condensed-matter physics~\cite{carrasquilla2017machine, Sachdev}, and due to its significance, it frequently appears as a benchmark problem in the literature~\cite{carrasquilla2017machine, caro2022generalization, gil2024understanding, recio2024learning}. The task is to classify the ground states of a generalized cluster Hamiltonian for a $1$D spin chain into one of four phases of matter.
To define the classification task, we utilize a restricted training set of $N=8$ samples. While this size is sufficient to identify the phase diagram~\cite{caro2022generalization, gil2024understanding}, it intentionally creates a regime where models can exhibit a wide range of behaviors; from overfitting the training data to achieving robust generalization. The quantum models are evaluated on a separate, independent test set 
of $1000$ samples.

We employ both dynamic PQC and QCNN architectures, which naturally incorporate classical feedforward and dissipation dynamics analyzed in our framework.
Moreover, we design the experiments to also reveal how norms play a role in generalization. First, we consider a single-layer dynamic PQC, which allows us to fix $\beta_{\text{PM}}=1$ while varying the Frobenius norm term, i.e., the purity of the single-qubit channels. Second, in our QCNN architecture, we also incorporate a few dynamic operations similar to those found in the dynamic PQC. This design choice provides a wider range of complexity term values by allowing the modification of both $\beta_{\text{PTM}}$ and the Frobenius norms.
Further details regarding the implementation are provided in the numerical methods Section~\ref{ss:num_methods} and the publicly available code~\cite{github_repository}.

In Figure~\ref{fig:numerics_results}, we illustrate the results obtained from $1400$ independent training runs for the (a) dynamic PQC and (b) QCNN architectures. The empirical data confirms a positive trend across both architectures, linking the norms of the learned parameter matrices to the generalization gap. For the dynamic PQC evaluated under the PM formalism, we observe a positive linear correlation of $r=0.26$ between the theoretical complexity term and the actual generalization error. This relationship is more pronounced for the QCNN evaluated under the PTM formalism, which exhibits a stronger correlation of $r=0.46$. Crucially, a likely reason for the weaker correlation in Fig.~\ref{fig:numerics_results}(a) compared to Fig.~\ref{fig:numerics_results}(b) is a consequence of our experimental design: As the single-layer dynamic PQC fixes $\beta_{\text{PM}}$ to a constant, its theoretical complexity relies exclusively on the Frobenius norm. In contrast, the multi-layered QCNN allows for both the $\beta_{\text{PTM}}$ and Frobenius norm terms to actively vary. By allowing both contributions to influence the overall complexity measure, the bound seems to more comprehensively capture the model's generalization behavior, thereby yielding the stronger correlation observed.
Ultimately, the distributions confirm that models converging to solutions with smaller complexity terms, and therefore lower parameter norms, tend 
to exhibit smaller generalization gaps.

\begin{figure*}[t]
        \centering
        \includegraphics[width=.9\textwidth]{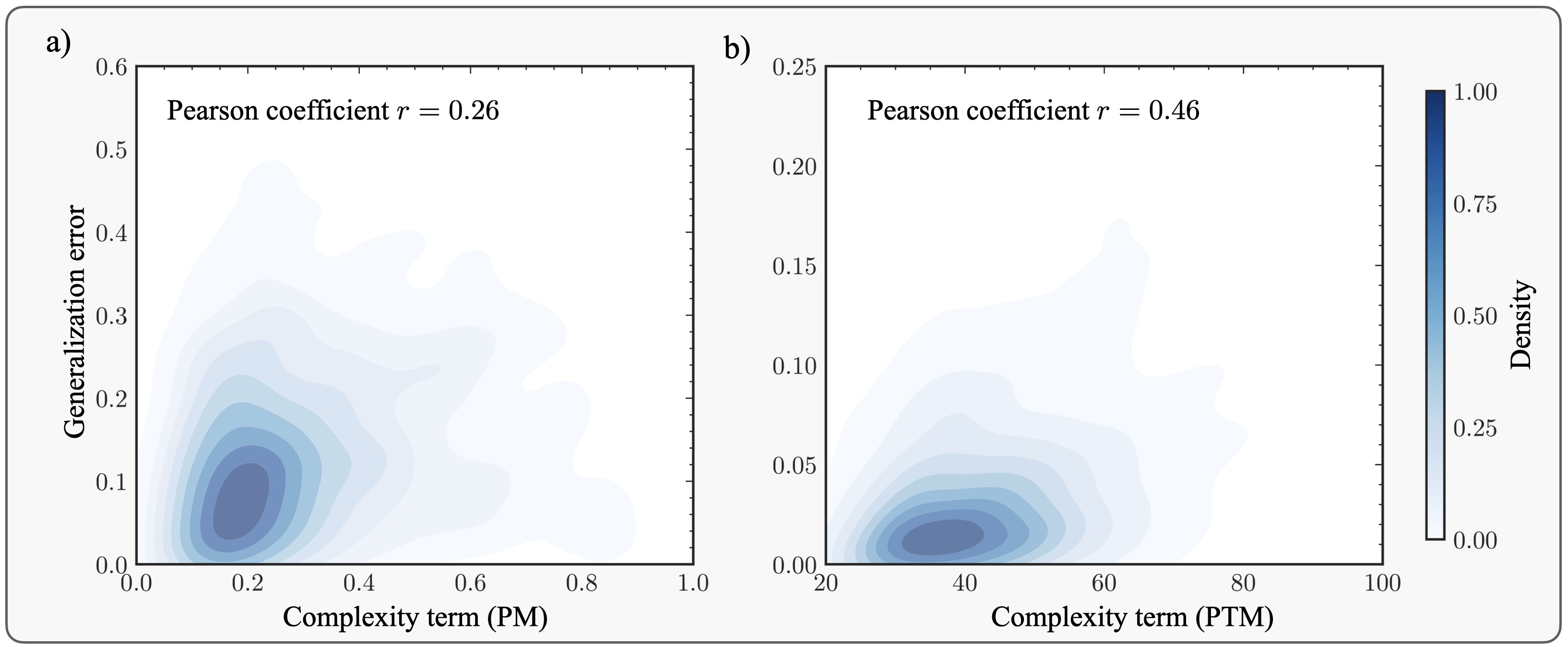}
        \caption{
            \textbf{Correlation between complexity term and generalization gap.} Generalization gap as a function of the complexity term $\beta_\text{P(T)M} ({ \sum_{j=1}^L \norm{W_j}_F^2 })^{1/2}$ derived in~\Cref{th:pac-bayes-PM,th:pac-bayes-PTM} for the \textbf{(a)} dynamic PQC and \textbf{(b)} QCNN. The density plots represent the outcomes of $1400$ individual and independent trained models, yielding a positive Pearson correlation coefficients of $r=0.26$ and $r=0.46$ for the dynamic PQC and QCNN architectures, respectively. Overall, we observe that models with smaller complexity term, and therefore lower norms, tend to exhibit smaller generalization gaps.
        } 
        \label{fig:numerics_results}
    \end{figure*}

\section{Discussion}\label{s:discussion}

Our work marks an advancement for understanding generalization in \emph{quantum machine learning}  (QML), particularly as QML models 
begin to 
exhibit phenomena like overparameterization~\cite{larocca2023theory, kempkes2025doubledescentquantummachine} or the 
capacity to fit 
random data~\cite{gil2024understanding}. 
Recent perspectives 
in classical \emph{machine learning} 
(ML) suggest that these behaviors are neither mysterious nor unique to 
neural networks. Instead, generalization can be rigorously characterized by analyzing soft inductive biases (algorithmic preferences for simpler solutions within flexible hypothesis spaces) using frameworks like PAC-Bayes~\cite{wilson2025deep}. Consistent with this view, the generalization bounds we derive provide new perspectives in the study of QML. Unlike standard
\emph{uniform} generalization bounds (based on quantities such as VC dimension or covering numbers), which describe worst-case behavior of the entire model class~\cite{zhang2017_rethinking, gil2024understanding, wilson2025deep}, our results account for the data distribution and the learned parameters. Consequently, they provide a more nuanced theoretical lens than standard uniform convergence approaches.

Central to our approach is the focus on QML models structured as layered quantum channels, represented via the \emph{Pauli transfer 
matrix} or the \emph{process matrix}. While this formulation deviates from standard models based purely on unitary layers, it aligns with a growing body of research utilizing non-unitary operations to implement expressive channel-based models. This direction is supported by emerging architectures such as dynamic quantum circuits~\cite{deshpande2024dynamicparameterizedquantumcircuits}, linear combination of unitaries~\cite{heredge2025nonunitary}, or instantaneously deep quantum neural networks~\cite{huang2025generative}. While recent research highlights that feedforward operations, mid-circuit measurements, and dissipation improve trainability by mitigating severe drawbacks like barren plateaus~\cite{mcclean2018barren, deshpande2024dynamicparameterizedquantumcircuits, heredge2025nonunitary, huang2025generative, zapusek2025scaling}, we establish that these same non-unitary mechanisms also yield favorable generalization properties.
Moreover, our framework demonstrates its flexibility by providing generalization guarantees specifically tailored to equivariant quantum models. By parameterizing channels in the basis of irreducible representations, we derive bounds that depend explicitly on the group structure, specifically its dimensions and multiplicities, rather than the full Hilbert space dimension. This yields a rigorous theoretical explanation for the benefits of geometric QML: equivariance acts as a \enquote{hard} inductive bias that restricts the exploration of the hypothesis space to symmetry-respecting channels. Mathematically, this is captured by replacing unstructured operator norms with symmetry-adapted norms that are significantly smaller in the presence of non-trivial symmetries. Thus, our results add a new way to formally quantify how physical symmetries can be leveraged in QML.

While the PAC-Bayesian framework is versatile, adapting it to quantum models requires a bespoke parameterization. As demonstrated in Appendix~\ref{a:kraus}, computing the PAC-Bayes complexity terms using standard mathematical representations of quantum channels, such as the Kraus or Stinespring parameterizations, leads to bottlenecks. The strict constraints of trace preservation and complete positivity force the relevant parameter norms to be blind to the actual function the model learned. Consequently, this direct application yields uniform bounds. This highlights why our specific representation is crucial. By framing the model's weights as deviations from the maximally depolarizing channel, we establish a natural, operationally meaningful baseline. This approach is what allows the PAC-Bayesian framework to successfully yield meaningful, non-uniform generalization guarantees. These results can be interpreted on two distinct levels: operationally at the channel level, and geometrically at the level of the loss landscape.

Operationally, our results convey a clear message: the closer the output channel is to the maximally depolarizing channel, the better generalization is to be expected. This acts as a specific realization of a \enquote{soft} inductive bias in the quantum context. While the depolarizing channel is usually linked to uncorrected hardware noise, this relationship is intuitively clear in a learning context. The maximally depolarizing channel acts a constant function: independently of the input, the output is always the maximally-mixed state. Since generalization captures the difference between training and test performance, a constant-valued channel achieves zero generalization gap, as it is unable to fit any data regardless of the source. This creates a fundamental tension: the closer the channel is to maximally depolarizing, the smaller the generalization gap, yet the channel cannot achieve good training performance unless it adapts to the data. Consequently, a fundamental trade-off emerges between expressivity (fitting the data) and generalization (approximating the maximally depolarizing channel). In this balancing act, our framework dictates that given two channels with equivalent training performance, the more dissipative channel is theoretically guaranteed to generalize better. This provides actionable guidance for QML model design: engineered dissipation---via e.g., mid-circuit measurements and feedforward---and explicit regularization terms should be actively leveraged to maximize learning performance.

Geometrically, the PAC-Bayes framework offers a useful picture regarding the loss landscape that mirrors insights from modern deep learning theory. Following standard de-randomization techniques~\cite{biggs2022margins}, the curvature of the loss landscape is captured by the complexity term in \Cref{th:pac-bayes-PM,th:pac-bayes-PTM,th:pac-bayes-equivariant}. Intuitively, this complexity term represents the largest curvature in any direction in the parameter space. This directly aligns with well-established observations in classical ML, where local minima with lower curvature (flat minima) correspond to low effective dimensionality and represent strong predictors of generalization~\cite{wilson2025deep}. Therefore, while this geometric picture does not strictly dictate how to optimize QML models, it provides a method to certify the quality of the output of a training algorithm. 

Looking ahead, our research opens several promising avenues. First, the versatility of the PAC-Bayesian framework encourages the exploration of alternative, perhaps more refined, representations of quantum channels that better align with empirical performance. Second, our results highlight that training strategies promoting robustness to parameter perturbations can lead to better generalization. This intriguingly suggests the potential of treating aspects of channel noise in near-term quantum devices or specific channel parameters as trainable hyperparameters to optimize generalization.
On a more general note, classical machine learning has benefited immensely from the development of a diverse array of PAC-Bayesian bounds, which are now recognized among the most powerful theoretical tools for understanding generalization. 
We contend that a similarly rich trajectory of developing varied non-uniform bounds is vital for advancing QML. Our work, providing the first PAC-Bayesian generalization bounds of its kind for QML models, serves as a foundational contribution towards this goal.

\section{Methods}\label{sec:methods}

\subsection{Perturbation bounds}\label{ss:perturbation}

At the center of our proofs are perturbation bounds that quantify how much the output of a quantum model changes under perturbations to its parameters. Analogous bounds for classical neural networks were crucial in establishing competitive PAC-Bayes bounds \cite{perturbation_bound, behboodi2022pac}.
The following lemmas provide such bounds, relating the change in the output to matrix norms of the PTM, PM, or equivariant parameter matrices and the perturbations at each layer.

\begin{lemma}[Perturbation bound for quantum models -- PM framework]\label{lemma:pert_bound_PM}
        For any $L>0$, let $f_{\vec{w}}^{\mathrm{PM}}: \calX \to \bbR^K$ be an $L$-layer quantum model parameterized by $\vec{w}=\mathrm{vec}(\{W_j\}_{j=1}^L)$. For any $\vec{x} \in \calX$, and any perturbation $\vec{u}=\mathrm{vec}\left(\{U_j\}_{j=1}^L\right)$ such that $\norm{U_j}_{1,1} \leq \frac{1}{L} \norm{W_j}_{1,1}$ holds for all $j$, the change in the output of the quantum model is bounded as
\begin{equation}
    \begin{split}
        \norm{f_{\vec{u}+\vec{w}}^{\mathrm{PM}}(\vec{x})-f_{\vec{w}}^{\mathrm{PM}}(\vec{x})}_\infty
        \leq e \sum_{j=1}^L\left[\norm{U_j}_{1,1}\left(\prod_{\ell=j+1}^L\norm{W_\ell}_{1,1}\right)\right]\, ,
    \end{split}
\end{equation}
where the norm $\norm{A}_{1,1}$ equals the sum of absolute values of all the entries of the matrix $A$.
\end{lemma}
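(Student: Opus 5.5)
We prove this with a telescoping (hybrid) argument over the layers, the quantum analogue of the layer-wise perturbation analysis in Ref.~\cite{perturbation_bound}. Write $\phi_j=\phi_{\mathrm{PM}}^{W_j}$ and $\tilde\phi_j=\phi_{\mathrm{PM}}^{W_j+U_j}$, and define the hybrid states
\begin{equation}
\rho^{(j)} = \left(\phi_L\circ\cdots\circ\phi_{j+1}\circ\tilde\phi_j\circ\cdots\circ\tilde\phi_1\right)\rho(\vec{x})\,,
\end{equation}
so that $\rho^{(L)}$ is the perturbed output and $\rho^{(0)}$ the unperturbed one. Each output entry satisfies $|f_{\vec{w}+\vec{u}}(\vec{x})[k]-f_{\vec{w}}(\vec{x})[k]| = |\mathrm{Tr}(\Pi_k(\rho^{(L)}-\rho^{(0)}))|$, which is at most the trace norm $\norm{\rho^{(L)}-\rho^{(0)}}_1$, and the latter is at most $\sum_{j}\norm{\rho^{(j)}-\rho^{(j-1)}}_1$. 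Consecutive hybrids differ only in layer $j$. Moreover, since $\phi^{W+U}_{\mathrm{PM}}-\phi^{W}_{\mathrm{PM}} = \Delta_{U}$, where $\Delta_U(X)=\sum_{A,B}U(A,B)\sigma_AX\sigma_B$ is linear in $U$, we get
\begin{equation}
\rho^{(j)}-\rho^{(j-1)} = \left(\phi_L\circ\cdots\circ\phi_{j+1}\right)\Delta_{U_j}\left(\sigma_j\right)\,,
\end{equation}
where $\sigma_j=(\tilde\phi_{j-1}\circ\cdots\circ\tilde\phi_1)\rho(\vec{x})$ is a genuine density operator.

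The key estimates are as follows.

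First, for any matrix $U$ and operator $X$, the triangle inequality and unitarity of Paulis give $\norm{\Delta_U(X)}_1\le\sum_{A,B}|U(A,B)|\,\norm{\sigma_AX\sigma_B}_1=\norm{U}_{1,1}\norm{X}_1$. Hence $\norm{\Delta_{U_j}(\sigma_j)}_1\le\norm{U_j}_{1,1}$.

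Second, we need to propagate this traceless operator $Y=\Delta_{U_j}(\sigma_j)$ through the unperturbed layers $\ell>j$. Here we use that $\phi_\ell = \phi_{\mathrm{DEP}} + \Delta_{W_\ell}$ and that $\phi_{\mathrm{DEP}}(Y)=\mathrm{Tr}(Y)\mathbb{I}/d$. Is $Y$ traceless? Trace preservation of both $\tilde\phi_j$ and $\phi_j$ gives $\mathrm{Tr}(\Delta_{U_j}(\sigma))=0$ for every state, so yes. Each subsequent channel then acts on $Y$ purely through $\Delta_{W_\ell}$, and $\Delta_{W_\ell}$ preserves tracelessness, since $\mathrm{Tr}\Delta_{W_\ell}(X)=\mathrm{Tr}\phi_\ell(X)-\mathrm{Tr}(X)/\!\!\;\cdot$ trace-preserving differences vanish. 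Therefore
\begin{equation}
\norm{(\phi_L\circ\cdots\circ\phi_{j+1})(Y)}_1 = \norm{(\Delta_{W_L}\circ\cdots\circ\Delta_{W_{j+1}})(Y)}_1 \le \prod_{\ell=j+1}^L\norm{W_\ell}_{1,1}\,\norm{U_j}_{1,1}\,.
\end{equation}
This is exactly the claimed bound even without the factor $e$.

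I expect the main obstacle to be justifying this step rigorously, namely that the depolarizing part annihilates the propagated difference. If the tracelessness argument holds, the hypothesis $\norm{U_j}_{1,1}\le\norm{W_j}_{1,1}/L$ and the constant $e$ are not even needed. The factor $e$ presumably arises in the authors' version because they propagate the difference through the \emph{perturbed} earlier layers in the opposite order, or telescope differently, picking up factors $\prod_\ell(\norm{W_\ell}_{1,1}+\norm{U_\ell}_{1,1})\le(1+1/L)^{L}\prod_\ell\norm{W_\ell}_{1,1}\le e\prod_\ell\norm{W_\ell}_{1,1}$. As a fallback, if the trace-preservation bookkeeping fails, we would follow that route. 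We would telescope with perturbed layers after position $j$, bound $\Delta_{W_\ell+U_\ell}$ by $\norm{W_\ell}_{1,1}(1+1/L)$ on traceless inputs, and use $(1+1/L)^L\le e$, which reproduces the stated inequality exactly.
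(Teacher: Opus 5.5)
Your fallback is exactly the paper's proof. The paper establishes the recursion $\Delta_{j+1}\le\norm{U_{j+1}}_{1,1}+(\norm{W_{j+1}}_{1,1}+\norm{U_{j+1}}_{1,1})\Delta_j$ for $\Delta_j=\norm{\rho^{\vec{w}+\vec{u}}_j-\rho^{\vec{w}}_j}_1$, via a generic template lemma. Unrolled, this is your hybrid decomposition with \emph{perturbed} layers after position $j$ and unperturbed ones before it, and the factor $e$ bounds $(1+1/L)^L$.

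Your main route uses the opposite ordering. It is correct whenever every $\phi^{W_j+U_j}_{\mathrm{PM}}$ is itself a quantum channel, and in that case it gives you something extra. The difference is only ever pushed through the genuine channels $\phi_\ell=\phi_{\mathrm{DEP}}+\Delta_{W_\ell}$, so you get the bound with neither $e$ nor the hypothesis $\norm{U_j}_{1,1}\le\norm{W_j}_{1,1}/L$.

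However, you placed the risk in the wrong step. Trace preservation of the perturbed layers is needed by both arguments. The paper uses it tacitly when it treats $\rho^{\vec{w}+\vec{u}}_j-\rho^{\vec{w}}_j$ as a difference of density matrices, and the statement is false without it: take $L=2$, $W_2=U_2=0$, and $U_1$ with a single nonzero $(\mathbb{I},\mathbb{I})$ entry.

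What only your ordering needs is $\norm{\sigma_j}_1=1$, i.e., that the perturbed layers $1,\dots,j-1$ are also positive. Nothing in the hypotheses, which only bound $\norm{U_j}_{1,1}$, makes $\mathbb{I}/4^n+W_j+U_j$ positive semidefinite. The paper's ordering sidesteps this because $\Delta_{U_j}$ always acts on the unperturbed state $\rho^{\vec{w}}_{j-1}$, so only tracelessness is used. With merely trace-preserving perturbed layers, your ordering gives only $\norm{\sigma_j}_1\le\prod_{\ell<j}(1+\norm{U_\ell}_{1,1})$. The hypothesis does not control this when the $\norm{W_\ell}_{1,1}$ are large, so you would have to switch to the fallback.
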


\begin{lemma}[Perturbation bound for quantum models -- PTM framework]\label{lemma:pert_bound_PTM}
        For any $L>0$, let $f_{\vec{w}}^{\mathrm{PTM}}: \calX \to \bbR^K$ be an $L$-layer quantum model parameterized by $\vec{w}=\mathrm{vec}(\{W_j\}_{j=1}^L)$. For any $\vec{x} \in \calX$, and any perturbation $\vec{u}=\mathrm{vec}\left(\{U_j\}_{j=1}^L\right)$ such that $\norm{U_j}_{1,1} \leq \frac{1}{L} \norm{W_j}_{1,1}$ holds for all $j$, the change in the output of the quantum model is bounded as
\begin{equation}
    \begin{split}
        \norm{f_{\vec{u}+\vec{w}}^{\mathrm{PTM}}(\vec{x})-f_{\vec{w}}^{\mathrm{PTM}}(\vec{x})}_\infty
        \leq e\sqrt{d_{\mathrm{\mathrm{out}}}^{(L)}}\sum_{j=1}^L\left[\frac{1}{\sqrt{d_{\mathrm{in}}^{(j)}}}\norm{U_j}_{1,1}\left(\prod_{\ell=j+1}^L\norm{W_\ell}_{1,1}\right)\right]\, ,
    \end{split}
\end{equation}
where the norm $\norm{A}_{1,1}$ equals the sum of absolute values of all the entries of the matrix $A$.
\end{lemma}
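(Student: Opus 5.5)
We prove the PTM perturbation bound by a telescoping (hybrid) argument over layers, in the same spirit as the PM case of \Cref{lemma:pert_bound_PM}. Write $\phi_j=\phi_{\mathrm{PTM}}^{W_j}$ and $\tilde\phi_j=\phi_{\mathrm{PTM}}^{W_j+U_j}$. Let $\rho_j=\phi_j\circ\cdots\circ\phi_1(\rho(\vec{x}))$ denote the unperturbed intermediate states, and $\tilde\rho_j$ the perturbed ones. Since $\tilde\phi_j-\phi_j=\Delta_j$ is the linear map with PTM $U_j$, i.e.\ $\Delta_j(X)=(d_{\mathrm{in}}^{(j)}d_{\mathrm{out}}^{(j)})^{-1/2}\sum_{A,B}U_j(A,B)\Tr(\sigma_B X)\sigma_A$, the output difference decomposes as
\begin{equation}
\tilde\rho_L-\rho_L=\sum_{j=1}^L \tilde\phi_L\circ\cdots\circ\tilde\phi_{j+1}\circ\Delta_j(\rho_{j-1}).
\end{equation}
Applying the POVM element $\Pi_k$ and taking the maximum over $k$ bounds $\norm{f_{\vec{u}+\vec{w}}^{\mathrm{PTM}}(\vec{x})-f_{\vec{w}}^{\mathrm{PTM}}(\vec{x})}_\infty$ by the sum of $|\Tr(\Pi_k\,\cdot)|$ of each term.

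The key step is to track everything in the Pauli coefficient vector $v(X)_A=\Tr(\sigma_A X)$ with the entrywise $\infty$-norm. For any state, $|\Tr(\sigma_B\rho)|\le 1$, so the vector $(\Tr(\sigma_B\rho_{j-1}))_B$ has $\infty$-norm at most $1$. Applying $\Delta_j$ then produces Pauli coefficients $\Tr(\sigma_A\Delta_j(\rho_{j-1}))=d_{\mathrm{out}}^{(j)}(d_{\mathrm{in}}^{(j)}d_{\mathrm{out}}^{(j)})^{-1/2}\sum_B U_j(A,B)\Tr(\sigma_B\rho_{j-1})$. Their $\infty$-norm is at most $\sqrt{d_{\mathrm{out}}^{(j)}/d_{\mathrm{in}}^{(j)}}\,\norm{U_j}_{\infty\to\infty}\le \sqrt{d_{\mathrm{out}}^{(j)}/d_{\mathrm{in}}^{(j)}}\norm{U_j}_{1,1}$.

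Each subsequent layer $\tilde\phi_\ell$ acts on a coefficient vector $c$ by $c\mapsto \sqrt{d_{\mathrm{out}}^{(\ell)}/d_{\mathrm{in}}^{(\ell)}}\,(R_\ell+U_\ell)c$, with the appropriate normalization. Here we must use that $\Delta_j(\rho_{j-1})$ is traceless: trace preservation of both perturbed and unperturbed channels forces the $(0,\cdot)$ row of $U_j$ to vanish. Because of this, the identity component $\delta_{A,0}\delta_{B,0}$ of $R_\ell$ never acts, and only $W_\ell+U_\ell$ contributes. So each later layer multiplies the $\infty$-norm by at most $\sqrt{d_{\mathrm{out}}^{(\ell)}/d_{\mathrm{in}}^{(\ell)}}\,(\norm{W_\ell}_{1,1}+\norm{U_\ell}_{1,1})\le \sqrt{d_{\mathrm{out}}^{(\ell)}/d_{\mathrm{in}}^{(\ell)}}\,(1+1/L)\norm{W_\ell}_{1,1}$, using the hypothesis on $U_\ell$.

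Since $d_{\mathrm{in}}^{(\ell+1)}=d_{\mathrm{out}}^{(\ell)}$, the dimension factors telescope. The $j$-th term carries $\sqrt{d_{\mathrm{out}}^{(L)}/d_{\mathrm{in}}^{(j)}}$ times the product of the norms, and $(1+1/L)^{L-j}\le e$.

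To finish, we convert back to the output probabilities. We have $\Tr(\Pi_k X)=d_{\mathrm{out}}^{(L)\,-1}\sum_A \Tr(\sigma_A X)\Tr(\sigma_A\Pi_k)$. Since $\Pi_k$ is diagonal, only the $d_{\mathrm{out}}^{(L)}$ Paulis that are products of $\mathbb{I}$ and $Z$ contribute, each with $|\Tr(\sigma_A\Pi_k)|=1$. Hence $|\Tr(\Pi_kX)|\le \norm{v(X)}_\infty$, and the result is the stated bound $e\sqrt{d_{\mathrm{out}}^{(L)}}\sum_j (d_{\mathrm{in}}^{(j)})^{-1/2}\norm{U_j}_{1,1}\prod_{\ell>j}\norm{W_\ell}_{1,1}$. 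The factor $\sqrt{d_{\mathrm{out}}^{(L)}}$ arises from the final-layer normalization combined with the telescoping.

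The main obstacle is the bookkeeping of the normalization factors. We must get exactly $\sqrt{d_{\mathrm{out}}^{(L)}}/\sqrt{d_{\mathrm{in}}^{(j)}}$ rather than extra powers of the dimensions, which may require measuring with the normalized coefficients $\Tr(\sigma_A X)/\sqrt{d}$ instead. We also need to justify dropping the identity component of $R_\ell$ via tracelessness. If the tracelessness argument fails, we would fall back to bounding $\norm{R_\ell+U_\ell}$ directly, at the cost of a constant. Beyond these points, the $(1+1/L)^L\le e$ control is routine.
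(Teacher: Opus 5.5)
Your proof is correct and yields exactly the stated bound. The overall skeleton matches the paper's, but the norm you propagate differs. The paper, via its template lemma, unrolls the recursion
$\Delta_{j+1}\le \sqrt{d_{\mathrm{out}}^{(j+1)}/d_{\mathrm{in}}^{(j+1)}}\,\bigl(\|U_{j+1}\|_{1,1}+(\|W_{j+1}\|_{1,1}+\|U_{j+1}\|_{1,1})\Delta_j\bigr)$
for the trace distance $\Delta_j$ of intermediate states. It gets this from $|\mathrm{Tr}(\sigma_B X)|\le\|X\|_1$ and $\|\sigma_A\|_1=d_{\mathrm{out}}$, and reads out with $|\mathrm{Tr}(\Pi_k X)|\le\|\Pi_k\|_\infty\|X\|_1$.

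You instead track the sup-norm of Pauli coefficients. In those coordinates each layer is literally the matrix $\sqrt{d_{\mathrm{out}}/d_{\mathrm{in}}}\,(R+U)$. Your readout uses the fact that computational-basis projectors only see the $d_{\mathrm{out}}^{(L)}$ $Z$-type Paulis, with coefficients $\pm1$.

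Your bookkeeping, as written with unnormalized coefficients, already produces exactly $\sqrt{d_{\mathrm{out}}^{(L)}/d_{\mathrm{in}}^{(j)}}$. So the normalized coefficients you worry about are unnecessary. The fallback is also unnecessary, and would not give the stated bound: it replaces $\|W_\ell\|_{1,1}$ by $\sqrt{d_{\mathrm{in}}^{(\ell)}/d_{\mathrm{out}}^{(\ell)}}+\|W_\ell\|_{1,1}$. The trace preservation of the perturbed layers, which you invoke to kill the identity component, is exactly what the paper uses tacitly when it drops the $\delta_{A,0}\delta_{B,0}$ term. You are right to make it explicit.

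Each route buys something different:
\begin{itemize}
\item \emph{Your route} gives a slightly sharper intermediate result. You actually control each layer by the max-row-sum $\max_A\sum_B|\cdot|$, rather than by the full entrywise sum $\|\cdot\|_{1,1}$.
\item \emph{The paper's route} gives generality. The trace norm is contracted by every channel and works with any POVM effect satisfying $\|\Pi_k\|_\infty\le 1$. That is what lets one template cover the PM and equivariant cases, as well as the interleaved encoding channels in the data re-uploading corollary.
\item \emph{Limits of yours:} the Pauli sup-norm is not contractive under non-Clifford channels. Your readout step also relies on the measurement being diagonal, or more generally stabilizer-type.
\end{itemize}
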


\begin{lemma}[Perturbation bound for equivariant quantum models]\label{lemma:pert_bound_equivariant}
For any $L > 0$, let $f_{\vec{w}}^{\mathrm{eq}}: \calX \to \bbR^K$ be an $L$-layer equivariant quantum model parameterized by $\vec{w}$ representing the set of parameter matrices $\{W_{j,\lambda}\}_{j,\lambda}$. For any $\vec{x} \in \mathcal{X}$ and any perturbation $\vec{u}$ representing $\{U_{j,\lambda}\}_{j,\lambda}$ such that $\|U_{j,\lambda}\|_{1} \le \frac{1}{L} \|W_{j,\lambda}\|_{1}$ holds for all $j$ and $\lambda$, the change in the output of the quantum model is bounded as
\begin{equation}
    \norm{f_{\vec{w}+\vec{u}}^{\mathrm{eq}}(\vec{x}) - f_{\vec{w}}^{\mathrm{eq}}(\vec{x})}_\infty \le e \sum_{j=1}^{L} \left[ \norm{U_j}_{\mathrm{eq},1} \left( \prod_{\ell=j+1}^{L} \norm{W_\ell}_{\mathrm{eq},1} \right) \right]\,,
\end{equation}
where the equivariant 1-norm for a layer is defined as $\norm{V}_{\mathrm{eq},1} := \sum_{\lambda} d_\lambda \norm{V_{\lambda}}_1$, with $d_\lambda$ being the dimension of the irreducible representation $\lambda$ and $\norm{\cdot}_1$ being the trace norm.
\end{lemma}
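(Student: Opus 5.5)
The argument is a layer-wise telescoping (hybrid) argument, the same one behind \Cref{lemma:pert_bound_PM,lemma:pert_bound_PTM}, carried out in the Choi/irrep picture. Write $\phi_j$ for the channel with Choi operator $\bigoplus_\lambda \mathbb{I}_{d_\lambda}\otimes(J_{0,\lambda}+W_{j,\lambda})$, and $\tilde\phi_j$ for the perturbed channel with $W_{j,\lambda}\to W_{j,\lambda}+U_{j,\lambda}$. Let $\Delta_j=\tilde\phi_j-\phi_j$ be the linear map whose Choi operator is $\bigoplus_\lambda \mathbb{I}_{d_\lambda}\otimes U_{j,\lambda}$. We then expand
\begin{equation}
\tilde\phi_L\circ\cdots\circ\tilde\phi_1-\phi_L\circ\cdots\circ\phi_1=\sum_{j=1}^L \tilde\phi_L\circ\cdots\circ\tilde\phi_{j+1}\circ\Delta_j\circ\phi_{j-1}\circ\cdots\circ\phi_1 .
\end{equation}
Since $\|\,f_{\vec{w}+\vec{u}}(\vec{x})-f_{\vec{w}}(\vec{x})\|_\infty\le\max_k|\mathrm{Tr}(\Pi_k(\tilde\rho_{\rm out}-\rho_{\rm out}))|$, it suffices to bound each summand applied to the state $\sigma_{j-1}=\phi_{j-1}\circ\cdots\circ\phi_1(\rho(\vec{x}))$. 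This is a genuine density operator, so the prefix needs no estimate at all.

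\textbf{Single-map estimate.} For a linear map $\Psi$ with Choi operator $J^\Psi$, we have $\Psi(X)=\mathrm{Tr}_{\rm in}[J^\Psi(X^T\otimes\mathbb{I})]$. The key inequality is therefore $\|\Psi(X)\|_1\le\|J^\Psi\|_1\|X\|_\infty\le\|J^\Psi\|_1\|X\|_1$. This holds because, writing $J^\Psi=\sum_i s_i\ket{a_i}\bra{b_i}$ in SVD form, each rank-one term maps $X$ to an operator of trace norm at most $\|X\|_\infty$. The Choi operator is block diagonal in the isotypic basis, so its trace norm equals $\sum_\lambda d_\lambda\|U_{j,\lambda}\|_1=\|U_j\|_{\mathrm{eq},1}$. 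This is exactly how the factor $d_\lambda$ and the trace norm enter. The unitary change to the irrep basis preserves trace norms, so it can be ignored.

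\textbf{Suffix estimate.} The same inequality gives $\|\tilde\phi_\ell(Y)\|_1\le\|J_0^{(\ell)}+W_\ell+U_\ell\|_{\mathrm{eq},1}\|Y\|_1$, but the baseline $J_0$ contributes a term that is not present in the claimed product. I would first split $\tilde\phi_\ell=\phi_{\rm DEP}+\mathcal{W}_\ell+\mathcal{U}_\ell$ and exploit the fact that $\Delta_j(\sigma)$ is traceless, because both $\tilde\phi_j$ and $\phi_j$ are trace preserving. Applying $\phi_{\rm DEP}$ to a traceless operator gives $\phi_{\rm DEP}(Y)=\mathrm{Tr}(Y)\mathbb{I}/d=0$, and tracelessness is preserved by subsequent trace-preserving differences. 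Hence only $\mathcal{W}_\ell+\mathcal{U}_\ell$ acts, and
\[
\|\cdot\|_1\le(\|W_\ell\|_{\mathrm{eq},1}+\|U_\ell\|_{\mathrm{eq},1})\|\cdot\|_1\le(1+1/L)\|W_\ell\|_{\mathrm{eq},1}\|\cdot\|_1 .
\]
The last step uses the hypothesis $\|U_{\ell,\lambda}\|_1\le\frac1L\|W_{\ell,\lambda}\|_1$ blockwise, summed with weights $d_\lambda$. The product of at most $L$ such factors $(1+1/L)$ is bounded by $e$. Finally, $|\mathrm{Tr}(\Pi_k Y)|\le\|Y\|_1$, and summing over $j$ gives the claim.

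\textbf{Main obstacle.} The delicate step is justifying that the depolarizing baseline drops out. This requires $\mathcal{W}_\ell(Y)$ to remain traceless for traceless $Y$, which in turn requires the $W$-part alone to be trace-annihilating. Trace preservation of the full channel ($J^{\phi}$ has partial trace $\mathbb{I}$ over the output, and $J_0$ already saturates it) forces $\mathrm{Tr}_{\rm out}$ of the $W$-part of the Choi operator to vanish. It follows that $\mathrm{Tr}\,\mathcal{W}_\ell(Y)=0$ for all $Y$. I would verify this carefully in the irrep decomposition, checking the normalization $J_{0,\lambda}=\mathbb{I}_{m_\lambda}/d_{\rm out}$. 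If it failed, the fallback is to absorb $\|J_0\|_{\mathrm{eq},1}$ into the product, which gives a weaker statement.
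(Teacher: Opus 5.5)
Your proposal is correct and is essentially the paper's proof. The paper runs the same recursion through its template lemma, which unrolled is exactly your hybrid sum with perturbed suffix and unperturbed prefix. It bounds each layer via $\|\phi_V\|_\diamond\le\|J_V\|_1=\|V\|_{\mathrm{eq},1}$, a bound it cites rather than proving as you do with the SVD argument, and it finishes with $(1+1/L)^L\le e$.

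Your explicit argument that the depolarizing baseline drops out is more careful than the paper's proof, which discards the baseline without comment. Both arguments also need the perturbed maps $\tilde\phi_j$ to be trace preserving, i.e.\ $\mathrm{Tr}_{\mathrm{out}}$ of the $U$-part of the Choi operator must vanish. The lemma does not state this assumption. Without it the baseline leaves an extra term $\mathrm{Tr}[\mathcal{U}_j(\sigma_{j-1})]\,\mathbb{I}/d$, and the bound can fail. For example, take $L=2$ and $W_2=0$: then $U_2=0$ and the right-hand side vanishes, while the left-hand side is $|\mathrm{Tr}\,\mathcal{U}_1(\rho(\vec{x}))|/d$, which is nonzero for a non-trace-annihilating $U_1$.
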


These lemmas establish upper bounds on the sensitivity of the quantum model output to perturbations in its parameters.

\begin{proof}[Proof sketch]
    We sketch the proof for the PM and PTM parameterizations (\Cref{lemma:pert_bound_PM,lemma:pert_bound_PTM}); the proof for the equivariant case (\Cref{lemma:pert_bound_equivariant}) follows a similar recursive structure but requires specialized norms. Detailed proofs are provided in \Cref{a:perturbation-bounds}.
    Starting from the PM and PTM parameterization, it is easy to show that 
    \begin{equation}\label{eq:sketch-parameter-change}
        \norm{(\phi_{\mathrm{PTM}}^{W_{j+1}+U_{j+1}}-\phi_{\mathrm{PTM}}^{W_{j+1}})}_{1\to 1}\leq \sqrt{\frac{d_{\mathrm{\mathrm{out}}}^{(j+1)}}{d_{\mathrm{in}}^{(j+1)}}}\,\|U_{j+1}\|_{1,1}
    \end{equation}
    and similarly 
    \begin{equation}
        \norm{(\phi_{\mathrm{PM}}^{W_{j+1}+U_{j+1}}-\phi_{\mathrm{PM}}^{W_{j+1}})}_{1\to 1}\leq \|U_{j+1}\|_{1,1} \, .
    \end{equation}
    We can also use 
    the PM and PTM parameterizations to control the $1$-norm contraction coefficients of the associated channels as
    \begin{equation}
        \mathrm{CC}_{1,1}\left(\phi_{\mathrm{PTM}}^{W_{j+1}+U_{j+1}}\right)\leq \sqrt{\frac{d_{\mathrm{\mathrm{out}}}^{(j+1)}}{d_{\mathrm{in}}^{(j+1)}}}\,(\|W_{j+1}\|_{1,1}+\|U_{j+1}\|_{1,1})
    \end{equation}
    and
    \begin{equation}
        \mathrm{CC}_{1,1}\left(\phi_{\mathrm{PM}}^{W_{j+1}+U_{j+1}}\right)\leq \norm{W_{j+1}}_{1,1}+\norm{U_{j+1}}_{1,1} \, .
    \end{equation}
    By definition of the contraction coefficient, we now have 
    \begin{equation}\label{eq:sketch-input-change}
        \norm{\phi_{\mathrm{P(T)M}}^{W_{j+1}+U_{j+1}}(\rho)-\phi_{\mathrm{P(T)M}}^{W_{j+1}+U_{j+1}}(\sigma)}_1
        \leq \mathrm{CC}_{1,1}\left(\phi_{\mathrm{P(T)M}}^{W_{j+1}+U_{j+1}}\right)\norm{\rho-\sigma}_1 \, .
    \end{equation}
    Combining \Cref{eq:sketch-parameter-change,eq:sketch-input-change} via a triangle inequality, we obtain
    \begin{equation}
        \norm{\phi_{\mathrm{PTM}}^{W_{j+1}+U_{j+1}}(\rho)-\phi_{\mathrm{PTM}}^{W_{j+1}}(\rho)}_1
        \leq \sqrt{\frac{d_{\mathrm{\mathrm{out}}}^{(j+1)}}{d_{\mathrm{in}}^{(j+1)}}}\,(\|W_{j+1}\|_{1,1}+\|U_{j+1}\|_{1,1})\,\norm{\phi_{\mathrm{PTM}}^{W_{j}+U_{j}}(\rho)-\phi_{\mathrm{PTM}}^{W_{j}}(\rho)}_1+\sqrt{\frac{d_{\mathrm{\mathrm{out}}}^{(j+1)}}{d_{\mathrm{in}}^{(j+1)}}}\,\|U_{j+1}\|_{1,1}
    \end{equation}
    and
    \begin{equation}
        \norm{\phi_{\mathrm{PM}}^{W_{j+1}+U_{j+1}}(\rho)-\phi_{\mathrm{PM}}^{W_{j+1}}(\rho)}_1
        \leq (\norm{W_{j+1}}_{1,1}+\norm{U_{j+1}}_{1,1})\;\norm{\phi_{\mathrm{PM}}^{W_{j}+U_{j}}(\rho)-\phi_{\mathrm{PM}}^{W_{j}}(\rho)}_1+\norm{U_{j+1}}_{1,1} \, .
    \end{equation}
    We now solve these recurrence relations to get a closed-form bound on $\norm{\phi_{\mathrm{P(T)M}}^{W_{j}+U_{j}}(\rho)-\phi_{\mathrm{P(T)M}}^{W_{j}}(\rho)}_1$, which after using $\|U_\ell\|_{1,1}\leq \frac{1}{L}\|W_\ell\|_{1,1}$, $(1+1/L)^L\leq e$, $d_\mathrm{out}^{(j)}=d_\mathrm{in}^{(j+1)}$ and H\"older's inequality completes the proof. 
\end{proof}

\subsection{Numerical methods}\label{ss:num_methods}
In this section, we provide detailed information regarding the learning task, data generation, and the specific training procedures used to obtain the results presented in Section~\ref{s:numerics}. The code used to generate the numerical results can be found in Ref.~\cite{github_repository}.

\begin{example}[Hamiltonian and dataset generation.]
We consider the generalized cluster Hamiltonian, a standard model exhibiting symmetry-protected topological order. For a chain of $n$ qubits, the Hamiltonian is defined as
\begin{align} H = \sum_{i=1}^n \left(Z_i - J_1 X_i X_{i+1} - J_2 X_{i-1} Z_i X_{i+1}\right)\,, \label{eq:cluster_hamiltonian_methods} 
\end{align}
where $X_i$, $Z_i$ denote Pauli operators on site $i$, and we consider periodic boundary conditions. The ground-state phase diagram in the $(J_1, J_2)$ plane contains four distinct phases~\cite{verresen2017one}.

To generate the dataset, we sample coupling pairs uniformly in the domain $J_1,J_2 \in [-4, 4]$. For each sampled pair, we compute the ground state using exact diagonalization. To rigorously test generalization performance, we utilize a restricted training set of size $N=8$ and evaluate performance on an independent test set of size $1000$.
\end{example}

\begin{example}[Model architectures.]
We employ two distinct $6$-qubit architectures in our simulations: the dynamic PQC and the QCNN.

\begin{paragraph}{Dynamic PQC:}
The architecture begins with a six-qubit data register. This is immediately followed by two sequential blocks of dynamic, mid-circuit operations applied independently to each data qubit. These dynamic blocks operate by preparing an independent auxiliary qubit in a superposition state via a parameterized $R_X$ rotation, which is subsequently measured in the computational basis. The resulting classical bit triggers the conditional application of a parameterized single-qubit unitary operation on the corresponding data qubit. This mechanism acts as a probabilistic quantum channel, injecting controlled stochasticity into the network. Following the dynamic blocks, the architecture employs a series of two-qubit parameterized unitaries arranged in a staggered, brick-wall configuration across neighboring qubits. Finally, the expectation value is evaluated with respect to a target observable on a reduced two-qubit subsystem.
\end{paragraph}

\begin{paragraph}{QCNN:}
The proposed QCNN architecture consists of alternating convolutional and pooling layers. The convolutional layers apply parameterized two-qubit unitaries to neighboring qubit pairs to capture spatial correlations, while the pooling layers systematically trace out subsets of qubits to reduce the system's dimensionality. To induce a wider variance in model complexity and generalization error, we introduce dynamic, mid-circuit operations immediately following each convolutional block and prior to pooling. Specifically, these dynamic layers utilize auxiliary qubits to apply conditional single-qubit rotations based on mid-circuit measurement outcomes. Through this sequential process, the input state is progressively reduced from six qubits down to two, from which the final expectation values are measured to yield the prediction.
\end{paragraph}
\end{example}

\begin{example}[Training.] The models were trained using the Adam optimizer~\cite{kingma2014adam} with a learning rate of $0.005$ over $200$ epochs with random initialization. To ensure our analysis covered a wide variety of trained models and solution norms, we also included a complexity-based regularization term in the training objective. Specifically, we regularized the cost function by adding a term $\propto |\lambda| \norm{W}$, where $\lambda$ is sampled from a normal distribution centered at zero for the different training runs. The optimization aimed to maximize the fidelity between the predicted quantum states and the target labels corresponding to the phases of matter.
In total, we conducted $1400$ independent training runs for the dynamic PQC and $1400$ independent training runs for the QCNN. All simulations were performed using the {\tt PennyLane}~\cite{bergholm2018pennylane} library.
\end{example}

\section*{Code and data availability}
    The code and data generated during the current study are available in Ref.~\cite{github_repository}.

\acknowledgments
P.~R.-G. thanks Ruben Ibarrondo and Frederik vom Ende for fruitful discussions. The authors also want to thank Johannes Jakob
Meyer for useful comments on an earlier version of this manuscript. \mbox{P.~R.-G.} acknowledges support from the Basque Government through Grant No. IT1470-22, and the Elkartek project KUBIBIT - kuantikaren berrikuntzarako ibilbide teknologikoak (ELKARTEK25/79). J.~E., E.~G.-F., F.~J.~S., 
and C.~B.-P. acknowledge the 
BMFTR (MUNIQC-Atoms, 
HYBRID, QuSol, Hybrid++),
the DFG (CRC 
183 and SPP 2514), the Quantum Flagship (Millenion, PasQuanS2), the Munich Quantum Valley, Berlin Quantum and the European Research Council (DebuQC)
for financial support.
E.~G.-F. is a recipient of the 2023 Google PhD Fellowship.

\textit{Note added:} After presenting our work at QTML25, we became aware of ongoing independent work by Michele Grossi, Jogi Suda Neto, and Cenk Tüysüz on PAC-Bayesian bounds for (equivariant) unitary quantum neural networks, currently in preparation.

\bibliography{citations.bib}
\newpage

\vspace*{0.5cm}

\onecolumngrid
\appendix

\makeatletter
\renewcommand{\p@subsection}{}
\makeatother

\renewcommand{\thesubsection}{\thesection.\arabic{subsection}}
\crefname{subsection}{Appendix}{Appendices}
\Crefname{subsection}{Appendix}{Appendices}
\renewcommand{\thetheorem}{\thesection.\arabic{theorem}}
\renewcommand{\theHtheorem}{\thesection.\arabic{theorem}}
\renewcommand{\theHfigure}{\thesection.\arabic{figure}}
\setcounter{figure}{0}
\setcounter{theorem}{0}

\begin{center}
\large{Supplementary Material for ``A PAC-Bayesian approach to generalization for quantum models''
}
\end{center}
\counterwithin{figure}{section}

\section{Perturbation bounds}
\label{a:perturbation-bounds}
In this appendix, we provide the formal proofs for the perturbation bounds that quantify how the output of a quantum model deviates under fluctuations of its parameters. A central challenge in establishing non-uniform guarantees for quantum machine learning (QML) is ensuring that the model's output remains stable within a specified margin when the learned weights are perturbed.

To address this, we employ a unified proof strategy: for each mathematical representation---be it the Process Matrix (PM), Pauli Transfer Matrix (PTM), or the equivariant formalism---we relate the sensitivity of layer $j+1$ to the accumulated perturbation of the previous layer $j$. This recursive approach allows us to solve a general recurrence relation for the entire layered structure.
We begin by establishing a general template (\Cref{lemma:pert_bound_template}) that formalizes this recursive logic for arbitrary layered channels. We then specialize this template to each of the three representations discussed in the main text to yield the final sensitivity bounds.

\subsection{A template for perturbation bounds of layered parameterized channels}

\begin{lemma}[Perturbation bound template for layered parameterized channels]\label{lemma:pert_bound_template}
    Let $\lVert\,\cdot\,\rVert_\ast$ denote an arbitrary matrix norm, and let $L\in\bbN_{\geq0}$ denote the number of layers.
    Let $\calC^{\ww}=\bigcirc_{\ell=1}^L \calC_\ell^{W_\ell}$ be an arbitrary $L$-layer quantum channel paramterized by the matrices $\ww=(W_\ell)_{\ell=1}^L$, and for a fixed paramterized quantum state $\rho(\xx)$, define a function: $f_{\ww}:\calX\to\bbR^K$ as $[f_{\ww}(x)]_k\coloneqq\tr\{\Pi_k\calC^{\ww}[\rho(\xx)]\}$, for $k\in\{1,\ldots,K\}$, where $\Pi_k$ are the effects of a POVM.
    For each layer, let $\uu=(U_\ell)_{\ell=1}^L$ be perturbation matrices fulfilling $\lVert U_\ell\rVert_\ast\leq\frac{1}{L}\lVert W_\ell\rVert_\ast$.
    Let $\rho_j^{\ww}(x) \coloneqq \bigcirc_{\ell=1}^j \calC_\ell^{W_\ell}$ be the intermediate state after $j$ layers, and let $\Delta_j\coloneqq \lVert\rho_j^{\ww+\uu}(x)-\rho_j^{\ww}(x)\rVert_1$ quantify the effect of the perturbation after $j$ layers.
    Suppose that $\calC^{\ww}$ fulfills the following assumptions, for $\mu_j,\nu_j$ real-valued constants independent of $W_j$ and $U_j$:
    \begin{align}
        \left\lVert \calC_{j+1}^{W_{j+1}+U_{j+1}}\left(\rho^{\ww+\uu}_{j}(\xx)-\rho^{\ww}_{j}(\xx)\right)\right\rVert_1 &\leq \mu_{j+1}\left(\lVert W_{j+1}\rVert_\ast+\lVert U_{j+1}\rVert_\ast\right)\Delta_j, \tag{Assumption 1} \label{eq:assumption1}\\
        \left\lVert\left(\calC_{j+1}^{W_{j+1}+U_{j+1}}-\calC_{j+1}^{W_{j+1}}\right)\left(\rho_j^{\ww+\uu}(\xx)\right)\right\rVert_1 &\leq \nu_{j+1}\lVert U_{j+1}\rVert_\ast. \tag{Assumption 2} \label{eq:assumption2}
    \end{align}
    Then the effect of the perturbation on the output of the function fulfills the following bound:
    \begin{align}
        \lVert f_{\ww+\uu}(\xx)-f_{\ww}(\xx)\rVert_\infty \leq e\sum_{i=1}^L\left[\nu_i\lVert U_i\rVert_\ast\left(\prod_{\ell=i+1}^L\mu_\ell\lVert W_\ell\rVert_\ast\right)\right].
    \end{align}
\end{lemma}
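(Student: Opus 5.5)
The plan is to derive a one-step recurrence for $\Delta_j$, unroll it, and then convert the trace-norm bound on the final state into a bound on the output vector. First, since $\Delta_0=0$ (both models start from the same $\rho(\xx)$), write
\begin{equation}
\rho_{j+1}^{\ww+\uu}-\rho_{j+1}^{\ww}
=\calC_{j+1}^{W_{j+1}+U_{j+1}}\left(\rho_j^{\ww+\uu}-\rho_j^{\ww}\right)
+\left(\calC_{j+1}^{W_{j+1}+U_{j+1}}-\calC_{j+1}^{W_{j+1}}\right)\left(\rho_j^{\ww}\right).
\end{equation}
This uses linearity of the channel and adds and subtracts $\calC_{j+1}^{W_{j+1}+U_{j+1}}(\rho_j^{\ww})$. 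The second term is evaluated at $\rho_j^{\ww}$, whereas Assumption 2 is phrased at $\rho_j^{\ww+\uu}$. So I would instead split the difference as $\calC^{W+U}_{j+1}(\rho^{\ww+\uu}_j)-\calC^{W}_{j+1}(\rho^{\ww+\uu}_j)+\calC^{W}_{j+1}(\rho^{\ww+\uu}_j-\rho^{\ww}_j)$. That form uses Assumption 2 directly, but it needs Assumption 1 for the unperturbed channel. The cleanest way through is to take the first decomposition and read Assumption 2 as holding for any state (it is a state-independent bound in all applications). If one insists on the literal statement, the fallback is to apply it to the state it names and handle the mismatch by the triangle inequality. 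In either case the triangle inequality together with Assumptions 1 and 2 gives
\begin{equation}
\Delta_{j+1}\le a_{j+1}\Delta_j+b_{j+1},\qquad a_{j+1}:=\mu_{j+1}\left(\lVert W_{j+1}\rVert_\ast+\lVert U_{j+1}\rVert_\ast\right),\quad b_{j+1}:=\nu_{j+1}\lVert U_{j+1}\rVert_\ast .
\end{equation}

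Next, I would unroll the recurrence by induction from $\Delta_0=0$:
\begin{equation}
\Delta_L\le\sum_{i=1}^L b_i\prod_{\ell=i+1}^L a_\ell .
\end{equation}
The hypothesis $\lVert U_\ell\rVert_\ast\le\frac1L\lVert W_\ell\rVert_\ast$ gives $a_\ell\le(1+\frac1L)\mu_\ell\lVert W_\ell\rVert_\ast$. Each product has at most $L$ factors, so it is bounded by $(1+1/L)^{L}\prod_\ell\mu_\ell\lVert W_\ell\rVert_\ast$, and $(1+1/L)^L\le e$. To keep the inequalities monotone I need $\mu_\ell\ge 0$. This is implicit, because Assumption 1 bounds a nonnegative quantity, so $\mu_\ell$ may be taken nonnegative whenever the right-hand side is nontrivial; I would state this as a remark.

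Finally, I would pass to the output. For each $k$, $|[f_{\ww+\uu}(\xx)-f_{\ww}(\xx)]_k|=|\tr\{\Pi_k(\rho_L^{\ww+\uu}-\rho_L^{\ww})\}|\le\lVert\Pi_k\rVert_\infty\Delta_L\le\Delta_L$. This follows from Hölder's inequality and $0\le\Pi_k\le\mathbb{I}$. Taking the maximum over $k$ yields the claim.

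I expect the main obstacle to be the bookkeeping in the first step, namely matching the decomposition of the one-step difference to the exact states appearing in the two assumptions. The remaining steps are a routine induction plus the elementary bound $(1+1/L)^L\le e$.
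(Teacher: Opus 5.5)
Your main argument is correct and is essentially the paper's proof: the same split into $\calC_{j+1}^{W_{j+1}+U_{j+1}}(\rho_j^{\ww+\uu}-\rho_j^{\ww})$ plus $(\calC_{j+1}^{W_{j+1}+U_{j+1}}-\calC_{j+1}^{W_{j+1}})(\rho_j^{\ww})$, the same unrolling with $(1+1/L)^{L-j}\le e$, and the same H\"older step with $\lVert\Pi_k\rVert_\infty\le 1$. The mismatch you noticed is also in the paper, which silently applies Assumption~2 at $\rho_j^{\ww}$. That is the right reading: the PM, PTM and equivariant verifications only use $\lVert\rho\rVert_1\le 1$, which holds for the genuine state $\rho_j^{\ww}$, whereas $\rho_j^{\ww+\uu}$ need not be a state once perturbed layers leave the CPTP set.

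Drop the ``fallback'', though. The triangle inequality cannot move the Assumption~2 bound from $\rho_j^{\ww+\uu}$ to $\rho_j^{\ww}$ without controlling $\calC_{j+1}^{W_{j+1}+U_{j+1}}-\calC_{j+1}^{W_{j+1}}$ on the traceless difference $\rho_j^{\ww+\uu}-\rho_j^{\ww}$. Your alternative split instead needs an Assumption-1-type bound for the unperturbed layer, which is not given. Using that layer's trace-norm contractivity gives only $\Delta_{j+1}\le\Delta_j+\nu_{j+1}\lVert U_{j+1}\rVert_\ast$, which does not imply the stated bound when $\mu_\ell\lVert W_\ell\rVert_\ast<1$, so ``in either case'' is not justified.
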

\begin{proof}
    We prove the statement directly, as a chain of bounds.

    Let us call $\Pi_k=\lvert k\rangle\!\langle k\rvert$ the measurement outcome corresponding to the $k^\text{th}$ output of $f_{\ww}$.
    Then, the effect of the perturbation on the function output relates to the the effect of the perturbation on the output quantum state:
    \begin{align}
        \lVert f_{\ww+\uu}(\xx)-f_{\ww}(\xx)\rVert_\infty &=\max_{k}\left\lvert\tr\left[\Pi_k\left(\rho^{\ww+\uu}_L(x)-\rho_L^{\ww}(\xx)\right)\right]\right\rvert\\
        &\leq\max_k\lVert\Pi_k\rVert_\infty\left\lVert\rho_L^{\ww+\uu}(\xx)-\rho_L^{\ww}(\xx)\right\rVert_1\\
        &\leq\Delta_L.
    \end{align}
    It is therefore sufficient to upper bound $\Delta_L$.
    For a layer $j$, we relate $\Delta_{j+1}$ to $\Delta_j$ by adding and subtracting a new term in the definition and exploiting the triangular inequality:
    \begin{align}
        \Delta_{j+1} &\coloneqq\left\lVert\calC^{W_{j+1}+U_{j+1}}_{j+1}\left(\rho_j^{\ww+\uu}(\xx)\right) - \calC_{j+1}^{W_{j+1}}\left(\rho_j^{\ww}(\xx)\right)\right\rVert_1 \\
        &= \left\lVert\calC^{W_{j+1}+U_{j+1}}_{j+1}\left(\rho_j^{\ww+\uu}(\xx)\right) - \calC^{W_{j+1}+U_{j+1}}_{j+1}\left(\rho_j^{\ww}(\xx)\right) + \calC^{W_{j+1}+U_{j+1}}_{j+1}\left(\rho_j^{\ww}(\xx)\right) - \calC_{j+1}^{W_{j+1}}\left(\rho_j^{\ww}(\xx)\right)\right\rVert_1 \\
        &\leq \left\lVert\calC^{W_{j+1}+U_{j+1}}_{j+1}\left(\rho_j^{\ww+\uu}(\xx) - \rho_j^{\ww}(\xx)\right)\right\rVert_1 + \left\lVert\left(\calC^{W_{j+1}+U_{j+1}}_{j+1} - \calC_{j+1}^{W_{j+1}}\right)\left(\rho_j^{\ww}(\xx)\right)\right\rVert_1. \\
    \end{align}
    We recognize the two terms match those in the Assumptions of the statement of the Lemma, and hence we can directly upper-bound their sum accordingly:
    \begin{align}
        \Delta_{j+1} &\leq \nu_{j+1}\lVert U_{j+1}\rVert_\ast + \mu_{j+1}\left(\lVert W_{j+1}\rVert_\ast+\lVert U_{j+1}\rVert_\ast\right)\Delta_j.
    \end{align}
    We next extract the general form for this recursive formula, and unroll it up to its $L^\text{th}$ term:
    \begin{align}
        \Delta_L &\leq \sum_{j=1}^L\left[\nu_j \lVert U_j\rVert_\ast \left(\prod_{\ell=j+1}^L\mu_\ell \left(\lVert W_\ell\rVert_\ast+\lVert U_\ell\rVert_\ast\right)\right)\right] \\
        &\leq \sum_{j=1}^L\left[\nu_j \lVert U_j\rVert_\ast \left(\prod_{\ell=j+1}^L\mu_\ell \left(\lVert W_\ell\rVert_\ast+\frac{1}{L}\lVert W_\ell\rVert_\ast\right)\right)\right] \\
        &\leq \sum_{j=1}^L\left[\nu_j \lVert U_j\rVert_\ast \left(\prod_{\ell=j+1}^L\mu_\ell \lVert W_\ell\rVert_\ast\left(1+\frac{1}{L}\right)\right)\right] \\
        &\leq \sum_{j=1}^L\left[\nu_j \lVert U_j\rVert_\ast\left(1+\frac{1}{L}\right)^{L-j} \left(\prod_{\ell=j+1}^L\mu_\ell \lVert W_\ell\rVert_\ast\right)\right] \\
        &\leq \left(1+\frac{1}{L}\right)^{L}\sum_{j=1}^L\left[\nu_j \lVert U_j\rVert_\ast \left(\prod_{\ell=j+1}^L\mu_\ell \lVert W_\ell\rVert_\ast\right)\right] \\
        &\leq e\sum_{j=1}^L\left[\nu_j \lVert U_j\rVert_\ast \left(\prod_{\ell=j+1}^L\mu_\ell \lVert W_\ell\rVert_\ast\right)\right]. \\
    \end{align}
    This completes the proof.
\end{proof}

\subsection{Three perturbation bounds for different channel representations}

\subsubsection{Perturbation bound in the PM formalism}

We begin with our perturbation bound in the PM formalism. Recall that here, the quantum model takes the form 
\begin{equation}
    f_{\vec{w}}^{\mathrm{PM}}(\vec{x})[k] = \mathrm{Tr}(\Pi_k \rho_{\mathrm{\mathrm{out}}}^{\mathrm{PM}}(\vec{x}))\, ,
\end{equation}
where the output state is $\rho_{\mathrm{\mathrm{out}}}^{\mathrm{PM}}(\vec{x}) = \phi_L \circ \cdots \circ \phi_1\left(\rho(\vec{x})\right)$, with each $\phi_j = \phi_{\mathrm{PM}}^{W_j}$ given as a quantum channel characterized by the deviation matrix $W_j$ from the PM of the maximally depolarizing channel. We express our results in 
terms of the $\norm{.}_{1,1}$-norm,
which is a perfectly valid matrix norm, but not a unitarily invariant one.

\begin{lemma}[Perturbation bound for quantum models -- PM framework (restatement of \Cref{lemma:pert_bound_PM})]\label{a-lemma:pert_bound_PM}
        For any $L>0$, let $f_{\vec{w}}^{\mathrm{PM}}: \calX \to \bbR^K$ be an $L$-layer quantum model parameterized by $\vec{w}=\mathrm{vec}(\{W_j\}_{j=1}^L)$. For any $\vec{x} \in \calX$, and any perturbation $\vec{u}=\mathrm{vec}\left(\{U_j\}_{j=1}^L\right)$ such that $\norm{U_j}_{1,1} \leq \frac{1}{L} \norm{W_j}_{1,1}$ holds for all $j$, the change in the output of the quantum model is bounded as
\begin{equation}
    \begin{split}
        \norm{f_{\vec{w}+\vec{u}}^{\mathrm{PM}}(\vec{x})-f_{\vec{w}}^{\mathrm{PM}}(\vec{x})}_\infty
        \leq e \sum_{j=1}^L\left[\norm{U_j}_{1,1}\left(\prod_{\ell=j+1}^L\norm{W_\ell}_{1,1}\right)\right]\, ,
    \end{split}
\end{equation}
where the norm $\norm{A}_{1,1}$ equals the sum of absolute values of all the entries of the matrix $A$.
\end{lemma}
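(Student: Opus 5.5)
The plan is to apply the template \Cref{lemma:pert_bound_template} with $\lVert\cdot\rVert_\ast=\norm{\cdot}_{1,1}$, $\calC_\ell^{W_\ell}=\phi_{\mathrm{PM}}^{W_\ell}$ and $\mu_j=\nu_j=1$. The claimed bound then follows immediately from the template's conclusion. So everything reduces to verifying \ref{eq:assumption1} and \ref{eq:assumption2} for the PM parameterization. Both rest on a single elementary estimate: for any operator $X$ and Paulis $\sigma_A,\sigma_B$, unitarity gives $\norm{\sigma_A X\sigma_B}_1=\norm{X}_1$.

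\textbf{Assumption 2.} The map $\phi_{\mathrm{PM}}^{W+U}-\phi_{\mathrm{PM}}^{W}$ acts as $\rho\mapsto\sum_{A,B}U(A,B)\,\sigma_A\rho\sigma_B$, because the depolarizing baseline cancels. By the triangle inequality, for a state $\rho$ (with $\norm{\rho}_1=1$) we get
\begin{equation}
\norm{(\phi_{\mathrm{PM}}^{W+U}-\phi_{\mathrm{PM}}^{W})(\rho)}_1\le\sum_{A,B}|U(A,B)|\,\norm{\sigma_A\rho\sigma_B}_1=\norm{U}_{1,1}.
\end{equation}
This gives $\nu_{j+1}=1$. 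Note that the template's proof actually uses this estimate at $\rho_j^{\ww}$ rather than $\rho_j^{\ww+\uu}$. That does not matter, since the bound holds for every density operator.

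\textbf{Assumption 1.} This is the step needing care. The difference $X=\rho_j^{\ww+\uu}-\rho_j^{\ww}$ is traceless. The baseline part of $\phi_{\mathrm{PM}}^{W+U}$ is $\frac{1}{4^n}\sum_A\sigma_A X\sigma_A=\tr(X)\,\mathbb{I}/2^n$, the maximally depolarizing channel, so it annihilates $X$. Hence $\phi_{\mathrm{PM}}^{W+U}(X)=\sum_{A,B}(W+U)(A,B)\,\sigma_A X\sigma_B$. The same triangle-inequality estimate then yields
\begin{equation}
\norm{\phi_{\mathrm{PM}}^{W+U}(X)}_1\le\norm{W+U}_{1,1}\norm{X}_1\le(\norm{W}_{1,1}+\norm{U}_{1,1})\Delta_j,
\end{equation}
so $\mu_{j+1}=1$.

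The main obstacle is Assumption 1. A naive bound keeping the baseline term would add a constant and break the multiplicative structure, so the crucial observation is that tracelessness of $X$ kills the depolarizing contribution via the Pauli twirl identity. One also has to make sure the perturbed map is evaluated only on differences of states, so tracelessness really holds. This is true because $\rho_j^{\ww}$ and $\rho_j^{\ww+\uu}$ are both outputs of trace-preserving maps; indeed the PM maps are trace preserving for any admissible $W$, and the same trace argument applies to the perturbed maps. With both assumptions verified, \Cref{lemma:pert_bound_template} with $\mu=\nu=1$ gives exactly the stated bound.
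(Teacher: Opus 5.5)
Your proposal is correct and is essentially the paper's own proof. The paper likewise invokes the template with $\norm{\cdot}_{1,1}$ and $\mu_j=\nu_j=1$, uses the Pauli twirl to kill the depolarizing baseline on the traceless difference for Assumption~1, and uses the triangle inequality with $\norm{\sigma_A X\sigma_B}_1=\norm{X}_1$ for both assumptions.

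One point should be phrased more carefully, and the paper glosses it in the same way. Tracelessness of $\rho_j^{\vec{w}+\vec{u}}(\vec{x})-\rho_j^{\vec{w}}(\vec{x})$ is a hypothesis, not a consequence. It holds only if each perturbed layer $\phi_{\mathrm{PM}}^{W_j+U_j}$ is itself trace preserving, i.e.\ $U_j$ satisfies the linear constraint $\sum_{A,B}U_j(A,B)\,\sigma_B\sigma_A=0$. For arbitrary $U_j$ the lemma fails: take $L=2$, $W_2=U_2=0$, and $U_1$ supported on the $(\mathbb{I},\mathbb{I})$ entry. That $U_1$ rescales the trace, so the output changes by a nonzero amount while the right-hand side vanishes.
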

\begin{proof}
    We prove this result by invoking Lemma~\ref{lemma:pert_bound_template}, with $\phi_\text{PM}^{W_j}=\calC_j^{W_j}$.
    We must only show that the PM parameterization fulfills the assumptions, taking the matrix $1$-norm, and with $\mu_j=\nu_j=1$.
    
    First, recall the Pauli twirling identity 
    \begin{align}
        \frac{1}{4^n} \sum_A \sigma_A X \sigma_A = \frac{\Id}{2^n}\tr(X),
    \end{align}
    which holds for all operators $X$. In particular, if $X$ is traceless, as is the case for $\rho_j^{\vec{w}+\vec{u}}(\vec{x})-\rho^{\vec{w}}_j(\vec{x})$ (or any two density matrices), it follows that $(1/4^n) \sum_A \sigma_A X \sigma_A =0$.
    Therefore, we have 
    \begin{align}
        \norm{\phi_\text{PM}^{W_{j+1}+U_{j+1}}(X)}_1 &= \norm{\sum_{A,B} \left(\frac{\delta_{A,B}}{4^n} + W_{j+1}(A,B) + U_{j+1}(A,B) \right) \sigma_A X \sigma_B}_1 \\
        &= \norm{\frac{1}{4^n} \sum_A \sigma_A X\sigma_A + \sum_{A,B} \left(W_{j+1}(A,B) + U_{j+1}(A,B) \right) \sigma_A X \sigma_B}_1 \\
        &= \norm{\sum_{A,B} \left(W_{j+1}(A,B) + U_{j+1}(A,B) \right) \sigma_A X \sigma_B}_1 \\
        &\leq \sum_{A,B} \left| W_{j+1}(A,B) + U_{j+1}(A,B) \right| \norm{\sigma_A X \sigma_B }_1 \\
        &\leq \left( \norm{W_{j+1}}_{1,1} + \norm{U_{j+1}}_{1,1} \right) \norm{X}_1 \label{eq:app_PM_2}.
    \end{align}
    For the last line, we have used the unitary invariance of the 
    trace norm, i.e., 
    $\norm{\sigma_A X \sigma_B}_1 = \norm{X}_1$. Setting $X = \rho_j^{\vec{w}+\vec{u}}(\vec{x})-\rho^{\vec{w}}_j(\vec{x})$, we obtain \ref{eq:assumption1}:
    \begin{align}
        \norm{\phi_\text{PM}^{W_{j+1}+U_{j+1}}\left(\rho_j^{\vec{w}+\vec{u}}(\vec{x})-\rho^{\vec{w}}_j(\vec{x})\right)}_1 &\leq \left( \norm{W_{j+1}}_{1,1} + \norm{U_{j+1}}_{1,1} \right) \Delta_j.
    \end{align}
    We readily recover \ref{eq:assumption2} as well:
    \begin{align}
        \norm{\left(\phi_\text{PM}^{W_{j+1}+U_{j+1}}-\phi_\text{PM}^{W_{j+1}} \right) (\rho_j^{\vec{w}}(\vec{x}))}_1 &=
        \left\|\sum_{A,B} U_{j+1}(A,B)\,\sigma_A \rho_j^{\vec{w}}(\vec{x}) \sigma_B\right\|_1\\
        &\le \sum_{A,B} |U_{j+1}(A,B)|\,\|\sigma_A \rho_j^{\vec{w}}(\vec{x}) \sigma_B\|_1\\
        &= \sum_{A,B} |U_{j+1}(A,B)|\,\|\rho_j^{\vec{w}}(\vec{x})\|_1\\
        &= \|U_{j+1}\|_{1,1}. \label{eq:app_PM_3}
    \end{align}
    These, together with Lemma~\ref{lemma:pert_bound_template}, complete the proof.
\end{proof}

\subsubsection{Perturbation bound in the PTM formalism}

Next, we derive the perturbation bound in the PTM formalism. Here, we work with
\begin{equation}
    f_{\vec{w}}^{\mathrm{PTM}}(\vec{x})[k] = \mathrm{Tr}(\Pi_k \rho_{\mathrm{\mathrm{out}}}^{\mathrm{PTM}}(\vec{x}))\, ,
\end{equation}
where the output state is $\rho_{\mathrm{\mathrm{out}}}^{\mathrm{PM}}(\vec{x}) = \phi_L \circ \cdots \circ \phi_1\left(\rho(\vec{x})\right)$, with each $\phi_j = \phi_{\mathrm{PTM}}^{W_j}$ given as a quantum channel characterized by the deviation matrix $W_j$ from the PTM of a maximally depolarizing channel. 

\begin{lemma}[Perturbation bound for quantum models -- PTM framework (restatement of \Cref{lemma:pert_bound_PTM})]\label{a-lemma:pert_bound_PTM}
        For any $L>0$, let $f_{\vec{w}}^{\mathrm{PTM}}: \calX \to \bbR^K$ be an $L$-layer quantum model parameterized by $\vec{w}=\mathrm{vec}(\{W_j\}_{j=1}^L)$. For any $\vec{x} \in \calX$, and any perturbation $\vec{u}=\mathrm{vec}\left(\{U_j\}_{j=1}^L\right)$ such that $\norm{U_j}_{1,1} \leq \frac{1}{L} \norm{W_j}_{1,1}$ holds for all $j$, the change in the output of the quantum model is bounded as
\begin{equation}
    \begin{split}
        \norm{f_{\vec{u}+\vec{w}}^{\mathrm{PTM}}(\vec{x})-f_{\vec{w}}^{\mathrm{PTM}}(\vec{x})}_\infty
        \leq e\sqrt{d_{\mathrm{\mathrm{out}}}^{(L)}}\sum_{j=1}^L\left[\frac{1}{\sqrt{d_{\mathrm{in}}^{(j)}}}\norm{U_j}_{1,1}\left(\prod_{\ell=j+1}^L\norm{W_\ell}_{1,1}\right)\right]\, ,
    \end{split}
\end{equation}
where the norm $\norm{A}_{1,1}$ equals the sum of absolute values of all the entries of the matrix $A$.
\end{lemma}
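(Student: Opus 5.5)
The plan is to invoke Lemma~\ref{lemma:pert_bound_template} with $\calC_j^{W_j}=\phi_{\mathrm{PTM}}^{W_j}$, the norm $\lVert\cdot\rVert_\ast=\lVert\cdot\rVert_{1,1}$, and layer constants
\[
\mu_j=\nu_j=\sqrt{d_{\mathrm{out}}^{(j)}/d_{\mathrm{in}}^{(j)}}.
\]
The template then gives
\[
e\sum_j \nu_j\lVert U_j\rVert_{1,1}\prod_{\ell>j}\mu_\ell\lVert W_\ell\rVert_{1,1}.
\]
Since $d_{\mathrm{out}}^{(\ell)}=d_{\mathrm{in}}^{(\ell+1)}$, the factor $\nu_j\prod_{\ell=j+1}^L\mu_\ell$ telescopes to $\sqrt{d_{\mathrm{out}}^{(L)}/d_{\mathrm{in}}^{(j)}}$. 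This is exactly the prefactor in the statement, so everything reduces to verifying the two assumptions.

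For \ref{eq:assumption1}, let $X=\rho_j^{\vec w+\vec u}(\vec x)-\rho_j^{\vec w}(\vec x)$, which is traceless. Then $\mathrm{Tr}(\sigma_0 X)=0$, so the baseline term $\delta_{A,0}\delta_{B,0}\sqrt{d_{\mathrm{in}}/d_{\mathrm{out}}}$ does not contribute. This leaves
\[
\phi_{\mathrm{PTM}}^{W+U}(X)=\frac{1}{\sqrt{d_{\mathrm{in}}d_{\mathrm{out}}}}\sum_{A,B}(W+U)(A,B)\,\mathrm{Tr}(\sigma_B X)\,\sigma_A .
\]
Apply the triangle inequality termwise, together with three facts:
\[
\lvert\mathrm{Tr}(\sigma_B X)\rvert\le\lVert\sigma_B\rVert_\infty\lVert X\rVert_1=\lVert X\rVert_1,\qquad \lVert\sigma_A\rVert_1=d_{\mathrm{out}},\qquad \lVert W+U\rVert_{1,1}\le\lVert W\rVert_{1,1}+\lVert U\rVert_{1,1}.
\]
This yields the bound $\tfrac{d_{\mathrm{out}}}{\sqrt{d_{\mathrm{in}}d_{\mathrm{out}}}}(\lVert W\rVert_{1,1}+\lVert U\rVert_{1,1})\Delta_j=\sqrt{d_{\mathrm{out}}/d_{\mathrm{in}}}\,(\lVert W\rVert_{1,1}+\lVert U\rVert_{1,1})\Delta_j$, as required.

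For \ref{eq:assumption2}, the difference of the two channels is $\frac{1}{\sqrt{d_{\mathrm{in}}d_{\mathrm{out}}}}\sum_{A,B}U(A,B)\mathrm{Tr}(\sigma_B\rho)\sigma_A$. The same estimate, now with $\lvert\mathrm{Tr}(\sigma_B\rho)\rvert\le\lVert\rho\rVert_1=1$ for a density matrix $\rho$, gives $\sqrt{d_{\mathrm{out}}/d_{\mathrm{in}}}\,\lVert U\rVert_{1,1}$.

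Neither step is a real obstacle; the computation mirrors the PM case, with Hölder's inequality replacing unitary invariance. The points that need care are the following.
\begin{itemize}
\item Using $\lVert\sigma_A\rVert_1=d_{\mathrm{out}}$ is crude, but it is precisely what produces the stated dimension factors.
\item The template lemma applies the perturbation difference to $\rho_j^{\vec w}$ in its proof and to $\rho_j^{\vec w+\vec u}$ in its statement. This is harmless, since the bound holds for any density matrix.
\item Finally, the telescoping of the $\mu_\ell$ must be checked against the convention $d_{\mathrm{out}}^{(j)}=d_{\mathrm{in}}^{(j+1)}$.
\end{itemize}
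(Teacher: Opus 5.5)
Your proposal is correct and follows the paper's proof essentially step for step. Like the paper, you invoke the template Lemma~\ref{lemma:pert_bound_template} with $\lVert\cdot\rVert_{1,1}$ and $\mu_j=\nu_j=\sqrt{d_{\mathrm{out}}^{(j)}/d_{\mathrm{in}}^{(j)}}$, verify both assumptions via the triangle inequality, H\"older's inequality and $\lVert\sigma_A\rVert_1=d_{\mathrm{out}}$, and telescope the dimension factors using $d_{\mathrm{out}}^{(j)}=d_{\mathrm{in}}^{(j+1)}$. Your remarks that the identity-baseline term drops out because $X$ is traceless (which the paper uses only implicitly), and that the $\rho_j^{\vec w}$ versus $\rho_j^{\vec w+\vec u}$ mismatch in the template is harmless, are both accurate.
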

\begin{proof}
    The proof mirrors the approach from Lemma~\ref{a-lemma:pert_bound_PM}, just replacing the PM for the PTM parameterization.
    We explicitly confirm the Assumptions, again taking the matrix $1$-norm, but this time with $\mu_j=\nu_j=\sqrt{d_\text{out}^{(j)}/d_\text{in}^{(j)}}$, where $d_\text{out}^{(j)}$ and $d_\text{in}^{(j)}$ correspond to the dimensions of the output and input states for the $j^\text{th}$-layer channel.
    Recall, as before, that we assume $d_\text{out}^{(j)}=d_\text{in}^{(j+1)}$.
    First, let us call $X\coloneqq \left(\rho_j^{\ww+\uu}(\xx)-\rho_j^{\ww}(\xx)\right)$, then
    \begin{align}
        \left\lVert\phi_{\mathrm{PTM}}^{W_{j+1}+U_{j+1}}(X)\right\rVert_1 &= \frac{1}{\sqrt{d_{\mathrm{in}}^{(j+1)}d_{\mathrm{out}}^{(j+1)}}} \left\lVert\sum_{A,B} \left(W_{j+1}(A,B)+U_{j+1}(A,B)\right)\tr\left[\sigma_B^{(j+1)}X\right]\sigma_A^{(j+1)}\right\rVert_1 \\
          \nonumber
        &\leq \frac{1}{\sqrt{d_{\mathrm{in}}^{(j+1)}d_{\mathrm{out}}^{(j+1)}}} \sum_{A,B}\lvert W_{j+1}(A,B)+U_{j+1}(A,B)\rvert\left\lvert\tr\left[\sigma_B^{(j+1)}X\right]\right\rvert \lVert\sigma_A^{(j+1)}\rVert_1 \\
          \nonumber
        &\leq \sqrt{\frac{d_{\mathrm{out}}^{(j+1)}}{d_{\mathrm{in}}^{(j+1)}}}\left(\lVert W_{j+1}\rVert_{1,1} + \lVert U_{j+1}\rVert_{1,1}\right) \lVert X\rVert_1 \\
          \nonumber
        &= \sqrt{\frac{d_{\mathrm{out}}^{(j+1)}}{d_{\mathrm{in}}^{(j+1)}}}\left(\lVert W_{j+1}\rVert_{1,1} + \lVert U_{j+1}\rVert_{1,1}\right) \Delta_j.
          \nonumber
    \end{align}
    In the second line we have used the sub-multiplicativity property and the triangle inequality, and in the third line we have used both the triangle and H\"older's inequalities.
    In the fourth line, we have used the definition of $\Delta_j$ in terms of $X$.
    Next, we have, for any $\rho$
    \begin{align}
        \left\lVert \left(\phi_{\mathrm{PTM}}^{W_{j+1}+U_{j+1}}-\phi_{\mathrm{PTM}}^{W_{j+1}}\right)\left(\rho\right)\right\rVert_1 &\coloneqq \frac{1}{\sqrt{d_{\mathrm{in}}^{(j+1)}d_{\mathrm{out}}^{(j+1)}}}\left\lVert\sum_{A,B} U_{j+1}(A,B) \tr\left[\sigma_B^{(j+1)}\rho\right]\sigma_A^{(j+1)}\right\rVert_1 \\
          \nonumber
        &\leq \frac{1}{\sqrt{d_{\mathrm{in}}^{(j+1)}d_{\mathrm{out}}^{(j+1)}}}\sum_{A,B}\lvert U_{j+1}(A,B)\rvert\left\lvert\tr\left[\sigma_B^{(j+1)}\rho\right]\right\rvert\lVert\sigma_A^{(j+1)}\rVert_1 \\
          \nonumber
        &\leq \sqrt{\frac{d_{\mathrm{out}}^{(j+1)}}{d_{\mathrm{in}}^{(j+1)}}}\lVert U_{j+1}\rVert_{1,1}\lVert\rho\rVert_1 \\
          \nonumber
        &\leq \sqrt{\frac{d_{\mathrm{out}}^{(j+1)}}{d_{\mathrm{in}}^{(j+1)}}}\lVert U_{j+1}\rVert_{1,1}.
          \nonumber
    \end{align}
    This bound is true in particular for $\rho=\rho_j^{\ww+\uu}(\xx)$ as appears in the bound.
    Invoking Lemma~\ref{lemma:pert_bound_template}, we obtain
    \begin{align}
        \lVert f_{\uu+\ww}^{\mathrm{PTM}}(\xx) - f_{\ww}^{\mathrm{PTM}}(\xx)\rVert_\infty &\leq e\sum_{j=1}^L\left[\sqrt{\frac{d_{\mathrm{out}}^{(j)}}{d_{\mathrm{in}}^{(j)}}}\lVert U_j\rVert_{1,1} \prod_{\ell=j+1}^L \left(\sqrt{\frac{d_{\mathrm{out}}^{(\ell)}}{d_{\mathrm{in}}^{(\ell)}}} \lVert W_\ell\rVert_{1,1}\right)\right].
    \end{align}
    By inserting the condition $d_\text{out}^{(j)}=d_\text{in}^{(j+1)}$, we can simplify several of the multiplicative factors:
    \begin{align}
        \lVert f_{\uu+\ww}^{\mathrm{PTM}}(\xx) - f_{\ww}^{\mathrm{PTM}}(\xx)\rVert_\infty &\leq e\sqrt{d_{\mathrm{\mathrm{out}}}^{(L)}}\sum_{j=1}^L\left[\frac{1}{\sqrt{d_{\mathrm{in}}^{(j)}}}\norm{U_j}_{1,1}\left(\prod_{\ell=j+1}^L\norm{W_\ell}_{1,1}\right)\right]\,,
    \end{align}
    thus completing the proof.
\end{proof}

\subsubsection{Perturbation bound for equivariant quantum models}

Finally, we turn our attention to the equivariant case. Here, the model output is defined as
\begin{equation}
    f_{\vec{w}}^{\mathrm{eq}}(\vec{x})[k] = \mathrm{Tr}\big(\Pi_k \rho_{\mathrm{\mathrm{out}}}^{\mathrm{eq}}(\vec{x})\big)\,,
\end{equation}
where the output state is given by the composition of layers $\rho_{\mathrm{\mathrm{out}}}^{\mathrm{eq}}(\vec{x}) = \phi_L \circ \cdots \circ \phi_1\left(\rho(\vec{x})\right)$. The global parameter vector $\vec{w}$ collects the parameter matrices for all layers.

Each layer $\phi_j$ is an equivariant quantum channel parameterized by a set of deviation matrices $W_j = \{ W_{j,\lambda} \}_{\lambda}$, indexed by the irreducible representations (irreps) $\lambda$ of the symmetry group. In the irrep basis, the Choi operator of $\phi_j$ is block-diagonal. Crucially, due to Schur's Lemma, each block decomposes as $J_{j,\lambda} \propto \mathbb{I}_{d_\lambda} \otimes (J_{0,j,\lambda} + W_{j,\lambda})$, meaning it acts as the identity on the $d_\lambda$-dimensional irrep space and is fully specified by the matrix $W_{j,\lambda}$ acting on the multiplicity space.

\begin{lemma}[Equivariant perturbation bound (restatement of \Cref{lemma:pert_bound_equivariant})]\label{a-lemma:pert_bound_equivariant}
For any $L > 0$, let $f_{\vec{w}}^{\mathrm{eq}}: \calX \to \bbR^K$ be an $L$-layer equivariant quantum model parameterized by $\vec{w}$ representing the set of parameter matrices $\{W_{j,\lambda}\}_{j,\lambda}$. For any $\vec{x} \in \mathcal{X}$ and any perturbation $\vec{u}$ representing $\{U_{j,\lambda}\}_{j,\lambda}$ such that $\|U_{j,\lambda}\|_{1} \le \frac{1}{L} \|W_{j,\lambda}\|_{1}$ holds for all $j$ and $\lambda$, the change in the output of the quantum model is bounded as
\begin{equation}
    \norm{f_{\vec{w}+\vec{u}}^{\mathrm{eq}}(\vec{x}) - f_{\vec{w}}^{\mathrm{eq}}(\vec{x})}_\infty \le e \sum_{j=1}^{L} \left[ \norm{U_j}_{\mathrm{eq},1} \left( \prod_{\ell=j+1}^{L} \norm{W_\ell}_{\mathrm{eq},1} \right) \right]\,,
\end{equation}
where the equivariant 1-norm for a layer is defined as $\norm{V}_{\mathrm{eq},1} := \sum_{\lambda} d_\lambda \norm{V_{\lambda}}_1$, with $d_\lambda$ being the dimension of the irreducible representation $\lambda$ and $\norm{\cdot}_1$ being the trace norm.
\end{lemma}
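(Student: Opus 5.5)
Following the PM and PTM cases, the plan is to invoke Lemma~\ref{lemma:pert_bound_template} with $\calC_j^{W_j}=\phi_j$, the equivariant channel whose Choi operator is given by \eqref{eq:equivariant_param}. We take $\lVert\cdot\rVert_\ast=\lVert\cdot\rVert_{\mathrm{eq},1}$ and $\mu_j=\nu_j=1$. The per-$\lambda$ hypothesis $\|U_{j,\lambda}\|_1\le\frac1L\|W_{j,\lambda}\|_1$ implies $\|U_j\|_{\mathrm{eq},1}\le\frac1L\|W_j\|_{\mathrm{eq},1}$: multiply by $d_\lambda$ and sum over $\lambda$. This is exactly the hypothesis of the template. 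Its conclusion then gives the stated bound, so everything reduces to verifying \ref{eq:assumption1} and \ref{eq:assumption2}.

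\textbf{Decomposition into the baseline and the learned deviation.} Write $\phi_j^{W_j+U_j}=\phi_{0}+\Psi_{W_j+U_j}$. Here $\phi_0$ is the map whose Choi operator is $\bigoplus_\lambda \mathbb{I}_{d_\lambda}\otimes J_{0,\lambda}$, i.e.\ the maximally depolarizing channel $X\mapsto \tr(X)\,\mathbb{I}/d_{\mathrm{out}}$. The map $\Psi_V$ has Choi operator $J_V\simeq\bigoplus_\lambda\mathbb{I}_{d_\lambda}\otimes V_\lambda$ and is linear in $V$. Since $\rho_j^{\ww+\uu}-\rho_j^{\ww}$ is traceless, $\phi_0$ annihilates it, playing the role of the Pauli-twirl identity in the PM proof. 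Hence \ref{eq:assumption1} reduces to bounding $\|\Psi_{W+U}(X)\|_1$ for traceless $X$. \ref{eq:assumption2} is directly $\|\Psi_U(\rho)\|_1$ for a state $\rho$. Both follow from a single inequality,
\begin{equation}
\|\Psi_V(X)\|_1\le \|V\|_{\mathrm{eq},1}\,\|X\|_1,
\end{equation}
combined with the triangle inequality $\|W+U\|_{\mathrm{eq},1}\le\|W\|_{\mathrm{eq},1}+\|U\|_{\mathrm{eq},1}$.

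\textbf{Proving the key inequality.} We use $\Psi_V(X)=\tr_{\mathrm{in}}[J_V(X^{T}\otimes\mathbb{I})]$. First we establish the general bound $\|\Psi_V\|_{1\to1}\le\|J_V\|_1$. For this, write $J_V=\sum_i s_i\ket{a_i}\bra{b_i}$ via the singular value decomposition and reshape each rank-one term into a map $X\mapsto A_iXB_i^\dagger$ with $\|A_i\|_2\|B_i\|_2=1$. Hölder's inequality gives $\|A_iXB_i^\dagger\|_1\le\|A_i\|_\infty\|B_i\|_\infty\|X\|_1\le\|X\|_1$, and summing over $i$ gives $\sum_i s_i\|X\|_1=\|J_V\|_1\|X\|_1$. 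Second, we compute the trace norm of the block-diagonal Choi operator. Because the isotypic change of basis is unitary, $\|J_V\|_1=\sum_\lambda\|\mathbb{I}_{d_\lambda}\otimes V_\lambda\|_1=\sum_\lambda d_\lambda\|V_\lambda\|_1=\|V\|_{\mathrm{eq},1}$. This is precisely where the irrep dimensions $d_\lambda$ enter the bound.

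\textbf{Expected obstacle.} The delicate step is the Choi-to-map bound $\|\Psi_V\|_{1\to1}\le\|J_V\|_1$. It requires care with the vectorization convention and with whether $J$ is normalized, because the paper's baseline $J_{0,\lambda}=\mathbb{I}/d_{\mathrm{out}}$ fixes a convention. If the normalization differs, a dimension factor would appear in $\mu_j,\nu_j$. I would first check that the unnormalized convention makes $\phi_0$ exactly the depolarizing channel, which is the convention for which the constants are $1$. A second point to verify is that the baseline term really has the form $\tr(X)\,\mathbb{I}/d_{\mathrm{out}}$ in every sector, so that it vanishes on traceless inputs. Once both checks pass, applying Lemma~\ref{lemma:pert_bound_template} finishes the proof.
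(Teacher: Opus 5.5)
Your proposal is correct and follows the paper's proof. Both invoke Lemma~\ref{lemma:pert_bound_template} with $\lVert\cdot\rVert_{\mathrm{eq},1}$ and $\mu_j=\nu_j=1$, and both verify the two assumptions by bounding the deviation map by the trace norm of its block-diagonal Choi operator, $\|J_V\|_1=\sum_\lambda d_\lambda\|V_\lambda\|_1$. You prove $\|\Psi_V\|_{1\to1}\le\|J_V\|_1$ directly via a singular-value decomposition and H\"older's inequality, whereas the paper cites the corresponding diamond-norm bound. You also spell out two steps the paper leaves implicit: that the depolarizing baseline annihilates traceless inputs, and that the per-block hypothesis sums to the layer-level one.
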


\begin{proof}
    We prove this result directly, using Lemma~\ref{lemma:pert_bound_template}.
    This time, we take the equivariant matrix norm defined in the lemma statement, with $\mu_j=\nu_j=1$.

    We first confirm \ref{eq:assumption1}:
    \begin{align}
        \norm{\phi_{j+1}^{\vec{w}+\vec{u}}(\rho_j^{\vec{w}+\vec{u}}(\xx)-\rho_j^{\vec{w}}(\xx))}_1 &\leq\left(\lVert\phi_{W_{j+1}}\rVert_\diamond+\lVert\phi_{U_{j+1}}\rVert_\diamond\right)\lVert\rho_j^{\vec{w}+\vec{u}}(\xx)-\rho_j^{\vec{w}}(\xx)\rVert_1.
    \end{align}
    For these equivariant channels, the diamond norm is upper bounded by the equivariant 1-norm of the parameters \cite{nechita2018almost}: $\norm{\phi_{V}}_\diamond \le \norm{J_{V}}_1 = \norm{V}_{\mathrm{eq},1}$, thus yielding
    \begin{align}
        \norm{\phi_{j+1}^{\vec{w}+\vec{u}}(\rho_j^{\vec{w}+\vec{u}}-\rho_j^{\vec{w}}))}_1 &\leq \left(\lVert\phi_{W_{j+1}}\rVert_{\mathrm{eq},1}+\lVert\phi_{U_{j+1}}\rVert_{\mathrm{eq},1}\right)\Delta_j.
    \end{align}
    Next, for \ref{eq:assumption2}, we notice that the effect of the perturbation at the channel level simplifies to the action of the perturbation channel $\phi_{U_{j+1}}$ on the state $\rho_j^{\vec{w}}$. Since $\norm{\rho_j^{\vec{w}}}_1 = 1$, we have:
    \begin{align}
        \norm{\phi_{U_{j+1}}(\rho_j^{\vec{w}})}_1 \le \norm{\phi_{U_{j+1}}}_\diamond \le \norm{U_{j+1}}_{\mathrm{eq},1}\,.
    \end{align}
    Having checked the assumptions, Lemma~\ref{lemma:pert_bound_template} implies the stated bound.
\end{proof}

\section{Technical preliminaries on channel parameterizations}
\label{a:preliminaries-param}
To derive our generalization error bounds, it is necessary to obtain upper bounds on the norms of the matrices that describe the different parameterizations. In this section, we also analyze the structural properties of our parameterizations, which enables us to derive interpretable generalization guarantees.

\subsection{PM parameterization}
\label{a:preliminaries-param-PM}
Before turning to explicit norm bounds on $W=\chi-\mathbb{I}/4^n$, it is useful to recall the operational meaning of the Frobenius norm of the PM and its relation to the purity of the Choi operator of the channel. For an $n$-qubit quantum channel $\phi$ with unnormalized Choi operator $J_\phi$, the latter admits the expansion
\begin{equation}
    J_\phi=\sum_{A,B}\chi(A,B)\,
    |\sigma_A\rangle\rangle\langle\langle\sigma_B|,
\end{equation}
where $\{\sigma_A\}$ denotes the $n$-qubit Pauli basis. Using the Hilbert--Schmidt inner product $\langle\langle\sigma_B|\sigma_A\rangle\rangle =\tr(\sigma_B\sigma_A)=2^n\delta_{A,B}$ and the fact that $\chi$ is Hermitian, one finds
\begin{equation}
    \tr(J_\phi^2)
    =2^{2n}\sum_{A,B}|\chi(A,B)|^2
    =2^{2n}\|\chi\|_F^2.
\end{equation}
Equivalently,
\begin{equation}
    \|\chi\|_F^2=\frac{1}{4^n}\tr(J_\phi^2).
\end{equation}
Since for channels with equal input and output dimension the purity of the unnormalized Choi operator satisfies
$1\leq \tr(J_\phi^2)\leq 4^n$, it follows that
\begin{equation}
    \frac{1}{4^n}\leq \|\chi\|_F^2\leq 1,
\end{equation}
with the lower bound attained by the maximally depolarizing channel and the upper bound by unitary channels. As a consequence, the Frobenius norm of the perturbation matrix $W=\chi-\mathbb{I}/4^n$ quantifies the deviation of the channel from the maximally mixed Choi state. We begin by stating two elementary but useful structural properties of $W$, which follow from complete positivity and trace preservation of the associated quantum channel.

\begin{observation}[Vanishing trace of $W$]\label{observation:trace_W}
The diagonal entries of $W$ satisfy
\begin{equation}
    \sum_A W(A,A)=0.
\end{equation}
\end{observation}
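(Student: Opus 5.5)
The plan is to reduce the statement to trace preservation. The key fact is that $\sum_A \chi(A,A)=1$ for any trace-preserving channel written in the PM form. Since $\chi(A,B)=\delta_{A,B}/4^n+W(A,B)$, we have
\begin{equation}
\sum_A W(A,A)=\sum_A\chi(A,A)-\sum_A\frac{1}{4^n}=\sum_A\chi(A,A)-1,
\end{equation}
because there are exactly $4^n$ Paulis. It therefore suffices to show $\sum_A\chi(A,A)=1$.

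To establish this, I will write out trace preservation for the PM representation. For every operator $\rho$,
\begin{equation}
\tr\big(\phi(\rho)\big)=\sum_{A,B}\chi(A,B)\tr(\sigma_B\sigma_A\rho)=\tr(\rho).
\end{equation}
Since this holds for all $\rho$, it gives the operator identity $\sum_{A,B}\chi(A,B)\,\sigma_B\sigma_A=\mathbb{I}$. Next I take the trace of both sides and use $\tr(\sigma_B\sigma_A)=2^n\delta_{A,B}$. This yields $2^n\sum_A\chi(A,A)=2^n$, which is exactly $\sum_A\chi(A,A)=1$.

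An alternative route goes through the Choi operator expansion recalled just above, $J_\phi=\sum_{A,B}\chi(A,B)|\sigma_A\rangle\rangle\langle\langle\sigma_B|$. Trace preservation gives $\tr J_\phi=2^n$, and since $\langle\langle\sigma_A|\sigma_A\rangle\rangle=2^n$, this again gives $\sum_A\chi(A,A)=1$. I would present the first argument and mention the second as a consistency check.

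I expect no real obstacle here. The only care needed is bookkeeping of the normalization conventions: the vectorized Paulis have norm $2^n$, and the baseline term $\delta_{A,B}/4^n$ must reproduce the depolarizing channel with unit trace. I will also note that complete positivity is not needed for this observation; only trace preservation is used.
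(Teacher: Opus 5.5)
Your proof is correct and follows essentially the same route as the paper: trace preservation gives the operator identity $\sum_{A,B}\chi(A,B)\sigma_B\sigma_A=\mathbb{I}$, and taking its trace (the paper's pairing with $\sigma_C=\mathbb{I}$) picks out the diagonal via Pauli orthogonality. The only cosmetic difference is the order of steps: the paper subtracts the depolarizing baseline before tracing, whereas you trace first to get $\sum_A\chi(A,A)=1$ and then subtract $\sum_A 4^{-n}=1$.
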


\begin{proof}
Using the trace-preserving condition of the channel, we have
\begin{equation}
    \mathbb{I}=\sum_{A,B}\chi(A,B)\sigma_B\sigma_A,
\end{equation}
where $\chi=\mathbb{I}/4^n+W$. Substituting this expression yields
\begin{equation}
    \mathbb{I}
    =\mathbb{I}+\sum_{A,B}W(A,B)\sigma_B\sigma_A,
\end{equation}
which implies
\begin{equation}
    0=\sum_{A,B}W(A,B)\sigma_B\sigma_A.
\end{equation}
Taking the trace against $\sigma_C$ and using orthogonality of the Pauli basis, we obtain for any fixed $C$
\begin{equation}
    0=\sum_A(-1)^{AC}W(A,AC),
\end{equation}
where $AC=0$ if $[\sigma_A,\sigma_C]=0$ and $AC=1$ otherwise. Choosing
$\sigma_C=\mathbb{I}^{\otimes n}$ yields the claim.
\end{proof}

\begin{observation}[Lower bound on diagonal entries]\label{observation:diag_lower_W}
For all $A$, the diagonal entries of $W$ satisfy
\begin{equation}
    W(A,A)\geq -\frac{1}{4^n}.
\end{equation}
\end{observation}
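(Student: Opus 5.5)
The plan is to derive the bound from complete positivity of the channel, using the Choi-operator expansion recalled just before \Cref{observation:trace_W}. There, the unnormalized Choi operator is
\begin{equation}
    J_\phi=\sum_{A,B}\chi(A,B)\,|\sigma_A\rangle\rangle\langle\langle\sigma_B|.
\end{equation}
Complete positivity of $\phi$ is equivalent to $J_\phi\succeq 0$. The vectors $|\sigma_A\rangle\rangle$ are orthogonal with $\langle\langle\sigma_B|\sigma_A\rangle\rangle=2^n\delta_{A,B}$, so the normalized vectors $|\sigma_A\rangle\rangle/2^{n/2}$ form an orthonormal basis of the doubled space. In this basis $J_\phi$ is represented by the matrix $2^n\chi$. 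Hence $J_\phi\succeq0$ is equivalent to $\chi\succeq 0$.

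Next, I will use that a positive semidefinite matrix has nonnegative diagonal entries. Explicitly,
\begin{equation}
    \chi(A,A)=\frac{1}{4^n}\langle\langle\sigma_A|J_\phi|\sigma_A\rangle\rangle\geq 0 \quad\text{for all } A.
\end{equation}
Substituting $\chi=\mathbb{I}/4^n+W$ gives $\chi(A,A)=1/4^n+W(A,A)\ge0$, which is exactly the claim $W(A,A)\geq-1/4^n$.

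The only point requiring care is the first step: verifying that the PM expansion of the channel matches the Choi operator with the stated coefficient convention. Concretely, one must check that $\phi(\rho)=\sum_{A,B}\chi(A,B)\sigma_A\rho\sigma_B$ corresponds to $J_\phi=\sum_{A,B}\chi(A,B)|\sigma_A\rangle\rangle\langle\langle\sigma_B|$, up to the harmless choice of whether $\sigma_B$ or $\sigma_B^\dagger$ appears. Since Paulis are Hermitian, this is immediate from $(\sigma_A\otimes\mathbb{I})|\Omega\rangle=|\sigma_A\rangle\rangle$, where $|\Omega\rangle=\sum_i|ii\rangle$. Beyond this check, the argument is a one-line consequence of positivity.
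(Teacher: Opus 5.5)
Your proof is correct and follows the same route as the paper. Complete positivity gives $\chi \succeq 0$, and the diagonal bound is then read off; the paper phrases this last step via the min--max principle applied to $W \geq -\mathbb{I}/4^n$, which is the same fact as nonnegativity of the diagonal of a positive semidefinite matrix, while your explicit Choi-basis check that $J_\phi \succeq 0$ forces $\chi \succeq 0$ fills in a step the paper simply asserts.
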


\begin{proof}
Complete positivity of the channel implies that the PM $\chi$ is positive
semidefinite, and therefore
\begin{equation}
    W=\chi-\frac{\mathbb{I}}{4^n}\geq -\frac{\mathbb{I}}{4^n}.
\end{equation}
Since $W$ is Hermitian, the Courant--Fischer--Weyl min--max principle implies that each diagonal entry is lower bounded by the minimal eigenvalue of $W$, which proves the claim.
\end{proof}

We now turn to explicit norm bounds. Throughout, we assume that $W$ has at most $\xi$ non-zero entries.

\begin{lemma}[Upper bound on the $1$-to-$1$ norm]\label{a-lemma:W_11_bound-PM}
Under the above assumptions, the $1$-to-$1$ norm of $W$ satisfies
\begin{equation}
    \norm{W}_{1,1}\leq \frac{2\xi^2+\xi-2}{4^n}.
\end{equation}
\end{lemma}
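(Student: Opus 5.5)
The plan is to use only the two structural facts already established, \Cref{observation:trace_W} and \Cref{observation:diag_lower_W}, together with positive semidefiniteness of the process matrix $\chi=\mathbb{I}/4^n+W$. We split $\norm{W}_{1,1}$ into a diagonal and an off-diagonal contribution. Let $S$ be the set of Pauli indices $A$ with $W(A,A)\neq 0$, and let $s=|S|\le\xi$. At most $\xi-s$ off-diagonal entries of $W$ are then non-zero.

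\emph{Diagonal part.} By \Cref{observation:trace_W}, $\sum_{A\in S}W(A,A)=0$, so the positive and negative diagonal parts have equal total mass. By \Cref{observation:diag_lower_W}, each negative entry is at least $-1/4^n$, so the negative mass is at most $s/4^n$. Hence
\begin{equation}
\sum_A |W(A,A)| \le \frac{2s}{4^n}.
\end{equation}
The same argument bounds each positive entry by the total positive mass $s/4^n$. Therefore $\chi(A,A)\le (s+1)/4^n$ for $A\in S$, while $\chi(A,A)=1/4^n$ for $A\notin S$.

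\emph{Off-diagonal part.} Since $\chi\ge 0$, every $2\times 2$ principal minor is non-negative. This gives, for $A\neq B$,
\begin{equation}
|W(A,B)|=|\chi(A,B)|\le\sqrt{\chi(A,A)\chi(B,B)}\le \frac{s+1}{4^n}.
\end{equation}
Summing over the at most $\xi-s$ non-zero off-diagonal entries gives
\begin{equation}
\norm{W}_{1,1}\le \frac{2s+(\xi-s)(s+1)}{4^n}.
\end{equation}

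The remaining step is to maximize $g(s)=2s+(\xi-s)(s+1)$ over admissible $s\in\{0,\dots,\xi\}$ and compare with $2\xi^2+\xi-2$. The value $s=1$ is excluded, because a single non-zero diagonal entry cannot sum to zero. For general $s$, the crude bound $g(s)\le 2\xi+\xi(\xi+1)=\xi^2+3\xi$ already lies below $2\xi^2+\xi-2$ whenever $\xi^2-2\xi-2\ge 0$, that is, for $\xi\ge 3$. The cases $\xi\le 2$ are handled by direct inspection:
\begin{itemize}
\item $\xi=0$: here $W=0$. This is the one case where the stated bound $-2/4^n$ is negative, so it should be read as the trivial case and noted explicitly.
\item $\xi=1$: Hermiticity forbids a lone off-diagonal entry, and trace zero forbids a lone diagonal entry, so $W=0$ and the bound $1/4^n$ holds.
\item $\xi=2$: both cases give at most $2/4^n$ or $4/4^n$, which is below $8/4^n$.
\end{itemize}

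I expect the main obstacle to be the off-diagonal control. The naive bound $|\chi(A,B)|\le 1$ from $\tr\chi=1$ would lose the $1/4^n$ scaling entirely. The key point is that sparsity pins most diagonal entries of $\chi$ at exactly $1/4^n$, and trace-zero caps the remaining ones at $(s+1)/4^n$. The PSD minor inequality then transfers this smallness to the off-diagonal entries. If the intended constant $2\xi^2+\xi-2$ comes from a cruder count (for instance bounding every entry by $(\xi+1)/4^n$ without separating $s$), the argument above is sharper and still implies the stated inequality.
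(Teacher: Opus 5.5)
Your argument is correct and follows essentially the same route as the paper: the $2\times 2$ PSD minors of $\chi$ control the off-diagonal entries, and the trace-zero condition together with $W(A,A)\ge -1/4^n$ controls the diagonal mass. Sparsity bounds the number of terms. Your bookkeeping via $s$ is finer than the paper's, which bounds the diagonal by $2(\xi-1)/4^n$ as if all $\xi$ entries sat on the diagonal, and separately allows up to $\xi$ off-diagonal terms each at most $1/4^n+\sum_A|W(A,A)|$. Your caveat at $\xi=0$ is also right: the stated right-hand side is negative there and the paper's argument silently breaks in that case, so the lemma should be read for $\xi\ge 1$.
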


\begin{proof}
Since the PM $\chi$ is positive semidefinite, its entries satisfy
\begin{equation}
    |\chi(A,B)|\leq \sqrt{\chi(A,A)\chi(B,B)}.
\end{equation}
For $A\neq B$, we have $\chi(A,B)=W(A,B)$, and therefore
\begin{equation}
    |W(A,B)|\leq
    \sqrt{\left(\frac{1}{4^n}+W(A,A)\right)
    \left(\frac{1}{4^n}+W(B,B)\right)}.
\end{equation}
Separating diagonal and off-diagonal contributions and applying the
Cauchy--Schwarz inequality yields
\begin{equation}
    \norm{W}_{1,1}
    \leq \sum_A|W(A,A)|
    +\sum_{A\neq B}\abs{\frac{1}{4^n}+W(A,A)}.
\end{equation}
Using that $W$ has at most $\xi$ non-zero entries and combining this with \Cref{observation:trace_W,observation:diag_lower_W}, the quantity $\sum_A|W(A,A)|$ is maximized when $\xi-1$ diagonal entries equal $-1/4^n$ and one equals $(\xi-1)/4^n$, yielding
\begin{equation}
    \sum_A|W(A,A)|\leq \frac{2(\xi-1)}{4^n}.
\end{equation}
Collecting all terms concludes the proof.
\end{proof}

\begin{lemma}[Upper bound on the Frobenius norm]\label{a-lemma:W_frob_bound-PM}
Under the same 
assumptions, the Frobenius norm of $W$ satisfies
\begin{equation}
    \norm{W}_F^2\leq \frac{\xi^3}{4^{2n}}.
\end{equation}
\end{lemma}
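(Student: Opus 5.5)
Since $W$ has at most $\xi$ non-zero entries, I will bound each non-zero entry in absolute value and multiply by $\xi$, exactly as in the proof of \Cref{a-lemma:W_11_bound-PM}. The main tools are positive semidefiniteness of $\chi$ together with \Cref{observation:trace_W,observation:diag_lower_W}. The target is $\max_{A,B}|W(A,B)|^2\le \xi^2/4^{2n}$, which gives
\begin{equation}
\norm{W}_F^2=\sum_{(A,B):\,W(A,B)\neq 0}|W(A,B)|^2\le \xi\cdot\frac{\xi^2}{4^{2n}}=\frac{\xi^3}{4^{2n}}.
\end{equation}

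\textbf{Diagonal entries.} By \Cref{observation:diag_lower_W}, every diagonal entry satisfies $W(A,A)\ge -1/4^n$. At most $\xi$ diagonal entries are non-zero, and by \Cref{observation:trace_W} they sum to zero. Any single diagonal entry is therefore at most the negative of the sum of the other non-zero diagonal entries. Each of these is at least $-1/4^n$, and there are at most $\xi-1$ of them, so
\begin{equation}
-\frac{1}{4^n}\le W(A,A)\le \frac{\xi-1}{4^n}.
\end{equation}
Consequently $0\le \tfrac{1}{4^n}+W(A,A)=\chi(A,A)\le \xi/4^n$.

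\textbf{Off-diagonal entries.} For $A\neq B$ we have $\chi(A,B)=W(A,B)$. The $2\times 2$ principal minor inequality for the positive semidefinite matrix $\chi$ then gives
\begin{equation}
|W(A,B)|^2\le \chi(A,A)\,\chi(B,B)\le \frac{\xi^2}{4^{2n}}.
\end{equation}
The diagonal entries obey $|W(A,A)|\le(\xi-1)/4^n\le \xi/4^n$, so every entry of $W$ satisfies $|W(A,B)|^2\le \xi^2/4^{2n}$. This is the target bound, and the conclusion follows.

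\textbf{Main obstacle.} The step needing care is the diagonal upper bound. It relies on using the sparsity $\xi$ in the trace-zero argument, so that only $\xi-1$ negative entries can compensate the positive one. Without this, one would only get the dimension-dependent bound $W(A,A)\le 1-1/4^n$. Beyond that, the argument is a direct combination of earlier observations, and the crude bound $\xi-1\le\xi$ is what produces the clean $\xi^3$.
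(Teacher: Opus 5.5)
Your proof is correct, but it is organized differently from the paper's.

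\textbf{The paper's route.} It splits $\norm{W}_F^2$ into a diagonal and an off-diagonal sum. It uses the $2\times 2$ minors $|W(A,B)|^2\le\chi(A,A)\chi(B,B)$ to reach the intermediate inequality $\norm{W}_F^2\le(1+\xi)\sum_A|W(A,A)|^2+\xi/4^{2n}$, whose derivation is only sketched. It then maximizes $\sum_A|W(A,A)|^2$ under the constraints of \Cref{observation:trace_W,observation:diag_lower_W} plus sparsity, obtaining $\xi(\xi-1)/4^{2n}$. The clean $\xi^3/4^{2n}$ appears only after an algebraic cancellation.

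\textbf{Your route.} You use the same extremal configuration only to cap the largest diagonal entry. From that you get the uniform entrywise bound $|W(A,B)|\le\xi/4^n$, and you multiply by the number of non-zero entries.

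\textbf{Comparison.} Your route is shorter and every step can be checked. It also respects automatically the fact that diagonal and off-diagonal non-zeros share a single budget $\xi$. As a bonus, the same entrywise bound gives $\norm{W}_{1,1}\le\xi^2/4^n$, which is slightly sharper than \Cref{a-lemma:W_11_bound-PM}. The paper's split keeps the bound in terms of the actual diagonal of $W$ before passing to the worst case; for instance, it gives $\xi/4^{2n}$ when that diagonal vanishes. You can keep this advantage too by stopping at $\norm{W}_F^2\le\xi\max_C\chi(C,C)^2$.

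\textbf{Minor point.} The intermediate claim $|W(A,A)|\le(\xi-1)/4^n$ needs $\xi\ge 2$ when $W(A,A)<0$, which the trace condition guarantees. In any case you only use $|W(A,A)|\le\xi/4^n$. That follows from your two-sided bound whenever $\xi\ge 1$, and the case $\xi=0$ is trivial.
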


\begin{proof}
For $A\neq B$, positivity of $\chi$ implies
\begin{equation}
    |W(A,B)|^2\leq
    \left(\frac{1}{4^n}+W(A,A)\right)
    \left(\frac{1}{4^n}+W(B,B)\right).
\end{equation}
Separating diagonal and off-diagonal terms and proceeding analogously to the proof of \Cref{a-lemma:W_11_bound-PM}, we obtain
\begin{equation}
    \norm{W}_F^2
    \leq \sum_A|W(A,A)|^2
    +\frac{\xi}{4^{2n}}
    +\xi\sum_A|W(A,A)|^2.
\end{equation}
Using again the constraints from \Cref{observation:trace_W,observation:diag_lower_W}, the sum of squared diagonal entries is maximized when $\xi-1$ entries equal $-1/4^n$ and one equals $(\xi-1)/4^n$, which gives
\begin{equation}
    \sum_A|W(A,A)|^2\leq \frac{\xi(\xi-1)}{4^{2n}}.
\end{equation}
Combining all terms yields the stated bound.
\end{proof}

\subsection{PTM parameterization}
\label{a:preliminaries-param-PTM}
We denote by $d_{\mathrm{in}}$ and $d_{\mathrm{out}}$ the input and output dimensions of the channel $\phi$, respectively, and recall that the PTM entries are given by
\begin{equation}
    R^\phi(A,B)=\frac{1}{\sqrt{d_{\mathrm{in}}\,d_{\mathrm{out}}}}
    \tr(\sigma_A\,\phi(\sigma_B)).
\end{equation}
Note that the PTM of a tensor-product quantum channel factorizes as the tensor product of the PTMs of the individual channels. That is, if $\phi=\bigotimes_{i=1}^k\phi_i$, then $R^\phi=\bigotimes_{i=1}^k R^{\phi_i}$.

We begin by deriving an entrywise bound on the PTM.
\begin{lemma}[Entrywise bound on the PTM]\label{a-lemma:PTM_entry_bound}
For any quantum channel $\phi$, the PTM entries satisfy
\begin{equation}
    |R^\phi(A,B)|\leq \sqrt{\frac{d_{\mathrm{in}}}{d_{\mathrm{out}}}}.
\end{equation}
If, in addition, $\phi$ is contractive in the $\infty$-norm (e.g., if $\phi$ is unital), then the PTM entries satisfy
\begin{equation}
    |R^\phi(A,B)|\leq \sqrt{\frac{d_{\mathrm{out}}}{d_{\mathrm{in}}}}.
\end{equation}
\end{lemma}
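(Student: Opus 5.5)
The plan is to prove both inequalities by writing each PTM entry as a trace pairing. We then apply Hölder's inequality, with the Schatten norms placed differently in the two cases. By definition,
\begin{equation}
    |R^\phi(A,B)| = \frac{1}{\sqrt{d_{\mathrm{in}}d_{\mathrm{out}}}}\,\bigl|\tr\bigl(\sigma_A\,\phi(\sigma_B)\bigr)\bigr|,
\end{equation}
where $\sigma_B$ acts on the input space and $\sigma_A$ on the output space. Both Pauli strings are unitary and Hermitian. Hence $\norm{\sigma_A}_\infty=\norm{\sigma_B}_\infty=1$, $\norm{\sigma_B}_1=d_{\mathrm{in}}$ and $\norm{\sigma_A}_1=d_{\mathrm{out}}$. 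The whole argument rests on these four numbers.

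For the first (general) bound, apply Hölder's inequality as $|\tr(\sigma_A\phi(\sigma_B))|\le\norm{\sigma_A}_\infty\norm{\phi(\sigma_B)}_1=\norm{\phi(\sigma_B)}_1$. It remains to show $\norm{\phi(\sigma_B)}_1\le\norm{\sigma_B}_1=d_{\mathrm{in}}$. This is trace-norm contractivity of positive trace-preserving maps on Hermitian operators. To see it, decompose $\sigma_B=P_+-P_-$ with $P_\pm$ the spectral projectors onto the $\pm1$ eigenspaces. Then $\phi(P_\pm)\ge0$ and $\tr\phi(P_\pm)=\tr P_\pm$, so
\begin{equation}
    \norm{\phi(\sigma_B)}_1\le \tr\phi(P_+)+\tr\phi(P_-)=\tr P_++\tr P_-=d_{\mathrm{in}}.
\end{equation}
Dividing by $\sqrt{d_{\mathrm{in}}d_{\mathrm{out}}}$ gives $\sqrt{d_{\mathrm{in}}/d_{\mathrm{out}}}$.

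For the second bound, swap the roles in Hölder's inequality: $|\tr(\sigma_A\phi(\sigma_B))|\le\norm{\sigma_A}_1\norm{\phi(\sigma_B)}_\infty=d_{\mathrm{out}}\norm{\phi(\sigma_B)}_\infty$. The $\infty$-norm contractivity hypothesis then gives $\norm{\phi(\sigma_B)}_\infty\le\norm{\sigma_B}_\infty=1$, and dividing yields $\sqrt{d_{\mathrm{out}}/d_{\mathrm{in}}}$.

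It remains to justify the parenthetical claim that unital channels satisfy the hypothesis. Since $\sigma_B$ is Hermitian, $-\Id\le\sigma_B\le\Id$. Positivity of $\phi$ and $\phi(\Id)=\Id$ then give $-\Id\le\phi(\sigma_B)\le\Id$, i.e.\ $\norm{\phi(\sigma_B)}_\infty\le1$. This suffices because only Hermitian inputs, namely the Pauli strings, are needed. When $d_{\mathrm{in}}\neq d_{\mathrm{out}}$, unitality is understood as $\phi(\Id_{\mathrm{in}})=\Id_{\mathrm{out}}$, and the argument is unchanged.

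The only step needing care is the trace-norm contraction in the first part. I handle it with the positive/negative-part decomposition above rather than invoking a general theorem, since positivity and trace preservation suffice. Everything else is a direct application of Hölder's inequality with the trivial norms of the Pauli operators.
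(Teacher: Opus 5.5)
Your proof is correct and follows the paper's route: Hölder's inequality on $\mathrm{Tr}(\sigma_A\phi(\sigma_B))$, paired as $\lVert\sigma_A\rVert_\infty\lVert\phi(\sigma_B)\rVert_1$ with trace-norm contractivity for the first bound and as $\lVert\sigma_A\rVert_1\lVert\phi(\sigma_B)\rVert_\infty$ with $\infty$-norm contractivity for the second. Your $P_+-P_-$ argument and the check that unital maps are $\infty$-contractive on Paulis only fill in what the paper leaves implicit; one small point is that for trace-preserving $\phi$ the condition $\phi(\Id_{\mathrm{in}})=\Id_{\mathrm{out}}$ forces $d_{\mathrm{in}}=\mathrm{Tr}\,\phi(\Id_{\mathrm{in}})=d_{\mathrm{out}}$, so your remark about unitality with $d_{\mathrm{in}}\neq d_{\mathrm{out}}$ is vacuous rather than wrong.
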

\begin{proof}
Both bounds follow directly from Hölder's inequality applied to $\tr(\sigma_A\,\phi(\sigma_B))$, together with the contractivity of quantum channels under the $1$-norm and, in the second case, the $\infty$-norm.
\end{proof}
Since $W=R^\phi-R^{\mathrm{DEP}}$ differs from $R^\phi$ only in the $(\mathbb{I},\mathbb{I})$ entry, the same bounds apply to all non-zero entries of $W$.
\begin{lemma}[Upper bound on the $1$-to-$1$ norm]\label{a-lemma:W_11_bound_PTM}
Assume that $W$ has at most $\xi$ non-zero entries. If $\phi$ is unital, then
\begin{equation}
    \norm{W}_{1,1}\leq \xi\,\sqrt{\frac{d_{\mathrm{out}}}{d_{\mathrm{in}}}}.
\end{equation}
\end{lemma}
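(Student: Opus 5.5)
The claim follows by combining the entrywise bound of \Cref{a-lemma:PTM_entry_bound} with the sparsity assumption. First I would make sure the entries of $W$ are controlled by the same constant as the entries of $R^\phi$. By definition, $W = R^\phi - R^{\mathrm{DEP}}$, and $R^{\mathrm{DEP}}$ has a single non-zero entry, $\sqrt{d_{\mathrm{in}}/d_{\mathrm{out}}}$ at position $(0,0)$. So every entry $(A,B)\neq(0,0)$ satisfies $W(A,B) = R^\phi(A,B)$. Because $\phi$ is unital, the second part of \Cref{a-lemma:PTM_entry_bound} then gives $|W(A,B)|\le\sqrt{d_{\mathrm{out}}/d_{\mathrm{in}}}$ for all these entries.

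The identity entry needs its own argument, and I expect it to be the only real obstacle. Trace preservation gives $\phi(\mathbb{I})$ with trace $d_{\mathrm{in}}$, so
\begin{equation}
R^\phi(0,0) = \frac{\tr(\phi(\mathbb{I}))}{\sqrt{d_{\mathrm{in}}d_{\mathrm{out}}}} = \sqrt{\frac{d_{\mathrm{in}}}{d_{\mathrm{out}}}} = R^{\mathrm{DEP}}(0,0).
\end{equation}
Hence $W(0,0)=0$ exactly. This entry does not count toward the sparsity $\xi$ and contributes nothing to the norm. The same computation also shows that the remark preceding the lemma (``same bounds apply to all non-zero entries'') is literally true, since the only entry where $W$ and $R^\phi$ differ is zero.

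Finally, $\norm{W}_{1,1}=\sum_{A,B}|W(A,B)|$ has at most $\xi$ non-zero terms, and each is at most $\sqrt{d_{\mathrm{out}}/d_{\mathrm{in}}}$. Summing gives $\norm{W}_{1,1}\le \xi\sqrt{d_{\mathrm{out}}/d_{\mathrm{in}}}$. No structure beyond trace preservation, unitality (through $\infty$-norm contractivity) and Hölder's inequality is needed, so I expect the proof to be a few lines.
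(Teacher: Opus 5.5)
Your proposal is correct and follows the paper's own proof, which combines the $\infty$-norm-contractive (unital) case of \Cref{a-lemma:PTM_entry_bound} with the sparsity count. Your explicit check that $W(0,0)=0$ via trace preservation is a useful addition: the paper only asserts informally that $W$ and $R^\phi$ coincide on every entry that can be non-zero.
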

\begin{proof}
The claim follows immediately by combining the entrywise bound from \Cref{a-lemma:PTM_entry_bound} with the sparsity assumption.
\end{proof}
We now turn to bounds that do not rely on unitality.
\begin{lemma}[Upper bound on the Frobenius norm]\label{a-lemma:W_F_bound_PTM}
The Frobenius norm of $W$ satisfies
\begin{equation}
    \norm{W}_F^2\leq d_{\mathrm{in}}^2-\frac{d_{\mathrm{in}}}{d_{\mathrm{out}}}.
\end{equation}
\end{lemma}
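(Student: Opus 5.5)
The plan is to express $\norm{W}_F^2$ through the Frobenius norm of the full PTM $R^\phi$, and then bound $\norm{R^\phi}_F^2$ via the purity of the Choi operator, in the same way as the PM discussion at the start of \Cref{a:preliminaries-param-PM}.

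First, I compare $W$ with $R^\phi$. For any channel, trace preservation forces the entry $R^\phi(0,0)=\frac{1}{\sqrt{d_{\mathrm{in}}d_{\mathrm{out}}}}\tr(\phi(\mathbb{I}))=\sqrt{d_{\mathrm{in}}/d_{\mathrm{out}}}$. This coincides with the $(0,0)$ entry of $R^{\mathrm{DEP}}$, and since $W=R^\phi-R^{\mathrm{DEP}}$ differs only in that entry, $W(0,0)=0$. Therefore
\begin{equation}
\norm{W}_F^2=\norm{R^\phi}_F^2-\frac{d_{\mathrm{in}}}{d_{\mathrm{out}}}.
\end{equation}

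Second, I relate $\norm{R^\phi}_F^2$ to the Choi operator $J_\phi=\sum_{i,j}\ket{i}\bra{j}\otimes\phi(\ket{i}\bra{j})$. Expanding $J_\phi$ in the product Pauli basis $\{\sigma_B^{T}\otimes\sigma_A\}$ gives coefficients proportional to $\tr(\sigma_A\phi(\sigma_B))$. Tracking the normalisations through the orthogonality relations $\tr(\sigma\sigma')=d\,\delta$ should yield
\begin{equation}
\tr(J_\phi^2)=\frac{1}{d_{\mathrm{in}}d_{\mathrm{out}}}\sum_{A,B}|\tr(\sigma_A\phi(\sigma_B))|^2=\norm{R^\phi}_F^2,
\end{equation}
and I will verify this constant carefully. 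Equivalently, $\norm{R^\phi}_F^2$ equals the Hilbert--Schmidt norm of $\phi$ as a superoperator in an orthonormal basis, and this in turn equals the Hilbert--Schmidt norm of $J_\phi$ under realignment.

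Third, I bound the purity. Since $J_\phi\geq 0$ and $\tr J_\phi=\tr_{\mathrm{out}}$-normalised gives $\tr J_\phi=d_{\mathrm{in}}$, we have $\tr(J_\phi^2)\leq(\tr J_\phi)^2=d_{\mathrm{in}}^2$. This yields $\norm{R^\phi}_F^2\leq d_{\mathrm{in}}^2$, and with the first step, $\norm{W}_F^2\leq d_{\mathrm{in}}^2-d_{\mathrm{in}}/d_{\mathrm{out}}$. A sharper bound is available from $\tr_{\mathrm{out}}J_\phi=\mathbb{I}$: the eigenvalues of $J_\phi$ are bounded by $d_{\mathrm{out}}$ since $J_\phi\leq d_{\mathrm{out}}\,\mathbb{I}_{\mathrm{in}}\otimes\mathbb{I}$. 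However, the crude estimate already gives the stated bound, so I use that.

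The main obstacle is the normalisation bookkeeping in the second step, namely checking that the PTM convention $\frac{1}{\sqrt{d_{\mathrm{in}}d_{\mathrm{out}}}}$ makes $\norm{R^\phi}_F^2$ exactly $\tr(J_\phi^2)$ for the unnormalised Choi operator. If the constant turns out different, I will instead bound $\norm{R^\phi}_F^2$ directly. Writing it as $\frac{1}{d_{\mathrm{in}}d_{\mathrm{out}}}\sum_B d_{\mathrm{out}}\tr(\phi(\sigma_B)^2)$ via Pauli completeness, I will then bound $\tr(\phi(\sigma_B)^2)\leq\norm{\phi(\sigma_B)}_1^2\leq d_{\mathrm{in}}^2$, using trace-norm contractivity on the Hermitian operator $\sigma_B$. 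This leads to $\norm{R^\phi}_F^2\leq d_{\mathrm{in}}^2\cdot d_{\mathrm{in}}^2/d_{\mathrm{in}}$, which is weaker, so I will adjust by using the Choi purity route if possible.
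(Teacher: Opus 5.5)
Your proposal is correct and matches the paper's proof. Like the paper, you identify $\norm{R^\phi}_F^2=\mathrm{Tr}(J_\phi^2)$, bound the Choi purity by $(\mathrm{Tr}\,J_\phi)^2=d_{\mathrm{in}}^2$, and subtract the $(0,0)$ contribution $d_{\mathrm{in}}/d_{\mathrm{out}}$ using $W(0,0)=0$, which you rightly derive from trace preservation. The normalisation you were unsure about does check out, since $J_\phi=\frac{1}{d_{\mathrm{in}}d_{\mathrm{out}}}\sum_{A,B}\mathrm{Tr}(\sigma_A\phi(\sigma_B))\,\sigma_B^{T}\otimes\sigma_A$ with the product Paulis mutually orthogonal of squared norm $d_{\mathrm{in}}d_{\mathrm{out}}$. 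Your side remark $J_\phi\leq d_{\mathrm{out}}\mathbb{I}$ is also valid, but the resulting $d_{\mathrm{in}}d_{\mathrm{out}}$ bound is sharper only when $d_{\mathrm{out}}<d_{\mathrm{in}}$, so using the cruder estimate for the stated lemma is the right choice.
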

\begin{proof}
Let $J_\phi$ denote the unnormalized Choi matrix of the quantum channel $\phi$. Expressing $J_\phi$ in the Pauli basis, one finds that $\norm{R^\phi}_F=\norm{J_\phi}_F$. Since the eigenvalues of $J_\phi$ are non-negative and satisfy $\tr(J_\phi)=d_{\mathrm{in}}$, we obtain
\begin{equation}
    \frac{d_{\mathrm{in}}}{d_{\mathrm{out}}}\leq \norm{J_\phi}_F^2\leq d_{\mathrm{in}}^2.
\end{equation}
Using that $R^\phi=R^{\mathrm{DEP}}+W$ and that the supports of $R^{\mathrm{DEP}}$ and $W$ are disjoint, the result follows.
\end{proof}
Finally, combining the Frobenius norm bound with sparsity yields an alternative $1$-to-$1$ norm estimate.
\begin{lemma}[Sparsity-based $1$-to-$1$ bound]\label{a-lemma:W_11_bound_PTM_sparsity}
If $W$ has at most $\xi$ non-zero entries, then
\begin{equation}
    \norm{W}_{1,1}\leq \sqrt{\xi}\,\norm{W}_F
    \leq \sqrt{\xi\left(d_{\mathrm{in}}^2-\frac{d_{\mathrm{in}}}{d_{\mathrm{out}}}\right)}
    \leq \sqrt{\xi}\,d_{\mathrm{in}}.
\end{equation}
\end{lemma}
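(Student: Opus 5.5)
The first inequality, $\norm{W}_{1,1}\leq\sqrt{\xi}\,\norm{W}_F$, is the Cauchy--Schwarz inequality on the support of $W$. Let $S$ be the set of index pairs $(A,B)$ with $W(A,B)\neq 0$, so that $|S|\leq\xi$. Then
\begin{equation}
    \norm{W}_{1,1}=\sum_{(A,B)\in S}|W(A,B)|\cdot 1\leq\Big(\sum_{(A,B)\in S}1\Big)^{1/2}\Big(\sum_{(A,B)\in S}|W(A,B)|^2\Big)^{1/2}\leq\sqrt{\xi}\,\norm{W}_F .
\end{equation}
Unlike \Cref{a-lemma:W_11_bound_PTM}, no unitality assumption is needed here; the only input is the sparsity count.

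The second inequality comes from inserting the Frobenius bound of \Cref{a-lemma:W_F_bound_PTM}, $\norm{W}_F^2\leq d_{\mathrm{in}}^2-d_{\mathrm{in}}/d_{\mathrm{out}}$, and taking square roots. This is legitimate because both sides are nonnegative and the square root is monotone. That lemma applies to any channel, since it uses only complete positivity and trace preservation through the spectrum of the Choi matrix.

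The third inequality is the trivial estimate $d_{\mathrm{in}}^2-d_{\mathrm{in}}/d_{\mathrm{out}}\leq d_{\mathrm{in}}^2$, which holds because $d_{\mathrm{in}}/d_{\mathrm{out}}>0$. Again monotonicity of the square root gives $\sqrt{\xi(d_{\mathrm{in}}^2-d_{\mathrm{in}}/d_{\mathrm{out}})}\leq\sqrt{\xi}\,d_{\mathrm{in}}$.

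None of these steps is a real obstacle. The only point worth checking is that \Cref{a-lemma:W_F_bound_PTM} is used correctly. Its validity relies on the supports of $R^{\mathrm{DEP}}$ and $W$ being disjoint, so that $\norm{R^\phi}_F^2=\norm{R^{\mathrm{DEP}}}_F^2+\norm{W}_F^2$. The $(0,0)$ entry of $W$ vanishes automatically, since trace preservation forces $R^\phi(0,0)=\sqrt{d_{\mathrm{in}}/d_{\mathrm{out}}}$, which is exactly the baseline entry. Hence the sparsity count $\xi$ refers only to entries genuinely belonging to $W$, and the chain of inequalities is consistent.
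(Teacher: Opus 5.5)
Your proof is correct and follows the same route as the paper, which obtains this lemma by combining the sparsity (Cauchy--Schwarz) estimate $\norm{W}_{1,1}\le\sqrt{\xi}\,\norm{W}_F$ with the Frobenius bound of the preceding lemma and then dropping the nonnegative term $d_{\mathrm{in}}/d_{\mathrm{out}}$. Your extra check that trace preservation forces $W(0,0)=0$ is a correct justification of the disjoint-support step used in that Frobenius bound.
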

In the unital case, the bounds of \Cref{a-lemma:W_11_bound_PTM,a-lemma:W_11_bound_PTM_sparsity} are complementary: the former is tighter when $d_{\mathrm{out}}\ll d_{\mathrm{in}}$, while the latter applies more generally without structural assumptions on the channel.

\subsection{Equivariant parameterization}
\label{a:preliminaries-param-equivariant}

In this section, we derive upper bounds on the norms of the matrices that describe the equivariant parameterization. For an equivariant layer $\ell$, the parameters are constrained to the multiplicity spaces of the irreducible representations (irreps) $\lambda$ of the symmetry group $G$. The layer is defined by a set of parameter matrices $\{W_{l,\lambda}\}_\lambda$ of size $m_\lambda \times m_\lambda$.

We first establish an upper bound on the equivariant 1-norm in terms of the equivariant Frobenius norm and the number of parameters.

\begin{lemma}[Upper bound on the equivariant 1-norm]
Let $\Xi_\ell = \sum_\lambda m_\lambda^2$ be the total number of equivariant parameters at layer $\ell$, and let $|G|$ denote the order of the finite symmetry group. The equivariant 1-norm of the weight matrix satisfies
\begin{equation} \label{eq:norm_bound_param}
    \norm{W_{\ell}}_{\mathrm{eq},1} \leq \left( |G| \Xi_\ell \right)^{1/4} \norm{W_{\ell}}_{\mathrm{eq},F}.
\end{equation}
\end{lemma}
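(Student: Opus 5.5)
The plan is two Cauchy--Schwarz steps that reduce the claim to a group-theoretic counting fact, followed by Burnside's identity. Write $W_\lambda \equiv W_{\ell,\lambda}$ and $m_\lambda \equiv m_{\ell,\lambda}$, where $\lambda$ runs over the distinct irreps appearing in the isotypic decomposition of layer $\ell$.

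First, I would bound each block separately. Each $W_\lambda$ is an $m_\lambda \times m_\lambda$ matrix, so it has rank at most $m_\lambda$. Applying Cauchy--Schwarz to its singular values gives
\begin{equation}
\norm{W_\lambda}_1 \leq \sqrt{m_\lambda}\,\norm{W_\lambda}_F .
\end{equation}

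Second, I would combine the blocks, splitting the weight $d_\lambda = \sqrt{d_\lambda}\cdot\sqrt{d_\lambda}$ and using Cauchy--Schwarz over $\lambda$:
\begin{equation}
\norm{W_\ell}_{\mathrm{eq},1} \leq \sum_\lambda \sqrt{d_\lambda m_\lambda}\,\sqrt{d_\lambda}\norm{W_\lambda}_F \leq \Big(\sum_\lambda d_\lambda m_\lambda\Big)^{1/2}\Big(\sum_\lambda d_\lambda \norm{W_\lambda}_F^2\Big)^{1/2}.
\end{equation}
The second factor on the right is exactly $\norm{W_\ell}_{\mathrm{eq},F}$. It therefore remains to show
\begin{equation}
\sum_\lambda d_\lambda m_\lambda \leq (|G|\,\Xi_\ell)^{1/2}.
\end{equation}

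Third, I would apply Cauchy--Schwarz once more:
\begin{equation}
\sum_\lambda d_\lambda m_\lambda \leq \Big(\sum_\lambda d_\lambda^2\Big)^{1/2}\Big(\sum_\lambda m_\lambda^2\Big)^{1/2} = \Big(\sum_\lambda d_\lambda^2\Big)^{1/2}\,\Xi_\ell^{1/2}.
\end{equation}
Finally, I would invoke Burnside's identity for a finite group: summing $d_\lambda^2$ over all inequivalent irreps of $G$ gives exactly $|G|$. The sum here runs only over those irreps that actually occur in $R^{(\ell-1)\ast}\otimes R^{(\ell)}$, each counted once. Every term is nonnegative, so this partial sum is at most $|G|$. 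Substituting back and taking the square root yields the stated $(|G|\Xi_\ell)^{1/4}$ factor.

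The main point needing care is the last step. The labels $\lambda$ must index \emph{distinct} irreps, with repetitions absorbed into the multiplicities $m_\lambda$ exactly as in the isotypic decomposition, and $G$ must be finite. For a compact infinite group $|G|$ would have to be replaced by $\sum_\lambda d_\lambda^2$ over the irreps that actually appear. Everything else is elementary norm inequalities.
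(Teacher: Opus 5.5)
Your proposal is correct and follows the paper's proof essentially step for step: the block-wise bound $\norm{W_{\ell,\lambda}}_1 \leq \sqrt{m_\lambda}\norm{W_{\ell,\lambda}}_F$, then Cauchy--Schwarz with weights $\sqrt{d_\lambda m_\lambda}$ and $\sqrt{d_\lambda}\norm{W_{\ell,\lambda}}_F$, then a second Cauchy--Schwarz plus Burnside's identity to get $\sum_\lambda d_\lambda m_\lambda \leq \sqrt{|G|\Xi_\ell}$. You note that only the irreps occurring in the decomposition are summed, so $\sum_\lambda d_\lambda^2 \leq |G|$ is an inequality rather than the equality the paper writes; this is the same step done slightly more carefully.
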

\begin{proof}
By definition, the layer-wise equivariant 1-norm is given by
\begin{equation}
    \norm{W_{\ell}}_{\mathrm{eq},1} = \sum_{\lambda} d_\lambda \norm{W_{\ell,\lambda}}_1.
\end{equation}
First, we relate the trace norm of each block to its Frobenius norm. For any matrix $A$ of size $n \times n$, $\norm{A}_1 \leq \sqrt{n} \norm{A}_F$. Since $W_{\ell,\lambda}$ is an $m_\lambda \times m_\lambda$ matrix acting on the multiplicity space, we have $\norm{W_{\ell,\lambda}}_1 \leq \sqrt{m_\lambda} \norm{W_{\ell,\lambda}}_F$. Substituting this back into the sum yields
\begin{equation}
    \norm{W_{\ell}}_{\mathrm{eq},1} \leq \sum_{\lambda} d_\lambda \sqrt{m_\lambda} \norm{W_{\ell,\lambda}}_F.
\end{equation}
We now apply the Cauchy-Schwarz inequality, $\sum_k a_k b_k \leq \sqrt{\sum_k a_k^2} \sqrt{\sum_k b_k^2}$, with $a_\lambda = \sqrt{d_\lambda m_\lambda}$ and $b_\lambda = \sqrt{d_\lambda} \norm{W_{\ell,\lambda}}_F$:
\begin{align}
    \norm{W_{\ell}}_{\mathrm{eq},1} &\leq \left( \sqrt{\sum_{\lambda} d_\lambda m_\lambda} \right) \left( \sqrt{\sum_{\lambda} d_\lambda \norm{W_{\ell,\lambda}}_F^2} \right) \nonumber \\
    &= \sqrt{D_\ell} \norm{W_{\ell}}_{\mathrm{eq},F},
\end{align}
where $D_\ell = \sum_{\lambda} d_\lambda m_\lambda = \text{dim}(\mathcal{H}_\ell)$ is the total dimension of the Hilbert space at layer $\ell$.

Next, we relate the Hilbert space dimension $D_\ell$ to the number of equivariant parameters $\Xi_\ell = \sum_\lambda m_\lambda^2$. Using the Cauchy-Schwarz inequality on the sum defining $D_\ell$:
\begin{equation}
    D_\ell = \sum_{\lambda} d_\lambda m_\lambda \leq \sqrt{\sum_{\lambda} d_\lambda^2} \sqrt{\sum_{\lambda} m_\lambda^2}.
\end{equation}
From representation theory, we know that for a finite group $G$, the sum of squared irrep dimensions equals the order of the group, $\sum_\lambda d_\lambda^2 = |G|$~\cite{serre1977linear}. Thus:
\begin{equation}
    D_\ell \leq \sqrt{|G|} \sqrt{\Xi_\ell}.
\end{equation}
Combining these results gives the stated bound in Eq.~\eqref{eq:norm_bound_param}.
\end{proof}

We can further simplify the bound by removing the dependence on the Frobenius norm. This requires an assumption regarding the spectral properties of the weight matrices.

\begin{lemma}[Spectral upper bound on the equivariant 1-norm]\label{a:lemma-eq1upper}
Let $\eta_\ell = \max_\lambda \norm{W_{\ell,\lambda}}_\infty$ be the maximum spectral norm across all irrep blocks in layer $\ell$. Under the same assumptions as the previous lemma, the equivariant 1-norm satisfies
\begin{equation}
    \norm{W_{\ell}}_{\mathrm{eq},1} \leq \eta_\ell \sqrt{|G| \Xi_\ell}.
\end{equation}
\end{lemma}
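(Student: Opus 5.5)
We bound each block's trace norm by its spectral norm and then reuse the dimension-counting step from the proof of the previous lemma. By definition, $\norm{W_\ell}_{\mathrm{eq},1} = \sum_\lambda d_\lambda \norm{W_{\ell,\lambda}}_1$.

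First, since $W_{\ell,\lambda}$ is an $m_\lambda \times m_\lambda$ matrix, its trace norm is the sum of at most $m_\lambda$ singular values. Each singular value is at most $\norm{W_{\ell,\lambda}}_\infty \le \eta_\ell$. Hence
\begin{equation}
    \norm{W_{\ell,\lambda}}_1 \le m_\lambda \norm{W_{\ell,\lambda}}_\infty \le m_\lambda \eta_\ell .
\end{equation}
Summing over $\lambda$ with the weights $d_\lambda$ gives
\begin{equation}
    \norm{W_\ell}_{\mathrm{eq},1} \le \eta_\ell \sum_\lambda d_\lambda m_\lambda = \eta_\ell D_\ell .
\end{equation}
Here $D_\ell = \sum_\lambda d_\lambda m_\lambda$ is the same quantity that appeared in the previous proof.

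Second, we reuse the Cauchy--Schwarz step from that proof together with the finite-group identity $\sum_\lambda d_\lambda^2 = |G|$:
\begin{equation}
    D_\ell \le \sqrt{\sum_\lambda d_\lambda^2}\,\sqrt{\sum_\lambda m_\lambda^2} \le \sqrt{|G|\,\Xi_\ell} .
\end{equation}
Combining the two displays gives $\norm{W_\ell}_{\mathrm{eq},1} \le \eta_\ell \sqrt{|G|\,\Xi_\ell}$, which is the claim.

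As a consistency check, the same result also follows from the previous lemma. One has $\norm{W_{\ell,\lambda}}_F \le \sqrt{m_\lambda}\,\eta_\ell$, so $\norm{W_\ell}_{\mathrm{eq},F}^2 \le \eta_\ell^2 D_\ell$. Substituting into the previous lemma's intermediate bound $\sqrt{D_\ell}\,\norm{W_\ell}_{\mathrm{eq},F}$ again yields $\eta_\ell D_\ell$.

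The one step that needs care is the identity $\sum_\lambda d_\lambda^2 = |G|$. It holds when the sum runs over all irreps of $G$; the irreps appearing in the decomposition form a subset, so the sum over them is at most $|G|$. This inequality is all we need. I do not expect any further obstacle.
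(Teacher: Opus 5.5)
Your proof is correct and uses the same ingredients as the paper's. Both arguments reduce to $\norm{W_\ell}_{\mathrm{eq},1}\le \eta_\ell D_\ell$ with $D_\ell=\sum_\lambda d_\lambda m_\lambda$, and then apply the Cauchy--Schwarz bound $D_\ell\le\sqrt{|G|\,\Xi_\ell}$. The paper reaches that intermediate bound by detouring through the Frobenius norm and the previous lemma (exactly your consistency check), whereas you bound each block directly via $\norm{W_{\ell,\lambda}}_1\le m_\lambda\eta_\ell$. Your remark that $\sum_\lambda d_\lambda^2\le|G|$ suffices when only some irreps appear is a correct sharpening of the paper's stated equality.
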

\begin{proof}
We begin with the definition of the equivariant Frobenius norm:
\begin{equation}
    \norm{W_{\ell}}_{\mathrm{eq},F} = \sqrt{\sum_{\lambda} d_\lambda \norm{W_{\ell,\lambda}}_F^2}.
\end{equation}
For any matrix $A$ of rank $r$, the Frobenius norm is bounded by $\norm{A}_F \leq \sqrt{r} \norm{A}_\infty$, where $\norm{A}_\infty$ is the spectral norm (the largest singular value). The rank of the block $W_{\ell,\lambda}$ is at most its dimension $m_\lambda$. Thus:
\begin{equation}
    \norm{W_{\ell,\lambda}}_F \leq \sqrt{m_\lambda} \norm{W_{l,\lambda}}_\infty.
\end{equation}
Substituting $\eta_\ell = \max_\lambda \norm{W_{\ell,\lambda}}_\infty$ into the sum yields:
\begin{equation}
    \norm{W_{\ell}}_{\mathrm{eq},F} \leq \sqrt{\sum_{\lambda} d_\lambda m_\lambda \eta_\ell^2} = \eta_\ell \sqrt{\sum_{\lambda} d_\lambda m_\lambda} = \eta_\ell \sqrt{D_\ell}.
\end{equation}
Recalling from the proof of the previous lemma that $D_\ell \leq \sqrt{|G|\Xi_\ell}$, we can bound the Frobenius norm as:
\begin{equation}
    \norm{W_{\ell}}_{\mathrm{eq},F} \leq \eta_\ell \left( |G| \Xi_\ell \right)^{1/4}.
\end{equation}
Now we substitute this back into the bound for the equivariant 1-norm from Eq.~\eqref{eq:norm_bound_param}:
\begin{align}
    \norm{W_{\ell}}_{\mathrm{eq},1} &\leq \left( |G| \Xi_\ell \right)^{1/4} \norm{W_{\ell}}_{\mathrm{eq},F} \nonumber \\
    &\leq \left( |G| \Xi_\ell \right)^{1/4} \cdot \eta_\ell \left( |G| \Xi_\ell \right)^{1/4} \nonumber \\
    &= \eta_\ell \sqrt{|G| \Xi_\ell},
\end{align}
which completes the proof.
\end{proof}

\section{Generalization bounds}
\label{a:generalization-bounds}

Here, we show how the perturbation bounds in~\Cref{a:perturbation-bounds} can be used to derive PAC-Bayesian generalization bounds. Inspired by Refs.~\cite{neyshabur2015norm, behboodi2022pac}, we obtain PAC-Bayes generalization bounds for quantum models in the PM, PTM, and equivariant formalisms. We further present explicit examples illustrating our bounds for specific quantum models, along with comparisons to uniform relaxations of our worst-case PAC-Bayes bounds. Notably, these uniform bounds constitute new derivations and are of independent interest.

After quantifying the stability of a quantum model with respect to parameter fluctuations, the derivation proceeds in three main steps.

\paragraph{Bounding the noise magnitude.} We assume the parameter perturbation follows a Gaussian distribution. To utilize the sensitivity bound established in the first step, we must determine the maximum magnitude this random noise vector can reach with high probability. Using concentration inequalities, we derive a tail bound which guarantees that, with high probability, the norm of the perturbation vector does not exceed a threshold determined by the noise variance and the effective number of parameters (or equivariant dimensions).

\paragraph{Enforcing the margin condition.} The PAC-Bayes margin bound requires that the random perturbation does not alter the model's prediction on the training data. Specifically, the change in the output must be smaller than a fraction of the margin parameter. By combining the perturbation bound with the probabilistic noise limit from step one, we obtain a stability constraint. This constraint dictates the maximum permissible noise variance: to ensure the model output remains stable, the noise variance must scale inversely with the model's complexity and the desired margin.

\paragraph{Computing the generalization bound.} The final component of the PAC-Bayes bound is the \emph{Kullback-Leibler} (KL) divergence between the posterior distribution (centered at the learned parameters) and the prior distribution (centered at zero). For Gaussian distributions, this divergence is proportional to the magnitude of the trained weights divided by the noise variance. We substitute the variance constraint derived in step three into this KL term. This substitution reveals the fundamental trade-off: higher model complexity forces us to use smaller noise variance to maintain stability, which in turn increases the KL divergence and results in a looser generalization bound. Finally, applying a covering number argument to handle the data dependence of the complexity term yields the final generalization guarantee.

\subsection{Noise magnitude, margin condition and union bound in the PM and PTM formalism}

We now proceed with deriving the first two steps in both the PM and PTM formalisms. We denote the effective number of parameters in layer $j$ by $\xi_j$, corresponding to the number of independent entries in $W_j$. 

\subsubsection{Tail bound for Gaussian perturbation}

We consider Gaussian perturbations $U_j(A,B) \sim \mathcal{N}(0,\sigma^2)$ for all entries of the layer matrices. Using the Chernoff bound for the $l_1$ norm, we have
\begin{equation}
    \mathbb{P}(\|U_j\|_{1,1} \ge t) \le 2^{\xi_j} e^{-t^2/(2 \xi_j \sigma^2)}.
\end{equation}
Applying a union bound over all $L$ layers and defining $\xi_{\max} = \max_j \xi_j$, we obtain a high-probability bound:
\begin{equation}
    \mathbb{P}(\|U_j\|_{1,1} \le t \;\forall j) \ge 1 - \sum_{j=1}^L 2^{\xi_j} e^{-t^2/(2 \xi_j \sigma^2)} \ge 1 - e^{-t^2/(2 \xi_{\max} \sigma^2)} \sum_{j=1}^L 2^{\xi_j}.
\end{equation}
Setting the failure probability to $1/2$ gives
\begin{equation}
    t = \sigma \sqrt{2 \xi_{\max} \ln\left(2 \sum_{j=1}^L 2^{\xi_j}\right)}.
\end{equation}
Thus, with probability at least $1/2$, $\|U_j\|_{1} \le t$ holds for all layers simultaneously.

\subsubsection{Margin condition and noise scaling}
Using the perturbation bound derived for the PM and PTM representations, the output deviation under perturbation satisfies 
\begin{equation}
    \|f_{\vec{w}+\vec{u}}(\vec{x}) - f_{\vec{w}}(\vec{x})\|_{\infty} \le e\beta_{\rm P(T)M} \, \max_j \|U_j\|_{1,1}\leq e\beta_{\rm P(T)M} \,\sigma \sqrt{2 \xi_{\max} \ln\left(2 \sum_{j=1}^L 2^{\xi_j}\right)},
\end{equation}
where 
\begin{equation}
    \beta_{\rm PM} = \sum_{j=1}^L \prod_{\ell=j+1}^L \|W_\ell\|_{1,1}, \quad \beta_{\rm PTM} = \sqrt{d_\mathrm{out}^{(L)}}\sum_{j=1}^L\frac{1}{\sqrt{d_\mathrm{in}^{(j)}}}\prod_{\ell=j+1}^L \lVert W_\ell\rVert_{1,1}\,.
\end{equation}
We approximate $\beta_{\rm P(T)M}$ by $\tilde{\beta}_{\rm P(T)M}$ because $\beta_{\rm P(T)M}$ depends on the final weights $\vec{w}$ of the network (which are random variables drawn from the posterior), whereas the prior (and thus the standard deviation $  \sigma  $) must be chosen independently of the data and of the specific hypothesis $\boldsymbol{w}$. The PAC-Bayes margin condition in~\Cref{lemma:pac-bayes_margin} (perturbation $\le \gamma/4$) therefore imposes 
\begin{equation}
    \sigma \le \frac{\gamma (1-1/L)}{4e \tilde{\beta}_{\rm P(T)M} \sqrt{2 \xi_{\max} \ln\left(2 \sum_{j=1}^L 2^{\xi_j}\right)}}\,,
\end{equation}
and holds with probability at least $1/2$.

\subsubsection{KL divergence}

For our Gaussian prior and posterior distributions, the KL divergence reads
\begin{equation}
    \mathrm{KL}(\vec{w}+\vec{u}\|P) = \mathrm{KL}\bigl(\mathcal{N}(\boldsymbol{w},\sigma^2\mathbb{I}),\mathcal{N}(0,\sigma^2\mathbb{I})\bigr) = \frac{\|\boldsymbol{w}\|_2^2}{2 \sigma^2}\,,
\end{equation}
where $\vec{w}=\mathrm{vec}(\{W_j\}_{j=1}^L)$, and therefore
\begin{equation}
    \norm{\vec{w}}_2^2=\sum_{j=1}^L\norm{\mathrm{vec}(W_j)}_2^2=\sum_{j=1}^L\norm{W_j}_F^2.
\end{equation}
Substituting the upper bound on $\sigma$ yields
\begin{equation}
    \mathrm{KL}(\vec{w}+\vec{u}\|P) \le \frac{16e^2 \tilde{\beta}_{\rm P(T)M}^2 \xi_{\max} \ln\left(2 \sum_{j=1}^L 2^{\xi_j}\right)}{\gamma^2 (1-1/L)^2} \sum_{j=1}^L \|W_j\|_F^2.
\end{equation}

Finally,  we can apply \cref{lemma:pac-bayes_margin} and get that for any $\tilde{\beta}_{\rm P(T)M}$ with probability at least $1-\delta$ and for all $\boldsymbol{w}$ such that $|\beta_{\rm P(T)M}-\tilde{\beta}_{\rm P(T)M}|\leq \beta_{\rm P(T)M}/L$, we have
\begin{equation}\label{eq:before_union}
    L_0(f_{\vec{w}}^{\mathrm{P(T)M}}) \leq \hat{L}_{\gamma}(f_{\vec{w}}^{\mathrm{P(T)M}})+ \mathcal{O}\left(\sqrt{\frac{\beta_{\rm P(T)M}^2\:\xi_{\max} \ln\left(\sum_{j=1}^L2^{\xi_j}\right)\:\sum_{j=1}^L\norm{W_j}_F^2+\ln\left(\frac{N}{\delta}\right)}{\gamma^2 N}} \right)\,.
\end{equation}

\subsubsection{Union bound}
The bound in Eq.~\eqref{eq:before_union} holds for a fixed choice of $\tilde{\beta}_{\rm P(T)M}$. To obtain a guarantee that holds simultaneously for all admissible values of $\beta_{\rm P(T)M}$, we apply a union bound over the discretized set of possible approximations $\tilde{\beta}_{\rm P(T)M}$.

Recall that $\beta_{\rm P(T)M}$ is discretized through a covering net so that for every realization of $\beta_{\rm P(T)M}$ there exists a corresponding $\tilde{\beta}_{\rm P(T)M}$ satisfying $|\beta_{\rm P(T)M}-\tilde{\beta}_{\rm P(T)M}|\le \beta_{\rm P(T)M}/L$. Let $\beta_{\rm P(T)M}$ take values in the interval $[\beta_{\min},\beta_{\max}]$. The resulting covering net therefore has size
\begin{equation}
    \mathcal{N}_\beta
    \le
    \frac{L\,\beta_{\max}}{\beta_{\min}} .
\end{equation}
Applying a union bound over all elements of this covering net amounts to replacing $\delta$ in Eq.~\eqref{eq:before_union} by $\delta/\mathcal{N}_\beta$, which guarantees that the overall failure probability remains at most $\delta$.

Depending on the chosen parametrization, $\beta_{\min}$ and $\beta_{\max}$ take different values. This leads to different covering net sizes $\mathcal{N}_\beta$ and therefore to different generalization bounds, as will be shown in the following subsections.

\subsection{PAC-Bayesian generalization bound in the PM formalism}\label{a:pac-bayes_PM}
After deriving a unified bound for both the PM and PTM parametrizations, we now need to determine the corresponding values of $\beta_{\min}$ and $\beta_{\max}$ in each case. These quantities allow us to characterize the size of the covering net and thus obtain the final generalization error bound. The range of $\beta_{\rm PM}$ is established in the following lemma.

\begin{lemma}[Upper and lower bounds on $\beta_{\mathrm{PM}}$]\label{a-lemma:beta_bounds-PM}
Let
\begin{equation}
\beta_{\mathrm{PM}} \coloneqq \sum_{j=1}^L \left( \prod_{\ell=j+1}^L \norm{W_\ell}_{1,1} \right).
\end{equation}
Then $\beta_{\mathrm{PM}}$ satisfies the bounds 
\begin{equation}
    1 \;\leq\; \beta_{\mathrm{PM}}
    \;\leq\;M(n,L,\xi_{\max}),
\end{equation}
where 
\begin{equation}
M(n,L,\xi_{\max})
:=
\frac{\left(\frac{2\xi_{\max}^2+\xi_{\max}-2}{4^n}\right)^{L}-1}
{\left(\frac{2\xi_{\max}^2+\xi_{\max}-2}{4^n}\right)-1}.
\end{equation}
\end{lemma}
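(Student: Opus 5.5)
The lower bound is immediate from the structure of the sum. The last term $j=L$ has an empty product over $\ell\in\{L+1,\dots,L\}$, which by convention equals $1$. Every other summand is a product of entrywise $1$-norms, so it is nonnegative. Hence
\begin{equation}
    \beta_{\mathrm{PM}}\;\geq\;\prod_{\ell=L+1}^{L}\norm{W_\ell}_{1,1}=1 .
\end{equation}

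For the upper bound, I would invoke \Cref{a-lemma:W_11_bound-PM} layer by layer. Each $W_\ell$ is the deviation of a valid PM from $\mathbb{I}/4^n$ and has at most $\xi_\ell\le\xi_{\max}$ non-zero entries. The bound $(2\xi^2+\xi-2)/4^n$ is nondecreasing in $\xi$ for $\xi\ge 1$, so every layer obeys $\norm{W_\ell}_{1,1}\le q$, where
\begin{equation}
    q\coloneqq\frac{2\xi_{\max}^2+\xi_{\max}-2}{4^n}.
\end{equation}
Substituting this into each product gives $\prod_{\ell=j+1}^L\norm{W_\ell}_{1,1}\le q^{L-j}$. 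Summing over $j$ and reindexing with $k=L-j$ yields
\begin{equation}
    \beta_{\mathrm{PM}}\;\le\;\sum_{k=0}^{L-1}q^k=\frac{q^L-1}{q-1}=M(n,L,\xi_{\max}).
\end{equation}
This is exactly the claimed expression.

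The only delicate point is the degenerate case $q=1$, where the closed form is $0/0$. There I would read $M$ as its continuous limit $L$, which agrees with the geometric sum. I would also check that replacing each $\xi_\ell$ by $\xi_{\max}$ preserves the inequality. This holds by the monotonicity noted above, together with the fact that $q\ge0$ whenever $\xi_{\max}\ge1$. Nothing else should be an obstacle, since all the real work sits in \Cref{a-lemma:W_11_bound-PM}.
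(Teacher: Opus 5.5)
Your proof is correct and is essentially the paper's argument: the empty product at $j=L$ gives the lower bound, and the uniform layerwise bound $\norm{W_\ell}_{1,1}\le(2\xi_{\max}^2+\xi_{\max}-2)/4^n$ from \Cref{a-lemma:W_11_bound-PM} turns $\beta_{\mathrm{PM}}$ into a geometric series. The paper leaves two points implicit that you make explicit, and both are worthwhile: reading $M$ as its limit $L$ when $q=1$, and assuming $\xi_{\max}\ge 1$ (for $\xi_{\max}=0$ one has $q<0$, and the stated $M$ can drop below $1$).
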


\begin{proof}
We first establish the lower bound. Since all norms $\norm{W_\ell}_{1,1}$ are non-negative, every term in the sum defining $\beta_{\mathrm{PM}}$ is non-negative. In particular, for $j = L$ the product is taken over the empty index set, and by convention the empty product equals $1$. Therefore,
\begin{equation}
\beta_{\mathrm{PM}} \geq 1.
\end{equation}

We now turn to the upper bound. Looking at the definition of $\beta_{\rm PM}$, each term in the sum is a product of non-negative quantities and is therefore monotone increasing in each individual norm $\norm{W_j}_{1,1}$. Consequently, $\beta_{\mathrm{PM}}$ is maximized when all norms simultaneously attain their maximal allowed value. Denoting by $C$ a uniform upper bound such that $\norm{W_j}_{1,1} \leq C$ for all layers $j$, we obtain
\begin{equation}
    \beta_{\mathrm{PM}}
    \leq \sum_{j=1}^L C^{L-j}
    = \frac{C^L - 1}{C - 1}.
\end{equation}
Taking $C=\frac{2\xi_{\max}^2+\xi_{\max}-2}{4^n}$, as established in \cref{a:preliminaries-param-PM}, to obtain a uniform bound on all layers, completes the proof.
\end{proof}

\cref{a-lemma:beta_bounds-PM} provides the physical bounds on $ \beta_{\rm PM} $, which we use to determine the covering net size required for the generalization error bound. However, note that $ \xi \ln(L 2^\xi) > 1 $. Therefore, if
\[
\frac{\beta^2 \sum_{j=1}^L \|W_j\|_F^2}{\gamma^2 N} \geq 1 ,
\]
the error term in Eq.~\eqref{eq:before_union} would exceed 1, which would render the bound trivial.

For this reason, the maximum value of $\beta$ used in the covering net size can be taken as the minimum between the physically allowed maximum value of $\beta$ and the value of $\beta$ that would make the bound trivial. Using this observation together with Eq.~\eqref{eq:before_union} and \cref{a-lemma:beta_bounds-PM}, we can now state the generalization error bound for the PM parameterization presented in the main text.

\begin{theorem}[PAC-Bayes generalization bounds for quantum models-- PM parametrization (restatement of \cref{th:pac-bayes-PM})]
      For any $L > 0$, let $f_{\vec{w}}^{\mathrm{PM}}: \mathcal{X} \to \mathbb{R}^K$ be an $L$-layer quantum model represented via the PM framework $\phi_j = \phi_{\mathrm{PM}}^{W_j}$ with $\xi_j\coloneqq \norm{W_j}_0$ and $\xi_{\max}=\max_j \xi_j$. Then, for any $\gamma > 0$, with probability at least $1-\delta$ over the training set of size $N$, for any $\vec{w}= \mathrm{vec}(\{W_j\}_{j=1}^L)$, we have
    \begin{equation}
        L_0(f_{\vec{w}}^{\mathrm{PM}}) \leq \hat{L}_{\gamma}(f_{\vec{w}}^{\mathrm{PM}})+ \mathcal{O}\left(\sqrt{\frac{\beta_{\rm PM}^2\:\xi_{\max} \ln\left(\sum_{j=1}^L2^{\xi_j}\right)\:\sum_{j=1}^L\norm{W_j}_F^2+\ln\left(\frac{N \;L\min\left\{\frac{\gamma \sqrt{N}}{\sqrt{\sum_{j=1}^L\norm{W_j}_F^2}},\:M(n,L,\xi_{\max})\right\}}{\delta}\right)}{\gamma^2 N}} \right)\,,
    \end{equation}
    where
    \begin{equation}\label{eq:beta-pm}
        \beta_{\rm PM}=\sum_{j=1}^L\left(\prod_{\ell=j+1}^L\norm{W_\ell}_{1,1}\right)\,,
    \end{equation}
    and
    \begin{equation}
        M(n,L,\xi_{\max})=\frac{\left(\frac{2\xi_{\max}^2+\xi_{\max}-2}{4^n}\right)^{L}-1}{\left(\frac{2\xi_{\max}^2+\xi_{\max}-2}{4^n}\right)-1}\,.
    \end{equation}
\end{theorem}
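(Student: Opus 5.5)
The plan is to assemble the ingredients already derived in this appendix. First fix a candidate value $\tilde\beta$ from a finite grid. For that value, choose the data-independent prior $P=\mathcal{N}(0,\sigma^2\mathbb{I})$, with
\[
\sigma=\frac{\gamma(1-1/L)}{4e\tilde\beta\sqrt{2\xi_{\max}\ln\bigl(2\sum_j 2^{\xi_j}\bigr)}}.
\]
Take the posterior to be $\mathcal{N}(\vec w,\sigma^2\mathbb{I})$. Apply the Gaussian tail bound for $\|U_j\|_{1,1}$, union bounded over layers: with probability at least $1/2$, every $\|U_j\|_{1,1}\le t$. We also need the smallness hypothesis $\|U_j\|_{1,1}\le \|W_j\|_{1,1}/L$ of \Cref{a-lemma:pert_bound_PM}. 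I will handle it by noting that it is implied by the same event whenever $t\le \|W_j\|_{1,1}/L$. If this is not automatic, I will absorb it into the choice of $\sigma$, at the cost of constants only.

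On this event, \Cref{a-lemma:pert_bound_PM} gives the output deviation bound $e\beta_{\rm PM}t$. If $|\beta_{\rm PM}-\tilde\beta|\le\beta_{\rm PM}/L$, then $\beta_{\rm PM}\le\tilde\beta/(1-1/L)$, so the deviation is at most $\gamma/4$. The hypothesis of \Cref{lemma:pac-bayes_margin} therefore holds. The KL term equals $\|\vec w\|_2^2/(2\sigma^2)=\sum_j\|W_j\|_F^2/(2\sigma^2)$. Since $\tilde\beta\le(1+1/L)\beta_{\rm PM}$, this yields Eq.~\eqref{eq:before_union}, with $\beta_{\rm PM}^2\xi_{\max}\ln(\sum_j2^{\xi_j})\sum_j\|W_j\|_F^2/\gamma^2$ up to constants. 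Here the factors $(1\pm1/L)^{-2}$ are bounded for $L\ge2$; the case $L=1$ has $\beta_{\rm PM}=1$ and needs no net.

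Next comes the union bound over the grid of $\tilde\beta$ values. The range is $[\beta_{\min},\beta_{\max}]$ with $\beta_{\min}=1$ by \Cref{a-lemma:beta_bounds-PM}. For $\beta_{\max}$ I take the minimum of two quantities. The first is the physical bound $M(n,L,\xi_{\max})$ from \Cref{a-lemma:beta_bounds-PM}. The second is the threshold $\gamma\sqrt N/\sqrt{\sum_j\|W_j\|_F^2}$, beyond which the complexity term exceeds $1$ and the inequality holds trivially because $L_0\le1$. A multiplicative grid with ratio $(1+1/L)$ covers this range with at most $L\beta_{\max}/\beta_{\min}$ points. Replacing $\delta$ by $\delta/\mathcal N_\beta$ then produces exactly the logarithmic term $\ln\bigl(NL\min\{\cdot,M\}/\delta\bigr)$ in the statement. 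Any $\vec w$ whose $\beta_{\rm PM}$ lies beyond the truncation is covered by the trivial bound.

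The main obstacle is that the truncation threshold depends on $\sum_j\|W_j\|_F^2$, which is data dependent. The union bound must still be over a fixed, data-independent family of priors. I would first try to fix the net using the physical range $[1,M]$ only, so that the net is data-independent. I would then observe that a $\vec w$ is nontrivially covered only by grid points below the trivializing threshold. The effective count entering the guarantee is therefore the minimum of the two, and the union bound is applied to the full fixed net while the statement quotes the smaller of the two counts. If this bookkeeping fails, a fallback is an additional union bound over a dyadic grid of $\sum_j\|W_j\|_F^2\le L\max_j\|W_j\|_F^2$, bounded via \Cref{a-lemma:W_frob_bound-PM}, which only adds a further logarithmic term inside the $\mathcal O$. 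The dependence of $\xi_j$ on the data should be handled similarly, by a union bound over the finitely many sparsity patterns. This again contributes only logarithmic terms absorbed into $\xi_{\max}\ln\sum_j2^{\xi_j}$.
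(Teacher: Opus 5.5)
Your main line is the paper's own proof. It uses the same Gaussian prior/posterior pair, the $\ell_1$ tail bound union-bounded over layers, $\sigma$ fixed by the margin condition through a net value $\tilde\beta$, the KL term $\sum_j\|W_j\|_F^2/(2\sigma^2)$, and a net over $\beta_{\rm PM}$ truncated at $\min\{\gamma\sqrt{N}/(\sum_j\|W_j\|_F^2)^{1/2},M(n,L,\xi_{\max})\}$. Your multiplicative grid and your separate treatment of $L=1$ are fine; at $L=1$ the paper's factor $1-1/L$ would force $\sigma=0$. The problems are in the three repairs you add for points the paper passes over silently, and none goes through as stated.

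First, the hypothesis $\|U_\ell\|_{1,1}\le\|W_\ell\|_{1,1}/L$ of the perturbation lemma cannot be absorbed into $\sigma$ \emph{at the cost of constants only}. On your event it requires $t\le\min_{\ell\ge2}\|W_\ell\|_{1,1}/L$, but $\|W_\ell\|_{1,1}$ can be arbitrarily small for a nearly depolarizing intermediate layer. Enforcing it makes $\sigma$ depend on a new data-dependent quantity, which needs its own net. It also turns the factor $\beta_{\rm PM}^2/\gamma^2$ in the KL bound into, up to constants, $\max\{\beta_{\rm PM}^2/\gamma^2,\,L^2/\min_{\ell\ge2}\|W_\ell\|_{1,1}^2\}$, which the claimed term does not control. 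Dropping the hypothesis instead leaves $\prod_{\ell>j}(\|W_\ell\|_{1,1}+t)$ in the recursion. That product can exceed $\prod_{\ell>j}\|W_\ell\|_{1,1}$ by an unbounded factor, for example a nearly depolarizing layer preceded by one with large $\|W_\ell\|_{1,1}$. The paper applies the lemma without checking this hypothesis, so this hole is inherited, not closed.

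Second, your primary bookkeeping for the truncation is invalid. A union bound over a fixed net of size $\mathcal{N}$ charges $\ln\mathcal{N}$ to every $\vec{w}$. Covering $[1,M]$ therefore gives $\ln(NLM/\delta)$, and you cannot afterwards quote a smaller, $\vec{w}$-dependent count. The paper's step \emph{replace $\delta$ by $\delta/\mathcal{N}_\beta$} with a data-dependent $\beta_{\max}$ has the same defect. Your dyadic fallback is awkward because $\sum_j\|W_j\|_F^2$ has no positive lower bound. A weighted union bound avoids it: take $\tilde\beta_k=(1+1/L)^k$, $k\ge0$, with confidence $\delta_k=\delta/((k+1)(k+2))$.
\begin{itemize}
\item Every $\vec{w}$ is matched by some $k\le(L+1)\ln\beta_{\rm PM}$.
\item Whenever its bound is non-trivial, $\beta_{\rm PM}\le\min\{\gamma\sqrt{N}/(\sum_j\|W_j\|_F^2)^{1/2},M\}$.
\item It therefore pays $\ln(6N/\delta_k)=\mathcal{O}(\ln(NL\min\{\cdot,M\}/\delta))$, which is exactly the stated logarithmic term.
\end{itemize}

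Third, learned sparsity cannot be handled at logarithmic cost. The tail bound $2^{\xi_j}e^{-t^2/(2\xi_j\sigma^2)}$ means the noise lives on the $\xi_j$ nonzero entries, so the prior must live on the same support. A union (or mixture prior) over supports adds $\sum_j\ln\binom{16^n}{\xi_j}$ to the numerator, which can be of order $n\sum_j\xi_j$. This term is not multiplied by $\beta_{\rm PM}^2\sum_j\|W_j\|_F^2$. It therefore cannot be absorbed into $\xi_{\max}\ln\sum_j 2^{\xi_j}$, and it does not shrink with the weights. The statement as written needs the supports, hence the $\xi_j$ and $\sigma$, to be fixed by the architecture before seeing data, which is how the paper implicitly treats them.
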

The logarithmic term in our bound may at first seem somewhat difficult to interpret, and the reader might be concerned that it could scale unfavorably with the number of layers $L$. However, we can study the asymptotic behavior of $M(n,L,\xi_{\max})$  to identify the regime in which its dependence on $L$ is at most logarithmic. For this analysis, we take$C=(2\xi_{\max}^2+\xi_{\max}-2)/4^n$ and examine the asymptotic behavior of this quantity as a function of $C$. If $C>1$, then
\begin{equation}
\frac{C^{L}-1}{C-1}
\sim \frac{C^{L}}{C-1}
=
\Theta(C^{L-1}),
\end{equation}
so $\beta_{\mathrm{PM}}$ grows exponentially with $L$. Consequently, the generalization bound contains a depth-dependent contribution that cannot be absorbed into logarithmic factors. In contrast, if $C\to 0$, then
\begin{equation}
\frac{C^{L}-1}{C-1}
=
1+\mathcal{O}(C),
\end{equation}
so that $\beta_{\mathrm{PM}} = 1+\mathcal{O}(C)$ and the dependence on $L$ disappears up to lower-order corrections. Moreover, taking the limit $C\to 1$ and applying L'Hôpital's rule yields
\begin{equation}
\lim_{C\to 1}\frac{C^{L}-1}{C-1}
=
L,
\end{equation}
so in this regime $\beta_{\mathrm{PM}} = \Theta(L)$ and its contribution to the generalization bound appears only logarithmically. Finally, from the definition
\begin{equation}
C = \frac{2\xi_{\max}^2+\xi_{\max}-2}{4^n},
\end{equation}
we observe that $C = o(1)$ whenever $\xi_{\max} = o(2^n)$. In this sparsity regime, $\beta_{\mathrm{PM}}$ remains at most linear in $L$, and therefore all depth-dependent contributions enter only through logarithmic factors.
\\

\paragraph{Comparison to uniform bounds.}
Naturally, perturbation bounds can also be used to derive uniform generalization bounds via Rademacher complexities and covering numbers, following a proof strategy that is well known in classical ML (compare, e.g., \cite{dudley_1999, Barlett2001}) and that was introduced to QML in \cite{Du2022, caro2022generalization}. Hence, for each of our PAC-Bayesian generalization bounds, we provide a comparison to a uniform generalization bound derived from the respective perturbation bound.
We note that this comparison requires us to prove new uniform generalization bounds, tailored to our chosen parameterizations by taking the specific perturbation bounds and parameter norm bounds into account. 

In the proof of \Cref{a-lemma:pert_bound_PM}, we have in particular shown that
\begin{equation}\label{eq:beginning-uniform-pm-bound}
    \norm{\phi^{W}_{\mathrm{PM}} - \phi^{W'}_{\mathrm{PM}} }_{1\to 1}
    \leq \norm{W - W'}_{1,1} \, .
\end{equation}
Consequently, using \cite[Proposition 3.41, Point 2]{Watrous_2018} and a telescopic sum, we see that
\begin{align}
    \norm{f_{\vec{w}}^{\mathrm{PM}} - f_{\vec{w}'}^{\mathrm{PM}}}_\infty
    &\leq \norm{\phi^{W_L}_{\mathrm{PM}}\dots(\phi^{W_1}_{\mathrm{PM}}(\rho(\vec{x})))) - \phi^{W'_L}_{\mathrm{PM}}\dots(\phi^{W'_1}_{\mathrm{PM}}(\rho(\vec{x}))))}_1\\
    \nonumber
    &\leq \sum_{j=1}^L \norm{\phi^{W_j}_{\mathrm{PM}} - \phi^{W'_j}_{\mathrm{PM}} }_{1\to 1}\cdot \underbrace{\norm{\rho(\vec{x})}_1}_{=1}\\
    \nonumber
    &\leq \sum_{j=1}^L\norm{W_j - W'_j}_{1,1} \, .
\end{align}
Next, using \Cref{a-lemma:W_11_bound-PM}, we know that for each layer $j$, the set of admissible matrices $W_j$ lies inside the $\norm{\cdot}_{1,1}$-ball of radius $R_j\leq (2\xi_j^2+\xi_j-2)/4^n\leq 3\xi_j^2/4^n$ around $0$ in the space of $4^n\times 4^n$ matrices, and admissible matrices have at most $\xi_i$ many non-zero complex entries.
We can now use standard covering number bounds for norm balls in real vector spaces \cite[Corollary 4.2.13]{vershynin2018highdimensional}, together with the fact that there are $\binom{16^n}{\xi_i}$ possible ways of placing the $\xi_i$ many non-zero entries among the overall $16^n$ entries, as well as approximate monotonicity of covering numbers (see, e.g., \cite[Section 4.2]{vershynin2018highdimensional}) to see that the relevant covering number is bounded by
\begin{align}
    \mathcal{N}(\{f_{\vec{w}}^{\mathrm{PM}}\}_{\vec{w}}, \norm{\cdot}_\infty, \varepsilon)
    &\leq \prod_{j=1}^L \begin{cases} 
        \binom{16^n}{\xi_j} \mathcal{N}(B^{2\xi_j}_{R_j, \norm{\cdot}_{1,1}}, \norm{\cdot}_{1,1}, \varepsilon/2L) &\textrm{if $\varepsilon/2L < R_j$}\\
        \nonumber
        1 &\textrm{else}
    \end{cases}\\
    &\leq \prod_{j=1}^L \begin{cases}
        \left(\frac{16^n e}{\xi_j}\right)^{\xi_j} \cdot\left(\frac{6LR_j}{\varepsilon}\right)^{2\xi_j} \quad &\textrm{if $\varepsilon/2L < R_j$}\\
        1 &\textrm{else} 
    \end{cases} \, .
\end{align}
Here, we use $B_{r,\norm{\cdot}}^{d}$ to denote the $\norm{\cdot}$-ball of radius $r$ around $0$ in $\mathbb{R}^d$.
Accordingly, the metric entropy can be bounded as
\begin{equation}
    \log \mathcal{N}(\{f_{\vec{w}}^{\mathrm{PM}}\}_{\vec{w}}, \norm{\cdot}_\infty, \varepsilon)
    \leq \mathcal{O}\left(  \sum_{j=1}^L 
    \begin{cases} 
        \xi_j \log\left(\frac{4^n L R_j}{\varepsilon}\right)\quad &\textrm{if $\varepsilon/2L < R_j$}\\
        0 &\textrm{else}
    \end{cases}\right) \, ,
\end{equation}
where the $\widetilde{\mathcal{O}}$ hides factors logarithmic in the leading order. 
The regime in which term $i$ contributes non-trivially to the bound is $\varepsilon < 2L R_i$. 

Combining Rademacher complexities \cite{Barlett2001} and Dudley's theorem \cite{dudley_1999} (see, e.g., 
Ref.\ \cite[Supplementary Note 3]{caro2022generalization} and 
Ref.\ \cite{Barlett2001} for details), this leads to the following uniform generalization bound: With high (constant) probability over the choice of a training data set of size $N$, we have 
\begin{align}
    L_0(f_{\vec{w}}^{\mathrm{PM}})
    &\leq \hat{L}_\gamma (f_{\vec{w}}^{\mathrm{PM}}) + \mathcal{O}\left( \sqrt{\frac{\sum_{j=1}^L \xi_j \log\left(4^n L R_j\right)\cdot \min\{2LR_j, 1/2\}}{\gamma^2 N}}\right) \\
    &\leq \hat{L}_\gamma (f_{\vec{w}}^{\mathrm{PM}}) + \mathcal{O}\left( \sqrt{\frac{\sum_{j=1}^L \xi_j \log\left( L \xi_j^2 \right)\cdot \min\{6L\xi_j^2 / 4^n, 1/2\}}{\gamma^2 N}}\right) \, ,\nonumber
\end{align}
where $\vec{w}$ is the 
parameter vector after training.

Let us now compare this uniform generalization bound with our PAC-Bayes generalization bound for the PM formalism, which (compare \cref{th:pac-bayes-PM}) is
\begin{equation}
    L_0(f_{\vec{w}}^{\mathrm{PM}}) \leq \hat{L}_{\gamma}(f_{\vec{w}}^{\mathrm{PM}})+ \widetilde{\mathcal{O}}\left(\sqrt{\frac{\beta_{\rm PM}^2\:\xi_{\max} \ln\left(\sum_{j=1}^L2^{\xi_j}\right)\:\sum_{j=1}^L\norm{W_j}_F^2}{\gamma^2 N}} \right)\,,
\end{equation}
with $\beta_{\rm PM}=\sum_{j=1}^L\left(\prod_{\ell=j+1}^L\norm{W_\ell}_{1,1}\right)$.
For simplicity, we make the comparison for the case $\xi_j=\xi_{\mathrm{max}}$ for all $j$. In this case, we see that our PAC-Bayesian bound improves upon the uniform bound in the regime 
\begin{equation}
    \beta_{\rm PM}^2\:\xi_{\max} \ln\left(L 2^{\xi_{\max}}\right)\:\sum_{j=1}^L\norm{W_j}_F^2
    \leq \mathrm{const}\cdot  L\xi_{\max} \log\left( L \xi_{\max}^2\right)\cdot \min\{6L\xi_{\max}^2 / 4^n, 1/2\} \, ,
\end{equation}
which we can rearrange to
\begin{equation}
    \beta_{\rm PM}^2\:\sum_{j=1}^L\norm{W_j}_F^2
    \leq \mathrm{const}\cdot  \frac{L \log\left( L \xi_{\max}^2\right)\cdot \min\{6L\xi_{\max}^2 / 4^n, 1/2\}}{\ln\left(L 2^{\xi_{\max}}\right)}\, .
\end{equation}
As discussed in the previous section, by substituting the explicit upper bound on $\norm{W_l}_{1,1}$ from \Cref{a-lemma:W_11_bound-PM} into \Cref{a-lemma:beta_bounds-PM}, we get
 $1\leq \beta_{\rm PM}\leq M(n,L,\xi_{\max})$, with
 \begin{equation}
 M(n,L,\xi_{\max})=\frac{\left(\frac{2\xi_{\max}^2+\xi_{\max}-2}{4^n}\right)^L-1}{\left(\frac{2\xi_{\max}^2+\xi_{\max}-2}{4^n}\right)-1}.
\end{equation}
 Therefore, a sufficient condition for the PAC-Bayesian bound to outperform the uniform bound is
\begin{equation}
    \sum_{j=1}^L\norm{W_j}_F^2
    \leq \mathrm{const}\cdot \frac{L \log\left( L \xi_{\max}^2\right)\cdot \min\{6L\xi_{\max}^2 / 4^n, 1/2\}}{M(n,L,\xi_{\max})^2 \ln\left(L 2^{\xi_{\max}}\right)} \, ,
\end{equation}
This condition gives a regime for parameters post-training in which our non-uniform PAC-Bayesian bound is better than the uniform bound for the same QML model class. 
Using that $\norm{W_j}_F^2 \leq \frac{\xi_{\max}^3}{4^{2n}}$ by \Cref{a-lemma:W_frob_bound-PM}, we can obtain a (more pessimistic) parameter-independent sufficient condition for the PAC-Bayesian bound to be superior, namely 
\begin{equation}\label{eq:PM_comparison}
    \frac{\xi_{\max}^3}{4^{2n}}
    \le
    \mathrm{const}\cdot
    \frac{
        \log\!\left(L \xi_{\max}^2\right)\,
        \min\{6L\xi_{\max}^2 / 4^n,\, 1/2\}
    }{
        M(n,L,\xi_{\max})^2\,
        \ln\!\left(L 2^{\xi_{\max}}\right)
    } .
\end{equation}
Recall that when
\begin{equation}
    \frac{2\xi_{\max}^2+\xi_{\max}-2}{4^n} > 1 ,
\end{equation}
the quantity $M(n,L,\xi_{\max})$ grows exponentially in $L$. Consequently, the right-hand side of~\eqref{eq:PM_comparison} decays exponentially with $L$, making this regime clearly unfavorable. We therefore restrict our analysis to
\begin{equation}
    0 < \frac{2\xi_{\max}^2+\xi_{\max}-2}{4^n} \le 1 .
\end{equation}
The case where this ratio is zero is excluded, since it corresponds to $\xi_{\max}=0$, i.e., a maximally depolarizing channel. In that situation our bound vanishes, while the uniform bound remains non-zero, so the comparison is not meaningful. When the ratio is strictly smaller than one, we have
\begin{equation}
    M(n,L,\xi_{\max})
    =
    1 + \mathcal{O}\!\left(
        \frac{2\xi_{\max}^2+\xi_{\max}-2}{4^n}
    \right),
\end{equation}
and in particular $M(n,L,\xi_{\max})=\mathcal{O}(1)$ whenever $\xi_{\max}=o(2^n)$. We now analyze the two regimes determined by the minimum in~\eqref{eq:PM_comparison}. If $6L\xi_{\max}^2 / 4^n \le 1/2$, which is equivalent to $\xi_{\max} \le \frac{2^n}{\sqrt{12L}}$, then $M(n,L,\xi_{\max})=\mathcal{O}(1)$ and the comparison reduces, up to constants, to
\begin{equation}
    \frac{\xi_{\max}^3}{4^{2n}}
    \lesssim
    \frac{
        \bigl(\log L + 2\log \xi_{\max}\bigr)
        \, L \xi_{\max}^2
    }{
        \bigl(\ln L + \xi_{\max}\ln 2\bigr)
        \, 4^n
    } .
\end{equation}
Ignoring logarithmic factors, this inequality requires $\xi_{\max} \lesssim \widetilde{\mathcal{O}}(\sqrt{L}\,2^n)$. However, this condition is weaker than the defining assumption of the present regime, namely $\xi_{\max} \lesssim 2^n/\sqrt{L}$, which is more restrictive for large $L$. Therefore, throughout this regime, whenever its defining condition holds, the comparison inequality is automatically satisfied up to logarithmic factors, and this entire region is favorable. On the other hand, when $6L\xi_{\max}^2 / 4^n \ge 1/2$, the minimum equals $1/2$ and~\eqref{eq:PM_comparison} becomes, up to constants,
\begin{equation}
    \frac{\xi_{\max}^3}{4^{2n}}
    \lesssim
    \frac{
        \log(L\xi_{\max}^2)
    }{
        M(n,L,\xi_{\max})^2\,
        \ln(L2^{\xi_{\max}})
    } .
\end{equation}
Assuming again that $\xi_{\max}=o(2^n)$ so that $M(n,L,\xi_{\max})=\mathcal{O}(1)$, this yields the favorable scaling $\xi_{\max}\le \widetilde{\mathcal{O}}(2^n)$. Finally, at the boundary case where $\frac{2\xi_{\max}^2+\xi_{\max}-2}{4^n}=1$, one has $M(n,L,\xi_{\max})=L$, so the right-hand side of~\eqref{eq:PM_comparison} acquires an additional factor $L^{-2}$, producing a polynomial suppression in $L$ and signaling the onset of the unfavorable regime. Combining the previous considerations, we conclude that a sufficient and essentially sharp condition ensuring that the PAC–Bayesian bound is superior, up to logarithmic factors, is $\xi_{\max}=o(2^n)$. In this regime $M(n,L,\xi_{\max})=\mathcal{O}(1)$ and the right-hand side of~\eqref{eq:PM_comparison} does not suffer from exponential or polynomial suppression in $L$. 
\\

\paragraph{PAC-Bayes generalization bound for dynamic PQCs.}
Dynamic parameterized quantum circuits
\cite{deshpande2024dynamicparameterizedquantumcircuits} generalize
standard PQCs by allowing mid-circuit measurements and classical feedforward. Their basic building block is a parameterized dynamic operation implementing a probabilistic unitary transformation. 
The associated 
quantum channel is
\begin{equation}
\mathcal{F}(\theta,\phi,\varphi)(\cdot)
=
\cos^2\!\left(\tfrac{\theta}{2}\right)(\cdot)
+
\sin^2\!\left(\tfrac{\theta}{2}\right)\,
\mathcal{U}(\phi,\varphi)(\cdot)\mathcal{U}^\dagger(\phi,\varphi),
\end{equation}
where
\begin{equation}
\mathcal{U}(\phi,\varphi)=
\begin{pmatrix}
\cos\varphi & -e^{-i\phi}\sin\varphi\\
e^{i\phi}\sin\varphi & \cos\varphi
\end{pmatrix}.
\end{equation}
We consider a 
single-layer architecture consisting of two consecutive dynamic operations per 
qubit, followed by a global unitary, which we treat as one effective layer. In this setting $L=1$ and $\beta=1$, and the leading contribution to the non-uniform PAC-Bayes bound is controlled by the Frobenius norm $\|W\|_F^2$, where $W=\chi-\mathbb{I}/4^n$ and $\chi$ is the PM of the induced channel. As shown in the previous subsection, $\|\chi\|_F^2$ is proportional to the purity of the unnormalized Choi operator of the channel, and therefore quantifies the deviation from the maximally mixed Choi state. Since
$\tr(\chi)=1$, we have
\begin{equation}
\|W\|_F^2=\|\chi\|_F^2-\frac{1}{4^n}.
\end{equation}
Bounding the generalization error thus reduces to analyzing the purity of the PM associated with the dynamic PQC architecture.

Unitary transformations preserve the purity of the Choi operator and, equivalently, the Frobenius norm of the PM. Consequently, the global unitary (as well as any intermediate unitaries) does not affect $\|\chi\|_F^2$, and only the local dynamic blocks contribute to the generalization bound.

Because the dynamic operations act locally on each qubit, the overall channel as well as the norm of the PM factorize as
\begin{equation}
\phi=\bigotimes_{i=1}^n \phi_i,\quad \|\chi\|_F^2=\prod_{i=1}^n \|\chi_i\|_F^2,
\end{equation}
where $\chi_i$ denotes the PM of the effective single-qubit channel
\begin{equation}
\phi_i=
\mathcal{F}(\theta^{(i)}_1,\phi^{(i)}_1,\varphi^{(i)}_1)\circ
\mathcal{F}(\theta^{(i)}_2,\phi^{(i)}_2,\varphi^{(i)}_2).
\end{equation}
This tensor-product structure has an important computational consequence: The PM of the full $n$-qubit channel can be evaluated efficiently by computing the PM of each effective single-qubit channel independently. As a result, both $\|\chi\|_F^2$ and the corresponding PAC-Bayes bound can be computed with a cost that scales linearly in $n$. In contrast, for a generic $n$-qubit channel without such a factorized structure, the PM acts on a $4^n$-dimensional space and its explicit construction requires resources that scale exponentially with $n$.

Each dynamic operation admits a Kraus representation with two operators, 
\begin{equation}
\widetilde{K}_0=\cos\!\left(\tfrac{\theta}{2}\right)\mathbb{I},
\qquad
\widetilde{K}_1=\sin\!\left(\tfrac{\theta}{2}\right)\mathcal{U}(\phi, \theta).
\end{equation}
The composition of two such operations yields four Kraus operators $\{K_i\}_{j=0}^3$. From this point on, we focus on a single-qubit channel and, for notational simplicity, drop the subscript $i$, writing $\chi$ for the PM of the effective single-qubit channel $\phi_i$.

For a single-qubit channel with Kraus operators $\{K_j\}$, the purity of the PM can be expressed as
\begin{equation}
\tr(\chi^2)=\frac{1}{4}\sum_{i,j}|\tr(K_i^\dagger K_j)|^2,
\end{equation}
where the prefactor corresponds to the single-qubit case.
In the present setting, the Kraus operators take the form
\begin{align}
    K_0&=\cos(\theta_1/2)\cos(\theta_2/2)\,\mathbb{I},\\
    K_1&=\sin(\theta_1/2)\cos(\theta_2/2)\,U(\phi_1,\varphi_1),\\
    K_2&=\cos(\theta_1/2)\sin(\theta_2/2)\,U(\phi_2,\varphi_2),\\
    K_3&=\sin(\theta_1/2)\sin(\theta_2/2)\,
    U(\phi_2,\varphi_2)U(\phi_1,\varphi_1).
\end{align}
Writing $K_i=c_i U_i$, with $U_i$ unitary, we obtain
\begin{equation}
    \tr(\chi^2)=\frac{1}{4}\sum_{i,j}c_i^2 c_j^2
    \,|\tr(U_i^\dagger U_j)|^2.
\end{equation}
The minimum of $\tr(\chi^2)$ is achieved when
$\tr(U_i^\dagger U_j)=0$ for all $i\neq j$, yielding
\begin{equation}
    \tr(\chi^2)=\sum_i c_i^4
    =\frac{1}{4}\left(1+\cos^2\theta_1\right)
     \left(1+\cos^2\theta_2\right).
\end{equation}
In particular, for $\theta_1=\theta_2=\pi/2$ one finds $\tr(\chi^2)=1/4$, corresponding to the single-qubit maximally depolarizing channel. An explicit realization is obtained by choosing $\phi_1=\varphi_1=\varphi_2=\pi/2$ and $\phi_2=0$, for which the unitaries $\{U_i\}$ form an orthogonal Pauli set.

We therefore conclude that, for this dynamic PQC architecture, tighter generalization guarantees are obtained when each effective single-qubit channel approaches the maximally depolarizing channel. From the Choi perspective, this corresponds to the regime in which the local Choi operators are close to the maximally mixed state, providing an explicit and physically implementable realization of the small-norm regime.

\subsection{PAC-Bayesian generalization bound in the PTM formalism}\label{a:pac_bayes-PTM}
We now determine the corresponding bounds on $\beta_{\rm PTM}$ in order to characterize the covering net size and obtain the final generalization error bound for the PTM parametrization.

\begin{lemma}[Upper and lower bounds on $\beta_{\mathrm{PTM}}$]\label{a-lemma:beta_bounds-PTM}
Let
\begin{equation}
\beta_\mathrm{PTM}
        = \sqrt{d_\mathrm{out}^{(L)}}\sum_{j=1}^L\frac{1}{\sqrt{d_\mathrm{in}^{(j)}}}\prod_{\ell=j+1}^L \lVert W_\ell\rVert_{1,1} \, .
\end{equation}
Then, $\beta_{\mathrm{PTM}}$ satisfies the bounds 
\begin{equation}
    \sqrt{\frac{d_{\rm out}^{(L)}}{d_{\rm in}^{(L)}}} \;\leq\; \beta_{\mathrm{PTM}}
    \;\leq\; F(C_\ell),
\end{equation}
where 
\begin{equation}
F(C_\ell):=\sqrt{d_\mathrm{out}^{(L)}}\sum_{j=1}^L\frac{1}{\sqrt{d_\mathrm{in}^{(j)}}}\prod_{\ell=j+1}^L C_\ell.
\end{equation}
and $C_\ell$ is an upper bound on $\|W_\ell\|_{1,1}$. 
\end{lemma}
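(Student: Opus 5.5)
The plan mirrors the proof of \Cref{a-lemma:beta_bounds-PM}, with the extra dimensional prefactors of the PTM formalism carried along.

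\textbf{Lower bound.} Every summand of $\beta_{\mathrm{PTM}}$ is a product of non-negative quantities: the factor $\sqrt{d_\mathrm{out}^{(L)}}/\sqrt{d_\mathrm{in}^{(j)}}>0$ and the norms $\lVert W_\ell\rVert_{1,1}\geq 0$. We may therefore drop all terms except $j=L$. For $j=L$ the product over $\ell\in\{L+1,\dots,L\}$ is empty and equals $1$ by convention. Hence
\begin{equation}
\beta_{\mathrm{PTM}}\geq \sqrt{d_\mathrm{out}^{(L)}}\cdot\frac{1}{\sqrt{d_\mathrm{in}^{(L)}}}=\sqrt{\frac{d_{\rm out}^{(L)}}{d_{\rm in}^{(L)}}}\,,
\end{equation}
which is the claimed lower bound. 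Unlike the PM case, this value can be smaller than $1$, for instance in QCNN-type layers where $d_\mathrm{out}^{(L)}<d_\mathrm{in}^{(L)}$. This does not affect the argument.

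\textbf{Upper bound.} Viewed as a function of the vector $(\lVert W_2\rVert_{1,1},\dots,\lVert W_L\rVert_{1,1})$, $\beta_{\mathrm{PTM}}$ is a polynomial with non-negative coefficients. It is therefore coordinatewise monotone non-decreasing on the non-negative orthant. Replacing each $\lVert W_\ell\rVert_{1,1}$ by an upper bound $C_\ell$ thus gives $\beta_{\mathrm{PTM}}\leq F(C_\ell)$ term by term. The norm $\lVert W_1\rVert_{1,1}$ never appears, so no bound on it is needed. Unlike the PM case, the layer dimensions differ, so I would keep the $C_\ell$ layer-dependent rather than collapsing them into a geometric series.

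\textbf{Admissible choices of $C_\ell$.} To make the lemma usable in the covering-net argument, I would point out concrete values of $C_\ell$ from \Cref{a:preliminaries-param-PTM}:
\begin{itemize}
\item \Cref{a-lemma:W_11_bound_PTM_sparsity} gives $C_\ell=\sqrt{\xi_\ell}\,d_\mathrm{in}^{(\ell)}$ for general layers;
\item \Cref{a-lemma:W_11_bound_PTM} gives $C_\ell=\xi_\ell\sqrt{d_\mathrm{out}^{(\ell)}/d_\mathrm{in}^{(\ell)}}$ for unital layers.
\end{itemize}
The only real subtlety is bookkeeping. The relevant dimensions are those of layer $\ell$, and $d_\mathrm{out}^{(\ell)}=d_\mathrm{in}^{(\ell+1)}$. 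No step is a genuine obstacle; the main care required is the empty-product convention and the non-negativity needed for monotonicity.
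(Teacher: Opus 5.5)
Your proof is correct and follows the same route as the paper, which simply defers to the PM argument: keep only the $j=L$ summand (empty product equal to $1$) and use non-negativity for the lower bound, then use coordinatewise monotonicity of the non-negative expression in the norms $\lVert W_\ell\rVert_{1,1}$ to substitute the layer-dependent bounds $C_\ell$ for the upper bound. Your additional remarks, that $\lVert W_1\rVert_{1,1}$ never enters and that the lower bound may be smaller than $1$ when $d_\mathrm{out}^{(L)}<d_\mathrm{in}^{(L)}$, are accurate.
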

The proof follows by applying the same reasoning as in the PM case established in \cref{a-lemma:beta_bounds-PM}.

As we did in \cref{a:pac-bayes_PM} for the PM parametrization we can now state the generalization error bound for the PTM.
\begin{theorem}[PAC-Bayes generalization bound for quantum models -- PTM parameterization (restatement of \cref{th:pac-bayes-PTM})]
      For any $L > 0$, let $f_{\vec{w}}^{\mathrm{PTM}}: \mathcal{X} \to \mathbb{R}^K$ be an $L$-layer quantum model represented via the PTM framework $\phi_j = \phi_{\mathrm{PTM}}^{W_j}$ with $\xi_j\coloneqq \norm{W_j}_0$ and $\xi_{\max}=\max_j \xi_j$. Then, for any $\gamma > 0$, with probability at least $1-\delta$ over the training set of size $N$, for any $\vec{w}= \mathrm{vec}(\{W_j\}_{j=1}^L)$, we have 

    \begin{equation}\label{eq:formal_PM}
        L_0(f_{\vec{w}}^{\mathrm{PTM}}) \leq \hat{L}_{\gamma}(f_{\vec{w}}^{\mathrm{PTM}})+ \mathcal{O}\left(\sqrt{\frac{\beta_{\rm PTM}^2\:\xi_{\max} \ln\left(\sum_{j=1}^L2^{\xi_j}\right)\:\sum_{j=1}^L\norm{W_j}_F^2+\ln\left(\frac{N \;L\min\left\{\frac{\gamma \sqrt{N}}{\sqrt{\sum_{j=1}^L\norm{W_j}_F^2}},\:F(C_\ell)\right\}}{\delta}\right)}{\gamma^2 N}} \right)\,,
    \end{equation}
    where
    \begin{align}
        \beta_{\rm PTM}& = \sqrt{d_\mathrm{out}^{(L)}}\sum_{j=1}^L\frac{1}{\sqrt{d_\mathrm{in}^{(j)}}}\prod_{\ell=j+1}^L \lVert W_\ell\rVert_{1,1} \,,
    \end{align}
    and 
    \begin{align}
        F(C_\ell)=\sqrt{d_\mathrm{out}^{(L)}}\sum_{j=1}^L\frac{1}{\sqrt{d_\mathrm{in}^{(j)}}}\prod_{\ell=j+1}^L C_\ell\,,
    \end{align}
    where $C_\ell$ is an upper bound on $\|W_\ell\|_{1,1}$.
\end{theorem}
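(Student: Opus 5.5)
The plan follows the PM case verbatim. The only new ingredient is the range of $\beta_{\rm PTM}$ that feeds the covering net.

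\textbf{Fixed-$\tilde\beta$ bound.} I start from the unified bound in Eq.~\eqref{eq:before_union}, which already holds for the PTM formalism. Its derivation uses three inputs:
\begin{itemize}
\item the perturbation bound of \Cref{a-lemma:pert_bound_PTM};
\item the Gaussian tail bound $t=\sigma\sqrt{2\xi_{\max}\ln(2\sum_j 2^{\xi_j})}$, which holds with probability at least $1/2$;
\item the choice $\sigma\le \gamma(1-1/L)/\bigl(4e\tilde\beta_{\rm PTM}\sqrt{2\xi_{\max}\ln(2\sum_j2^{\xi_j})}\bigr)$.
\end{itemize}
The margin condition of \Cref{lemma:pac-bayes_margin} then holds whenever $|\beta_{\rm PTM}-\tilde\beta_{\rm PTM}|\le\beta_{\rm PTM}/L$. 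One caveat must be stated explicitly. \Cref{a-lemma:pert_bound_PTM} requires $\norm{U_j}_{1,1}\le \norm{W_j}_{1,1}/L$, so I will shrink $\sigma$ by a further factor if needed so that $t\le \min_j\norm{W_j}_{1,1}/L$, absorbing the cost into constants. For a fixed prior scale $\tilde\beta_{\rm PTM}$, this gives the KL term $16e^2\tilde\beta^2\xi_{\max}\ln(2\sum_j2^{\xi_j})\sum_j\norm{W_j}_F^2/(\gamma^2(1-1/L)^2)$. Replacing $\tilde\beta$ by $\beta(1+1/L)$ costs only a constant.

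\textbf{Range of $\beta_{\rm PTM}$.} I then bound the range of $\beta_{\rm PTM}$ via \Cref{a-lemma:beta_bounds-PTM}. The lower bound $\sqrt{d^{(L)}_{\rm out}/d^{(L)}_{\rm in}}$ is the $j=L$ term, which is an empty product. The upper bound $F(C_\ell)$ follows by monotonicity of each summand in $\norm{W_\ell}_{1,1}$, with $C_\ell$ taken from \Cref{a-lemma:W_11_bound_PTM} in the unital case, or from \Cref{a-lemma:W_11_bound_PTM_sparsity} otherwise. I also cap the upper end. If $\beta^2\sum_j\norm{W_j}_F^2\ge\gamma^2N$, the complexity term exceeds $1$ because $\xi\ln(L2^\xi)\ge1$, so the bound is trivially true. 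Hence it suffices to cover $\beta\le\min\{\gamma\sqrt N/(\sum_j\norm{W_j}_F^2)^{1/2},F(C_\ell)\}$.

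\textbf{Covering net and union bound.} A multiplicative grid with ratio $1+1/L$ covers this range. Its size is at most $L\,\beta_{\max}/\beta_{\min}$, and since $\beta_{\min}$ is a fixed constant ($\ge$ a dimension ratio), it can be absorbed. The grid points are data-independent because they depend only on architecture constants. A union bound over them replaces $\delta$ by $\delta/\mathcal N_\beta$ in the $\ln(N/\delta)$ term. This produces exactly the logarithmic term in the statement, and absorbing $(1\pm1/L)$ factors into $\mathcal O$ finishes the proof.

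\textbf{Main obstacle.} The delicate step is the data independence of the covering argument. The capped endpoint $\gamma\sqrt N/\|\vec w\|_2$ depends on $\vec w$. I would handle this by taking the grid over the fixed interval $[\beta_{\min},F(C_\ell)]$, and noting that the points above the triviality threshold never need to be invoked. Equivalently, one can count only grid points up to the threshold, since the bound is vacuous beyond it. Either way the minimum appears inside the logarithm.
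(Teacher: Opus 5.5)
Your skeleton matches the paper's proof. You start from the fixed-$\tilde\beta$ bound of Eq.~\eqref{eq:before_union} (Gaussian tail bound, margin-driven choice of $\sigma$, Gaussian KL), take the range of $\beta_{\rm PTM}$ from \Cref{a-lemma:beta_bounds-PTM}, apply the triviality cap, and finish with a union bound over a grid of $\tilde\beta$ values. The trouble lies in the two places where you go beyond what the paper writes down: in both, your justification does not deliver what you claim.

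\textbf{The side condition.} \Cref{a-lemma:pert_bound_PTM} requires $\norm{U_j}_{1,1}\le\norm{W_j}_{1,1}/L$, and the paper's proof never checks this. Your remedy is to shrink $\sigma$ until $t\le\min_j\norm{W_j}_{1,1}/L$. This means $\sigma\le\min_j\norm{W_j}_{1,1}/\bigl(L\sqrt{2\xi_{\max}\ln(2\sum_j2^{\xi_j})}\bigr)$, so the KL term picks up a contribution of order $L^2\xi_{\max}\ln(2\sum_j2^{\xi_j})\sum_j\norm{W_j}_F^2/\min_j\norm{W_j}_{1,1}^2$. That cost is not a constant, for three reasons:
\begin{itemize}
\item It diverges as any single layer approaches the depolarizing point, which is precisely the regime the paper advocates.
\item It is not dominated by the margin-driven term $\beta_{\rm PTM}^2/\gamma^2$, because a small $\norm{W_k}_{1,1}$ can only decrease $\beta_{\rm PTM}$.
\item It makes the prior variance depend on a second data-dependent quantity, $\min_j\norm{W_j}_{1,1}$, which would need its own covering net.
\end{itemize}
So this step must either appear as an extra term in the final bound, or be replaced by a perturbation argument that does not rely on $(1+1/L)^L\le e$ (e.g.\ one that keeps $\prod_{\ell>j}(\norm{W_\ell}_{1,1}+t)$). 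It cannot be absorbed into the $\mathcal{O}$.

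\textbf{The data-dependent cap.} Neither of your two options yields the minimum in the logarithm.
\begin{itemize}
\item A union bound over a grid on the fixed interval $[\beta_{\min},F(C_\ell)]$ yields $\ln\bigl(LF(C_\ell)/(\beta_{\min}\delta)\bigr)$, not the minimum.
\item Counting only grid points below $\gamma\sqrt N/\lVert\vec w\rVert_2$ is not a valid union bound, because that threshold depends on $\vec w$.
\end{itemize}
The standard repair is a weighted union bound. Give the $i$-th grid point, ordered by increasing $\tilde\beta$, confidence $\delta/(i(i+1))$. The price at the grid point actually used is then $\ln(2i^2/\delta)$ with $i\lesssim L\beta_{\rm PTM}/\beta_{\min}$. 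Whenever the bound is non-vacuous, $\beta_{\rm PTM}\le\min\{\gamma\sqrt N/\lVert\vec w\rVert_2,F(C_\ell)\}$. This gives the stated logarithm up to a factor $2$, plus a data-independent term $\ln(1/\beta_{\min})=\tfrac14\ln(d^{(L)}_{\rm in}/d^{(L)}_{\rm out})$ that the statement, like the paper, suppresses. The paper itself simply asserts this step, so the gap is shared, but your proposed justification does not close it.
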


We see that the upper bound expression in the PTM case is less compact than in the PM case. This is due to possibly different input and output dimensions at each layer. Consequently, the upper bounds for $\|W_\ell\|_{1,1}$ depend explicitly on the dimensions of layer $\ell$, preventing the use of a uniform upper bound across all layers, as was possible in the PM setting.

Moreover, since the layer dimensions vary, assuming a constant sparsity across all layers would constitute a rather crude approximation—again in contrast to the PM case, where such a simplification was more natural.
In a later example based on the QCNN architecture, where the number of qubits is halved at each layer, we will analyze how these upper bounds and the function $F(C_\ell)$ behave. As introduced in~\Cref{a:preliminaries-param-PTM}, the optimal choice of $C_\ell$ depends on the specific scenario under consideration. In general, one may take $C_\ell=\xi_\ell \sqrt{d_{\rm in}^{(\ell)}d_{\rm out}^{(\ell)}} $ or $C_\ell=\sqrt{\xi_\ell}\,d_{\rm in}^{(\ell)}$, with $\xi_\ell$ the sparsity of the $\ell$th layer. The first choice is preferable when $\xi_\ell \leq (d_{\rm in}^{(\ell)})^3/d_{\rm out}^{(\ell)}$ and the second otherwise. Additionally, if the layers correspond to unital channels, one may instead use $C_\ell=\xi_\ell \sqrt{d_{\rm out}^{(\ell)}/d_{\rm in}^{(\ell)}}$.

Because the expression for $\beta_{\rm PTM}$ is less transparent than in the PM case, we analyze its scaling in a regime of particular interest: circuits in which the number of qubits is halved at each layer and the channels are unital, so that the structural bound
\[
\|W_\ell\|_{1,1} \le \xi_\ell \sqrt{d_{\rm out}^{(\ell)} / d_{\rm in}^{(\ell)}}
\]
applies. These conditions arise naturally in architectures such as the QCNN \cite{cong2019quantum}.

Starting with $n_1 = n$ qubits in the first layer and imposing $n_{\ell+1} = n_\ell / 2$, the maximum depth is $L = \log_2 n$ (taking $n$ to be a power of two for simplicity). 
We fix a reference sparsity $\xi_L$ at the final layer $\ell = L$. Moving backward through the circuit, the Hilbert-space dimension grows exponentially, so we model the sparsity at layer $\ell$ as
\[
\xi_\ell = \xi_L \, 2^{\alpha (n_\ell - n_L)}= \xi_L \, 2^{\alpha n (2^{-(\ell-1)} - 2^{-(L-1)})}.
\]

This ansatz reflects the natural tendency of earlier (larger) layers to allow greater sparsity while introducing a single parameter $\alpha$ that controls the overall growth rate.

We now derive an upper bound on $\beta_{\rm PTM}$. Using the structural bound $\|W_\ell\|_{1,1} \leq \xi_\ell \sqrt{d_{\rm out}^{(\ell)} / d_{\rm in}^{(\ell)}}$ together with $n_\ell = n / 2^{\ell-1}$ and $n_{L+1} = n_L / 2$, we obtain
\begin{equation}\label{eq:beta_PTM_QCNN}
    \beta_{\rm PTM} \leq \sqrt{2^{n_{L+1}}} \sum_{i=1}^L \frac{1}{\sqrt{2^{n_i}}} \prod_{\ell=i+1}^L \xi_\ell \sqrt{\frac{2^{n_{\ell+1}}}{2^{n_\ell}}}.
\end{equation}
The product of the square-root factors simplifies to
\[
\prod_{\ell=i+1}^L \sqrt{\frac{2^{n_{\ell+1}}}{2^{n_l}}} = 2^{-\frac{1}{4} \sum_{\ell=i+1}^L n_\ell}, \qquad \sum_{\ell=i+1}^L n_\ell = 2n (2^{-i} - 2^{-L}),
\]
so that
\begin{equation}
    \beta_{\rm PTM} \leq 2^{\frac{3n}{2^L}} \sum_{i=1}^L 2^{-\frac{3}{2} n 2^{-i}} \prod_{\ell=i+1}^L \xi_\ell.
\end{equation}
Substituting the sparsity model yields
\[
\prod_{\ell=i+1}^L \xi_\ell = \xi_L^{L-i} \, 2^{-\frac{2\alpha n}{2^L}(L-2)} \, 2^{\frac{2\alpha n}{2^L} i} \, 2^{\frac{4\alpha n}{2^i}},
\]
and therefore
\begin{equation}
    \beta_{\rm PTM} \leq 2^{\frac{n}{2^L} (1 - 2\alpha (L-2))} \, \xi_L^L \, \sum_{i=1}^L 2^{\frac{(4\alpha - 3/2) n}{2^i}} \left( \frac{2^{\frac{2\alpha n}{2^L}}}{\xi_L} \right)^i.
\end{equation}
For the remainder of the discussion, we consider the regime $L = \log_2 n$, which implies $n / 2^L = 1$. The prefactor outside the sum is then $\mathcal{O}(n^{-2\alpha}) \cdot n^{\log_2 \xi_L}$. The behavior of $\beta_{\rm PTM}$ is consequently governed by the sum
\[
S = \sum_{i=1}^L 2^{\frac{(4\alpha - 3/2) n}{2^i}} \left( \frac{2^{\frac{2\alpha n}{2^L}}}{\xi_L} \right)^i.
\]

We now analyze the scaling of $S$ (and hence of $\beta_{\rm PTM}$) in two regimes: 

\begin{itemize}
    \item \textbf{Case $\boldsymbol{\alpha} \leq \boldsymbol{3/8}$}: Here $4\alpha - 3/2 \leq 0$, so $2^{(4\alpha - 3/2) n / 2^i} \leq 1$ for every $i$. Let $C = 2^{2\alpha n / 2^L} / \xi_L = 2^{2\alpha} / \xi_L$. Then
    \[
    S \leq \sum_{i=1}^L C^i = C \frac{C^L - 1}{C - 1}.
    \]
    In the regime of interest ($L = \log_2 n$) we have $C < 1$ whenever $\alpha < \frac{1}{2} \log_2 \xi_L$, in which case $S = \mathcal{O}(1)$. When $C \geq 1$ the sum is dominated by its last term, $S = \Theta(C^L) = \Theta(n^{2\alpha - \log_2 \xi_L})$. In both sub-cases we obtain
    \[
    \beta_{\rm PTM} \lesssim n^{-2\alpha + \log_2 \xi_L} \cdot \mathrm{poly}(n) \quad \text{or better},
    \]
    i.e., $\beta_{\rm PTM}$ is at most polynomial in $n$ (frequently $\mathcal{O}(1)$).

    \item \textbf{Case $\boldsymbol{\alpha} > \boldsymbol{3/8}$}: Now $4\alpha - 3/2 > 0$. The terms $2^{(4\alpha - 3/2) n / 2^i}$ become exponentially large for small $i$; the $i=1$ term alone is $2^{(4\alpha - 3/2) n / 2} = \exp(\Theta(n))$. We obtain
    \[
    S \leq 2^{\frac{(4\alpha - 3/2) n}{2}} C \frac{C^L - 1}{C - 1},
    \]
    so that $\beta_{\rm PTM} \lesssim \exp(\Theta(n)) \cdot \mathrm{poly}(n)$. This exponential growth in $n$ makes the PAC-Bayes complexity term vacuous for any polynomially large training set $N$.
\end{itemize}

The PTM of the first layer is a $2^n \times 2^{2n}$ matrix, hence at most $2^{3n}$ non-zero entries are possible: $\xi_1 \leq 2^{3n}$. From the model $\xi_1 = \xi_L \, 2^{\alpha (n - n_L)}$, we obtain
\[
\alpha \leq \frac{3n - \log_2 \xi_L}{n - 2}.
\]
For large $n$ this approaches $\alpha \leq 3$. The range $\alpha \leq 3/8$ is therefore significantly more restrictive than the maximum, but this bound assumes a maximally dense first-layer PTM, which is not necessarily the case in typical architectures.

In summary, the $\beta_{\rm PTM}$ exhibits controllable (at most polynomial) scaling with $n$ precisely when $\alpha \leq 3/8$. In this regime all exponential factors cancel or are suppressed, and $\beta_{\rm PTM}$ remains $\mathcal{O}(\mathrm{poly}(n))$ independently of the precise value of $\xi_L$. For $\alpha > 3/8$ an $\exp(\Theta(n))$ factor appears in $\beta_{\rm PTM}$, rendering the bound impractical. Consequently, the $\alpha$-region that guarantees controllable scaling of $\beta_{\rm PTM}$ (and therefore of the whole generalization bound) is
\begin{equation}\label{eq:betaPTM_regime}
    \alpha \leq \frac{3}{8}
\end{equation}
In this regime, just as oin the PM case for $\xi_{\max} = o(2^n)$, $\beta_{\rm PTM}$ remains at most polynomial in $n$. Therefore the logarithmic term in the bound has at most logarithmic dependence on $n$.
\\

\paragraph{Comparison to uniform bounds.}
Just as in the case of the PM, we can also use PTM perturbation bounds to derive uniform generalization bounds. When proving \Cref{a-lemma:pert_bound_PTM}, we have in particular shown that 
\begin{equation}
    \norm{\phi^{W}_{\mathrm{PTM}} - \phi^{W'}_{\mathrm{PTM}} }_{1\to 1}
    \leq \sqrt{\frac{d_{\mathrm{\mathrm{out}}}}{d_{\mathrm{in}}}}\norm{W - W'}_{1,1} \, .
\end{equation}
Following a derivation analogous to that in the PM case, we see that this perturbation bound leads to a metric entropy bound of
\begin{equation}
    \log \mathcal{N}(\{f_{\vec{w}}^{\mathrm{PTM}}\}_{\vec{w}}, \norm{\cdot}_\infty, \varepsilon)
    \leq \widetilde{\mathcal{O}}\left(  \sum_{j=1}^L 
    \begin{cases} 
        \xi_j \log\left(\frac{4^n L R_j \sqrt{d_{\mathrm{\mathrm{out}}}^{(j)}}}{\varepsilon \sqrt{d_{\mathrm{in}}^{(j)}}}\right)\quad &\textrm{if $\varepsilon \sqrt{d_{\mathrm{in}}^{(j)}}/2L\sqrt{d_{\mathrm{\mathrm{out}}}^{(j)}} < R_j$}\\
        0 &\textrm{else}
    \end{cases}\right) \, ,
\end{equation}
where, if we assume unitality (compare \cref{a-lemma:PTM_entry_bound}), the relevant radius is $R_j \leq \xi_j \sqrt{d_{\mathrm{\mathrm{out}}}^{(j)}/d_{\mathrm{in}}^{(j)}}$. Note that this bound on the radius is useful if $d_{\mathrm{\mathrm{out}}}^{(j)}<d_{\mathrm{in}}^{(j)}$, which is for instance the case in QCNNs.
Again, we can identify a regime in which summand $i$ contributes non-trivially, namely for $\varepsilon < 2L R_j \sqrt{d_{\mathrm{\mathrm{out}}}^{(j)} / d_{\mathrm{in}}^{(j)}}$.
Via the standard machinery, this yields the following uniform generalization bound: With high (constant) probability over the choice of a training data set of size $N$, we have
\begin{align}
    L_0(f_{\vec{w}}^{\mathrm{PTM}})
    &\leq \hat{L}_\gamma (f_{\vec{w}}^{\mathrm{PTM}}) + \mathcal{O}\left( \sqrt{\frac{\sum_{j=1}^L \xi_j \log\left(4^n L R_j\right)\cdot \min\{2LR_i\sqrt{d_{\mathrm{out}}^{(j)} / d_{\mathrm{in}}^{(j)}}, 1/2\}}{\gamma^2 N}}\right) \nonumber \\
    &\leq \hat{L}_\gamma (f_{\vec{w}}^{\mathrm{PTM}}) + \mathcal{O}\left( \sqrt{\frac{\sum_{j=1}^L \xi_j \log\left(4^n L \xi_j \sqrt{d_{\mathrm{out}}^{(j)} / d_{\mathrm{in}}^{(j)}}\right)\cdot \min\{2L\xi_j d_{\mathrm{out}}^{(j)} / d_{\mathrm{in}}^{(j)}, 1/2\}}{\gamma^2 N}}\right),
\end{align}
where $\vec{w}$ is the parameter vector after training. In what follows we focus exclusively on the main term of the complexity expression---i.e., the leading dependence on the sparsity parameters $\xi_i$ and the dimensionality factors---in exactly the same spirit as was done for the PM case.
Let us now compare this uniform generalization bound with the main complexity term of our PAC-Bayes generalization bound for the PTM formalism, which (compare \cref{th:pac-bayes-PTM}) is
\begin{equation}
    L_0(f_{\vec{w}}^{\mathrm{PTM}}) \leq \hat{L}_{\gamma}(f_{\vec{w}}^{\mathrm{PTM}})+ \widetilde{\mathcal{O}}\left(\sqrt{\frac{\beta_{\rm PTM}^2\:\xi_{\max} \ln\left(\sum_{j=1}^L2^{\xi_j}\right)\:\sum_{j=1}^L\norm{W_j}_F^2}{\gamma^2 N}} \right),
\end{equation}
where
\[
\beta_\mathrm{PTM}=\sqrt{d_\mathrm{out}^{(L)}}\sum_{j=1}^L\frac{1}{\sqrt{d_\mathrm{in}^{(j)}}}\prod_{\ell=j+1}^L \lVert W_\ell\rVert_{1,1}.
\]
To obtain a clean scaling analysis, we again consider the regime used in Eq.~\eqref{eq:beta_PTM_QCNN} for $\beta_{\rm PTM}$, in which the number of qubits is halved at each successive layer ($n_1=n$, $n_{\ell+1}=n_\ell/2$, $L=\log_2 n$) and the sparsity at layer $l$ is modeled as $\xi_\ell=\xi_L\,2^{\alpha(n_\ell-n_L)}$. As shown in \cref{eq:betaPTM_regime}, the PAC-Bayes complexity term exhibits controllable (at most polynomial) scaling with $n$ when $\alpha\leq 3/8$, with $\beta_{\rm PTM}=\mathcal{O}(\poly(n))$ in this regime; for $\alpha>3/8$, an $\exp(\Theta(n))$ factor renders the bound vacuous for polynomial $N$. We now compare the main complexity term of our PAC-Bayes bound with that of the uniform bound under the same halving-layer and sparsity-model assumptions. Our PAC-Bayes bound improves upon the uniform bound whenever
\begin{equation}
\beta_{\rm PTM}^2\:\xi_{\max} \ln\left(\sum_{j=1}^L2^{\xi_j}\right)\:\sum_{j=1}^L\norm{W_j}_F^2
\leq \mathrm{const}\cdot \sum_{j=1}^L \xi_j \log\left(4^n L \xi_j \sqrt{d_{\mathrm{out}}^{(j)} / d_{\mathrm{in}}^{(j)}}\right)\cdot \min\{2L\xi_j d_{\mathrm{out}}^{(j)} / d_{\mathrm{in}}^{(j)}, 1/2\} .
\end{equation}
Rearranging gives the sufficient condition
\begin{equation}
\beta_{\rm PTM}^2 \sum_{j=1}^L\norm{W_j}_F^2
\leq \mathrm{const}\cdot \frac{U}{\xi_{\max} \ln\left(\sum_{j=1}^L2^{\xi_j}\right)},
\end{equation}
where $U$ denotes the sum appearing on the right-hand side of the uniform bound (the ``uniform main term'').
Using the upper bound on $\beta_{\rm PTM}$ derived above and the structural Frobenius bound
\[
\sum_{j=1}^L\norm{W_j}_F^2\leq \sum_{j=1}^L\xi_j^2\frac{2^{n_{j+1}}}{2^{n_j}}=:F= \xi_L^2 2^{-\frac{4\alpha n}{2^L}}\sum_{j=1}^L 2^{\frac{n(4\alpha-1)}{2^j}},
\]
a pessimistic sufficient condition for superiority becomes
\[
F\leq \mathrm{const}\cdot\frac{U}{\beta_{\rm PTM}^2 \,\xi_{\max} \ln\left(\sum 2^{\xi_j}\right)}.
\]
In the controllable regime $\alpha\leq 3/8$ we have $\beta_{\rm PTM}\leq\mathcal{O}(\mathrm{poly}(n))$, $\xi_{\max}=\Theta(\xi_L 2^{\alpha n})$, and $\ln(\sum 2^{\xi_j})=\Theta(\xi_{\max})$. The condition therefore simplifies (up to polynomial factors in $n$) to
\[
F\cdot 2^{2\alpha n}\lesssim U.
\]
The scaling of $U = \sum_{j=1}^L \xi_j \log\left(4^n L R_j\right) \cdot \min\left\{2 L R_j \sqrt{d_{\mathrm{out}}^{(j)} / d_{\mathrm{in}}^{(j)}}, 1/2\right\}$ is derived layer-by-layer. For $\alpha \leq 3/8 < 1/2$, the min selects the left term: $U = 2 L \sum_{j=1}^L \xi_j^2 (d_{\mathrm{out}}^{(j)} / d_{\mathrm{in}}^{(j)}) \log\left(4^n L R_j\right)$. The log term is $\Theta(n)$; the sum is $\xi_L^2 2^{-4\alpha} \sum_j 2^{2^{L-j} (4\alpha - 2)}$. With $4\alpha - 2 < 0$, the sum is $\mathcal{O}(\log n)$, yielding $U = \mathcal{O}(n (\log n)^2)$.

\begin{itemize}
\item \textbf{Important regimes:}
\begin{itemize}
\item $  \alpha = 0  $ (rapid decay, $  \mathcal{O}(n \log n)  $ tight);
\item $  \alpha \approx 1/4  $ (moderate decay, $  \mathcal{O}(\log n)  $ layers contribute);
\item $  \alpha \approx 3/8  $ (faster decay, fewer layers).
\end{itemize}
We now estimate $  U  $ and $  F  $ layer-by-layer under the sparsity model (setting $  \xi_L=\Theta(1)  $ for the typical regime).
\item \textbf{Subcase $  \alpha\leq 1/4  $:} $  F=\mathcal{O}(L)=\mathcal{O}(\log n)  $ and $  U=\mathcal{O}(n(\log n)^2)  $ (the linear regime of the min dominates for almost all layers). The condition reduces to $  2^{2\alpha n}\lesssim\mathrm{poly}(n)  $. This holds for all large $  n  $ \emph{only if $  \alpha=0  $}. When $  \alpha=0  $ we recover $  \beta_{\rm PTM}=\mathcal{O}(n^{\log_2\xi_L})  $, and the condition holds whenever $  2^{\log_2\xi_L}<1  $, i.e., $  \xi_L<\sqrt{2}  $.
\item \textbf{Subcase $  1/4 < \alpha \leq 3/8  $:} Here $  4\alpha - 1 > 0  $, so $  F = \widetilde{\mathcal{O}}\left(2^{\frac{(4\alpha - 1)n}{2}}\right)  $ (dominated by early layers). For $  U  $, since $  4\alpha - 2 < 0  $, $  U = \mathcal{O}\left(n (\log n)^2\right)  $ (dominated by late layers). The condition simplifies to $  2^{n(4\alpha - 0.5)} \lesssim \poly(n)  $, which does not hold for large $  n  $ as the left side grows exponentially.
\end{itemize}
In summary, the PAC-Bayes bound improves upon the uniform bound in the regime $\alpha = 0$ (constant sparsity across layers) with $\xi_L < \sqrt{2}$. For $\alpha > 0$, even within $\alpha \leq 3/8$, exponential growth prevents the condition from holding asymptotically. 

In contrast to the PM case, where we identified regimes in which even a worst-case version of our PAC-Bayes bound offers a clear advantage over the uniform bound, our analysis for the PTM formalism does not reveal such favorable regimes under the considered sparsity and dimensionality assumptions. Ultimately, this outcome stems from the conservative nature of our derivations, which rely on worst-case scenario upper bounds for the parameter-dependent generalization error; our non-uniform PAC-Bayesian bound is most interesting when we take the parameter dependence seriously, which allows it to outperform parameter-independent uniform bounds.
\\

\paragraph{PAC-Bayes generalization bound for QCNNs.}
A widely studied architecture that precisely fits the regime analyzed above (qubit halving at each layer and unital quantum channels with $d_{\mathrm{in}} > d_{\mathrm{out}}$) is the quantum convolutional neural network (QCNN) introduced in \cite{cong2019quantum}. The general assumptions and sparsity model considered in the preceding sections already align with the structural properties of the QCNN. Here, however, we go further by specializing to its concrete architecture. The traditional QCNN building block is a two-qubit operation that consists of applying a two-qubit unitary and then tracing out the first qubit. In the most general scenario, one can consider an input state of dimension $d_{\mathrm{in}}$, applying a global unitary $U$ to the whole state, and then tracing out some of the qubits, leaving the untraced state of dimension $d_{\mathrm{out}}$. We can consider tracing out a subsystem of dimension $d_{\mathrm{env}}$ such that $d_{\mathrm{in}}=d_{\mathrm{out}}\cdot d_{\mathrm{env}}$. The quantum channel of this block on some input quantum state $\rho$ is described as
\begin{equation}
    \phi(\rho)=\sum_\ell (\langle \ell|\otimes\mathbb{I})\,U\,\rho\,U^\dagger\,(|\ell\rangle\otimes\mathbb{I}),
\end{equation}
which clearly shows that the Kraus operators are $K_\ell=(\langle \ell|\otimes\mathbb{I})U$ for $\ell=0,\dots,d_{\mathrm{env}}-1$. The Frobenius norm of the Pauli transfer matrix (PTM) associated with a quantum channel $\phi$ can be expressed in terms of its Kraus operators as
\begin{equation}
    \begin{split}
        \|R^\phi\|_F^2
        &= \tr(R^\phi {R^\phi}^\dagger) \\
        &= \frac{1}{d_{\mathrm{out}}\,d_{\mathrm{in}}}
        \sum_{A,B}
        \tr\!\bigl(\sigma_A\,\phi(\sigma_B)\bigr)\,
        \tr\!\bigl(\sigma_B\,\phi^\dagger(\sigma_A)\bigr) \\
        &= \frac{1}{d_{\mathrm{out}}\,d_{\mathrm{in}}}
        \sum_{A,B}
        \left(\sum_i \tr(\sigma_A K_i \sigma_B K_i^\dagger)\right)
        \left(\sum_j \tr(\sigma_B K_j^\dagger \sigma_A K_j)\right) \\
        &= \frac{1}{d_{\mathrm{in}}}
        \sum_{B}\sum_{i,j}
        d_{\mathrm{out}}\,
        \tr\!\left(
            K_i \sigma_B K_i^\dagger
            K_j \sigma_B K_j^\dagger
        \right) \\
        &= \sum_{i,j} \left|\tr(K_i^\dagger K_j)\right|^2 .
    \end{split}
\end{equation}
Here ${R^\phi}^\dagger$ denotes the Pauli transfer matrix of the adjoint channel $\phi^\dagger$, defined by $\tr[X\,\phi(Y)] = \tr[\phi^\dagger(X)\,Y]$. In passing from the third to the fourth line 
we have used the completeness of the Pauli basis on the output space,
\begin{equation}
\sum_A \tr(\sigma_A X)\tr(\sigma_A Y) = d_{\mathrm{out}}\,\tr(XY),
\end{equation}
while in the last line we have used the corresponding completeness relation on the input space,
\begin{equation}
\sum_B \tr(X\sigma_B Y\sigma_B) = d_{\mathrm{in}}\,\tr(X)\tr(Y).
\end{equation}
We can then compute
\begin{equation}
    \begin{split}
        \|R^\phi\|_F^2&=\sum_{i,j}\left|\tr\left(U^\dagger(|j\rangle\otimes\mathbb{I})(\langle i|\otimes\mathbb{I})\,U\right)\right|^2\\
        &=\sum_{i,j}\left|\tr\left(\langle i|j\rangle\otimes \mathbb{I}\right)\right|^2\\
        &=\sum_i d_{\mathrm{out}}^2\\
        &=d_{\mathrm{in}}\,d_{\mathrm{out}},
    \end{split}
\end{equation}
where we have used that that the index $i$ runs over
$d_{\mathrm{env}} = d_{\mathrm{in}}/d_{\mathrm{out}}$ values. Using the definition of $W$ one can easily show that
\begin{equation}
    \|W\|_F^2=\|R^\phi\|_F^2-\frac{d_{\mathrm{in}}}{d_{\mathrm{out}}}.
\end{equation}
We have thus shown that for a QCNN with only building blocks as described above, the Frobenius norm of $W$ does not depend on the learned parameters. The only dependency on the parameters will be hidden inside $\beta$.
This observation is consistent with the fact that this QCNN construction, based on a global unitary followed by a partial trace, cannot implement a maximally depolarizing channel for any choice of parameters. Although the partial trace reduces the effective system dimension and may decrease the magnitude of individual entries of the PTM $W$, the fixed value of the squared Frobenius norm, $\|W\|_F^2 = d_{\mathrm{in}} d_{\mathrm{out}} - d_{\mathrm{in}}/d_{\mathrm{out}}$, imposes a fundamental lower bound on other entry-wise norms. In particular, for any finite-dimensional matrix $W$, the general inequality
\begin{equation}
    \|W\|_{1,1} \ge \|W\|_F
\end{equation}
holds. As a consequence, the entry-wise $\ell_1$ norm of $W$ can never vanish in this architecture, independently of the choice of the unitary $U$. Therefore, this class of QCNNs cannot realize a completely depolarizing channel, but can at most approximate it in a limited way by redistributing the total weight of the PTM.

\subsection{PAC-Bayesian generalization bound for equivariant quantum models}
Finally, we bring our attention to the derivation of the PAC-Bayesian generalization bound for equivariant quantum models. We address the general case where the number of equivariant parameters, denoted by $\Xi_j = \sum_{\lambda} m_{j,\lambda}^2$, varies across layers $j=1, \dots, L$. We proceed by establishing a high-probability tail bound for the perturbations in the equivariant basis, determining the necessary noise scaling to satisfy the margin condition, and finally computing the generalization bound.

\subsubsection{Tail bound for Gaussian perturbation}

We assume the perturbation matrices $U_j=\{U_{j,\lambda}\}_{\lambda}$ consist of entries drawn from $\mathcal{N}(0, \sigma^2)$. Our goal is to bound the equivariant 1-norm, $\|U_j\|_{eq,1} = \sum_{\lambda} d_{\lambda} \|U_{j,\lambda}\|_1$, simultaneously for all layers.

First, we utilize the relationship between the equivariant 1-norm and the standard $l_1$ norm of the vectorized parameters. With $d_{max} = \max_{\lambda} d_{\lambda}$, we have $\|U_j\|_{eq,1} \le d_{max} \|u_j\|_1$, where $u_j$ is the vector of all parameters in layer $j$. Using the tail bound for the $l_1$ norm of a Gaussian vector with $\Xi_j$ parameters, we obtain the bound for a single layer:
\begin{equation}
    P(\|U_{j}\|_{eq,1} \ge t) \le 2^{\Xi_{j}} e^{\frac{-t^{2}}{2d_{max}^{2}\Xi_{j}\sigma^{2}}}
\end{equation}
To ensure the bound holds for all layers, we apply a union bound:
\begin{equation}
    P(\exists j : \|U_{j}\|_{eq,1} > t) \le \sum_{j=1}^{L} 2^{\Xi_{j}} e^{\frac{-t^{2}}{2d_{max}^{2}\Xi_{j}\sigma^{2}}}
\end{equation}
To simplify this sum while ensuring the bound holds for the worst-case layer (the one with the slowest exponential decay), we bound the decay term using $\Xi_{max} = \max_j \Xi_j$:
\begin{equation}
    e^{\frac{-t^{2}}{2d_{max}^{2}\Xi_{j}\sigma^{2}}} \le e^{\frac{-t^{2}}{2d_{max}^{2}\Xi_{max}\sigma^{2}}}
\end{equation}
Substituting this into the union bound yields:
\begin{equation}
    P(\exists j : \|U_{j}\|_{eq,1} > t) \le e^{\frac{-t^{2}}{2d_{max}^{2}\Xi_{max}\sigma^{2}}} \sum_{j=1}^{L} 2^{\Xi_{j}}
\end{equation}
We require the failure probability to be at most $1/2$. Setting the right-hand side to $1/2$ and solving for $t$, we obtain the high-probability upper bound on the perturbation norms:
\begin{equation}
    t = \sigma d_{max} \sqrt{2\Xi_{max} \ln\left(2\sum_{j=1}^{L} 2^{\Xi_{j}}\right)}\,.
\end{equation}
Thus, with probability at least $1/2$, $\|U_j\|_{eq,1} \le t$ holds for all layers simultaneously.

\subsubsection{Margin condition and noise scaling}

We now connect this tail bound to the output stability of the model. Using the equivariant perturbation bound (Lemma 5) and the derived tail bound $t$, the change in the model output is bounded by:
\begin{equation}
    \|f_{\vec{w}+\vec{u}}(\vec{x}) - f_{\vec{w}}(\vec{x})\|_{\infty} \le e \beta_{\rm eq} \max_j \|U_j\|_{eq,1} \le e \beta_{\rm eq} \sigma d_{max} \sqrt{2\Xi_{max} \ln\left(2\sum_{j=1}^{L} 2^{\Xi_{j}}\right)}
\end{equation}
Here, the complexity term is defined as $\beta_{\rm eq} = \sum_{i=1}^{L} \prod_{l=i+1}^{L} \|W_{l}\|_{eq,1}$. Accounting for the discretization of $\beta_{\rm eq}$ via $\tilde{\beta}_{\rm eq}$ (where $|\beta_{\rm eq} - \tilde{\beta}_{\rm eq}| \le \beta_{\rm eq}/L$) and enforcing the PAC-Bayes margin condition (perturbation $\le \gamma/4$), we constrain the noise variance $\sigma$:
\begin{equation}
    \sigma \le \frac{\gamma(1-1/L)}{4e d_{max} \tilde{\beta} \sqrt{2\Xi_{max} \ln\left(2\sum_{j=1}^{L} 2^{\Xi_{j}}\right)}}
\end{equation}

\subsubsection{KL divergence}

As in the PM and PTM cases, but now using the equivariant Frobenius norm, the KL divergence for our Gaussian prior and posterior is given by:
\begin{equation}
    \textrm{KL}(\vec{w}+\vec{u} || P) = \frac{1}{2\sigma^{2}} \sum_{j=1}^{L} \|W_{j}\|_{eq,F}^{2}
\end{equation}
Substituting the upper bound for $\sigma$ derived above, we obtain the bound on the KL term:
\begin{equation}
    \textrm{KL}(\vec{w}+\vec{u} || P) \le \frac{16e^{2} d_{max}^{2} \tilde{\beta}_{\rm eq}^{2} \Xi_{max} \ln\left(2\sum_{j=1}^{L} 2^{\Xi_{j}}\right)}{\gamma^{2}(1-1/L)^{2}} \sum_{j=1}^{L} \|W_{j}\|_{eq,F}^{2}
\end{equation}

Finally,  we can apply \cref{lemma:pac-bayes_margin} and get that for any $\tilde{\beta}_{\rm eq}$ with probability at least $1-\delta$ and for all $\boldsymbol{w}$ such that $|\beta_{\rm eq}-\tilde{\beta}_{\rm eq}|\leq \beta_{\rm eq}/L$, we have
\begin{equation}\label{eq:before_union_equivariant}
    L_0(f_{\vec{w}}^{\mathrm{eq}}) \leq \hat{L}_{\gamma}(f_{\vec{w}}^{\mathrm{eq}})+ \mathcal{O}\left(\sqrt{\frac{\beta_{\rm eq}^2\:\xi_{\max} \ln\left(\sum_{j=1}^L2^{\xi_j}\right)\:\sum_{j=1}^L\norm{W_j}_F^2+\ln\left(\frac{N}{\delta}\right)}{\gamma^2 N}} \right)\,.
\end{equation}

\subsubsection{Equivariant generalization bound}
As discussed previously, the covering net construction and the corresponding union bound follow the same reasoning as in the PM and PTM cases. In particular, the final bound is obtained by dividing $\delta$ by the covering net size. It therefore remains to determine the minimum and maximum values of $\beta_{\rm eq}$ in order to characterize this covering. We now establish upper and lower bounds for $\beta_{\rm eq}$ in the following lemma.

\begin{lemma}[Upper and lower bounds on $\beta_{\mathrm{eq}}$]
Assume a uniform parameter count $\Xi_l = \Xi$ and a uniform spectral bound $\eta_l = \eta$ across all layers. The complexity term $\beta_{\mathrm{eq}}$ satisfies
\begin{equation} \label{eq:betaeq_max}
    1\leq \beta_{\mathrm{eq}} \leq M(\eta,|G|,\Xi,L)\,,
\end{equation}
where 
\begin{equation}
    M(\eta,|G|,\Xi,L)=\beta_{\mathrm{eq}} \leq \frac{(\eta \sqrt{|G|\Xi})^L - 1}{\eta \sqrt{|G|\Xi} - 1}.
\end{equation}
\end{lemma}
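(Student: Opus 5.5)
The argument follows the proof of \Cref{a-lemma:beta_bounds-PM}, with the PM entrywise bound replaced by the spectral bound on the equivariant 1-norm from \Cref{a:lemma-eq1upper}. Recall that $\beta_{\mathrm{eq}}=\sum_{j=1}^L\prod_{\ell=j+1}^L\norm{W_\ell}_{\mathrm{eq},1}$ is a sum of products of non-negative quantities, since $\norm{\cdot}_{\mathrm{eq},1}$ is a weighted sum of trace norms with positive weights $d_\lambda$.

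For the lower bound, I will isolate the $j=L$ summand. It is an empty product and therefore equals $1$. All other summands are non-negative, so $\beta_{\mathrm{eq}}\geq 1$ for any parameter choice, exactly as in the PM case.

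For the upper bound, I will use monotonicity. Each summand is non-decreasing in every $\norm{W_\ell}_{\mathrm{eq},1}$, so it suffices to bound each factor uniformly.

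\begin{itemize}
\item By \Cref{a:lemma-eq1upper}, applied under the uniform assumptions $\Xi_\ell=\Xi$ and $\eta_\ell=\eta$, we get $\norm{W_\ell}_{\mathrm{eq},1}\leq \eta\sqrt{|G|\Xi}=:C$ for every layer $\ell$.
\item Substituting this gives $\beta_{\mathrm{eq}}\leq\sum_{j=1}^L C^{L-j}=\sum_{k=0}^{L-1}C^k$.
\item Summing the geometric series gives $\beta_{\mathrm{eq}}\leq (C^L-1)/(C-1)=M(\eta,|G|,\Xi,L)$.
\item In the degenerate case $C=1$ the ratio is read as its limit $L$, which is again the value of the geometric sum.
\end{itemize}

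The only point needing care is that \Cref{a:lemma-eq1upper} is stated for a finite group, via $\sum_\lambda d_\lambda^2=|G|$, and that its chain of inequalities must hold per layer with the same constant. The uniformity hypotheses of the present lemma are what guarantee the same $C$ at every layer. I expect no real obstacle beyond checking this. If $\eta$ were not uniform, the same argument would give the analogous bound $\sum_j\prod_{\ell>j}\eta_\ell\sqrt{|G|\Xi_\ell}$, mirroring $F(C_\ell)$ in \Cref{a-lemma:beta_bounds-PTM}.
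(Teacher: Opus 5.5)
Your proposal is correct and follows the paper's proof essentially verbatim. The lower bound comes from the empty $j=L$ product, and the upper bound comes from inserting the uniform estimate $\norm{W_\ell}_{\mathrm{eq},1}\leq\eta\sqrt{|G|\Xi}$ from \Cref{a:lemma-eq1upper} and summing the resulting geometric series. Your remarks on reading the ratio as $L$ when $\eta\sqrt{|G|\Xi}=1$, and on the finite-group hypothesis behind $\sum_\lambda d_\lambda^2=|G|$, are sensible refinements that the paper leaves implicit.
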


\begin{proof}
By definition, $\beta_{\mathrm{eq}}$ is given by
\begin{equation}
    \beta_{\mathrm{eq}} = \sum_{i=1}^L \prod_{l=i+1}^{L} \norm{W_{l}}_{\mathrm{eq},1}.
\end{equation}
The lower bound follows in the same way as in \cref{a-lemma:beta_bounds-PM}. For the upper bound, applying the uniform upper bound $\norm{W_{l}}_{\mathrm{eq},1} \leq \eta \sqrt{|G| \Xi}$, we obtain a geometric series:
\begin{equation}
    \beta_{\mathrm{eq}} \leq \sum_{i=1}^L \prod_{l=i+1}^{L} \left( \eta \sqrt{|G| \Xi} \right) = \sum_{i=1}^L \left( \eta \sqrt{|G| \Xi} \right)^{L-i}.
\end{equation}
Re-indexing the sum by letting $j = L-i$, we have:
\begin{equation}
    \beta_{\mathrm{eq}} \leq \sum_{j=0}^{L-1} \left( \eta \sqrt{|G| \Xi} \right)^j = \frac{(\eta \sqrt{|G|\Xi})^L - 1}{\eta \sqrt{|G|\Xi} - 1},
\end{equation}
yielding the final result.
\end{proof}
Thus, the generalization error bound for equivariant quantum models can be stated in the following theorem.

\begin{theorem}[PAC-Bayes generalization bound for equivariant quantum models (restatement of \cref{th:pac-bayes-equivariant})]
For any depth $L > 0$, let $f_{\vec{w}}^{\mathrm{eq}}: \mathcal{X} \to \mathbb{R}^K$ be an $L$-layer equivariant quantum model parameterized by $w = \{W_{j,\lambda}\}_{j,\lambda}$. Let $\Xi_j = \sum_\lambda m_{j,\lambda}^2$ be the number of equivariant parameters per layer, where $\Xi_{\max} = \max_j \Xi_j$, and let $d_{\max} = \max_\lambda d_\lambda$ be the maximum dimension of the irreducible representations.
Then, for any margin $\gamma > 0$, with probability at least $1-\delta$ over the training set of size $N$, for any learned parameters $\vec{w}=\mathrm{vec}(\{W_j\}_{j=1}^L)$, we have
\begin{equation}
    L_0(f_{\vec{w}}^{\mathrm{eq}}) \leq \hat{L}_{\gamma}(f_{\vec{w}}^{\mathrm{eq}}) + \mathcal{O}\left(\sqrt{\frac{d_{\max}^2 \, \Xi_{\max} \, \ln(\sum^L_{j=1} 2^{\Xi_j}) \, \beta_{\mathrm{eq}}^2 \sum_{j=1}^L \|W_j\|_{\mathrm{eq},F}^2+\ln\left(\frac{N \;L\min\left\{\frac{\gamma \sqrt{N}}{\sqrt{\sum_{j=1}^L\norm{W_j}_{\mathrm{eq},F}^2}},\:M(\eta,|G|,\Xi)\right\}}{\delta}\right)}{\gamma^2 N}} \right)\,,
\end{equation}
where $\beta_{\mathrm{eq}}$ is defined as
\begin{equation}
    \beta_{\mathrm{eq}} = \sum_{j=1}^L \left( \prod_{\ell=j+1}^L \|W_l\|_{\mathrm{eq},1} \right)\,,
\end{equation}
the layer-wise equivariant norms 
are defined in terms of the irrep blocks $W_{j,\lambda}$ as
\begin{equation}
    \|W_j\|_{\mathrm{eq},1} = \sum_{\lambda} d_\lambda \|W_{j,\lambda}\|_1 \quad \text{and} \quad \|W_j\|_{\mathrm{eq},F}^2 = \sum_{\lambda} d_\lambda \|W_{j,\lambda}\|_F^2 \,,
\end{equation}
and 
\begin{equation}
    M(\eta,|G|,\Xi,L)=\frac{(\eta \sqrt{|G|\Xi})^L - 1}{\eta \sqrt{|G|\Xi} - 1}.
\end{equation}
\end{theorem}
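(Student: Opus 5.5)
The proof assembles the pieces already derived in this subsection, mirroring the PM and PTM cases. We fix a prior $P=\mathcal{N}(0,\sigma^2\mathbb{I})$ on the vectorized irrep blocks $\{W_{j,\lambda}\}$ and take the posterior $Q=\mathcal{N}(\vec{w},\sigma^2\mathbb{I})$, with the same variance, over the equivariant parameters. The first step invokes \Cref{a-lemma:pert_bound_equivariant} on the event where $\|U_j\|_{\mathrm{eq},1}\le t$ for all $j$, with $t=\sigma d_{\max}\sqrt{2\Xi_{\max}\ln(2\sum_j 2^{\Xi_j})}$. This event has probability at least $1/2$ by the tail bound above. On it, the output deviation is at most $e\beta_{\mathrm{eq}}t$.

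We choose $\sigma$ via the proxy $\tilde\beta_{\mathrm{eq}}$ as in the displayed constraint, so that this deviation is $\le\gamma/4$. We must also check the side condition of \Cref{a-lemma:pert_bound_equivariant}, $\|U_{j,\lambda}\|_1\le\|W_{j,\lambda}\|_1/L$. We handle it as in the PM/PTM argument: for the chosen $\sigma$, $t$ is at most a $1/L$ fraction of the relevant norms whenever the bound is non-vacuous. Otherwise we note that the recursion of \Cref{lemma:pert_bound_template} only needs $\|U_\ell\|_{\mathrm{eq},1}\le\|W_\ell\|_{\mathrm{eq},1}/L$, which the $(1-1/L)$ slack absorbs. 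This gives the hypothesis of \Cref{lemma:pac-bayes_margin}.

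Next, we insert the KL bound $\frac{1}{2\sigma^2}\sum_j\|W_j\|_{\mathrm{eq},F}^2$ with this $\sigma$ into \Cref{lemma:pac-bayes_margin}. For a fixed $\tilde\beta_{\mathrm{eq}}$, with probability $1-\delta$, this yields the analogue of Eq.~\eqref{eq:before_union_equivariant} with $d_{\max}^2\Xi_{\max}\ln(\sum_j2^{\Xi_j})\beta_{\mathrm{eq}}^2\sum_j\|W_j\|_{\mathrm{eq},F}^2$ in the numerator. Here we use $\tilde\beta_{\mathrm{eq}}\le(1+1/L)\beta_{\mathrm{eq}}$, and $(1-1/L)^{-2}$ is a constant for $L\ge2$. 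The case $L=1$ needs the trivial $\beta_{\mathrm{eq}}=1$, so no net is required.

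Finally, we remove the data dependence of $\sigma$ by a union bound over a multiplicative grid of $\tilde\beta_{\mathrm{eq}}$ values with relative spacing $1/L$. By the preceding lemma, $\beta_{\mathrm{eq}}\in[1,M(\eta,|G|,\Xi,L)]$. Moreover, any $\beta_{\mathrm{eq}}$ with $\beta_{\mathrm{eq}}^2\sum_j\|W_j\|_{\mathrm{eq},F}^2\ge\gamma^2N$ makes the bound exceed $1$, since $\Xi\ln(L2^{\Xi})\ge1$, and there the bound holds trivially. So the grid only needs to cover $[1,\min\{\gamma\sqrt N/(\sum_j\|W_j\|_{\mathrm{eq},F}^2)^{1/2},M\}]$. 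That requires at most $L\min\{\cdot,M\}$ points, up to a logarithmic refinement. Replacing $\delta$ by $\delta/\mathcal{N}_\beta$ produces the stated logarithmic term.

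The \textbf{main obstacle} is making this union bound honest. The cutoff $\gamma\sqrt N/\|\vec w\|$ itself depends on $\vec w$, so the net must be defined over a fixed range. I would instead build the net over $[1,M]$, or over a range fixed by $\gamma\sqrt N$ using the lower bound on $\sum_j\|W_j\|_{\mathrm{eq},F}^2$, and then argue the tighter min only for non-vacuous instances. The uniform-$\Xi,\eta$ assumption of the lemma must also be invoked for $M$.
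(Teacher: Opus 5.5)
Your skeleton is exactly the paper's: Gaussian prior and posterior on the irrep blocks, the $\ell_1$ tail bound giving $t=\sigma d_{\max}\sqrt{2\Xi_{\max}\ln(2\sum_j 2^{\Xi_j})}$, $\sigma$ fixed through a grid value $\tilde\beta_{\mathrm{eq}}$, the KL term $\sum_j\|W_j\|_{\mathrm{eq},F}^2/(2\sigma^2)$, and a union bound over a $\beta_{\mathrm{eq}}$-net. Two of your refinements are correct. The template recursion only needs the aggregated condition $\|U_\ell\|_{\mathrm{eq},1}\le\|W_\ell\|_{\mathrm{eq},1}/L$, not the blockwise one. And $L=1$ must be split off, since the paper's factor $1-1/L$ vanishes there while $\beta_{\mathrm{eq}}=1$ exactly.

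The step that fails is your verification of that side condition. There is no PM/PTM argument to copy: the paper never checks this hypothesis in any of the three cases. It simply applies the perturbation lemma on the event $\{\|U_j\|_{\mathrm{eq},1}\le t\ \forall j\}$.

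Your justification does not work either. With $\sigma$ at its admissible maximum, $t=\gamma(1-1/L)/(4e\tilde\beta_{\mathrm{eq}})$. So demanding $t\le\|W_\ell\|_{\mathrm{eq},1}/L$ for all $\ell\ge 2$ is an \emph{upper} bound on $\gamma$. Non-vacuousness only gives a \emph{lower} bound on $\gamma$, of order $\beta_{\mathrm{eq}}(\sum_j\|W_j\|_{\mathrm{eq},F}^2)^{1/2}/\sqrt{N}$ up to the $d_{\max}$ and $\Xi$ factors. The factor $1-1/L$ cannot absorb it: it is already used up turning $e\beta_{\mathrm{eq}}t\le\gamma/4$ into a condition on $\tilde\beta_{\mathrm{eq}}$, via $\beta_{\mathrm{eq}}\le\tilde\beta_{\mathrm{eq}}/(1-1/L)$.

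Without the side condition, the unrolled recursion only gives $t\sum_j\prod_{\ell>j}(\|W_\ell\|_{\mathrm{eq},1}+t)$. This can exceed $e\beta_{\mathrm{eq}}t$ by an arbitrarily large factor when a late layer is close to depolarizing ($\|W_\ell\|_{\mathrm{eq},1}\ll Lt$) while earlier layers have large norm. That is precisely the regime the bound is meant to reward. For channels there is also no analogue of the positive-homogeneity rebalancing of layer norms used to verify this condition for ReLU networks. So following the paper inherits the gap rather than closing it. A genuine repair has to change the complexity measure. For example, replace each $\|W_\ell\|_{\mathrm{eq},1}$ inside $\beta_{\mathrm{eq}}$ by $a_\ell=\max\{\|W_\ell\|_{\mathrm{eq},1},L\gamma/(4e)\}$. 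Since $\tilde\beta_{\mathrm{eq}}\ge 1-1/L$, this floor dominates $Lt$, so $\|W_\ell\|_{\mathrm{eq},1}+t\le(1+1/L)a_\ell$ holds with no side condition at all.

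Your ``main obstacle'' is also real, and it is a second loose point in the paper. Its net size $L\beta_{\max}/\beta_{\min}$ takes $\beta_{\max}=\min\{\gamma\sqrt{N}/(\sum_j\|W_j\|_{\mathrm{eq},F}^2)^{1/2},M\}$, which depends on the learned $\vec w$, so it is not a fixed net. Of your two remedies, a fixed net over $[1,M]$ is valid but puts $M$, rather than the stated minimum, into the logarithm. The other remedy does not work, because $\sum_j\|W_j\|_{\mathrm{eq},F}^2$ has no positive lower bound: it vanishes at the depolarizing baseline.

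The standard way to recover the stated form is a weighted union bound over the geometric grid $\tilde\beta_k=(1+1/L)^k$, $k\ge 0$, with confidence $\delta_k=\delta/((k+1)(k+2))$. A $\vec w$ matched to grid point $k=O(1+L\ln\beta_{\mathrm{eq}})$ then pays only $\ln((k+1)(k+2)/\delta)=O(\ln(L\beta_{\mathrm{eq}}/\delta))$. On non-vacuous instances $\beta_{\mathrm{eq}}\le\min\{\gamma\sqrt{N}/(\sum_j\|W_j\|_{\mathrm{eq},F}^2)^{1/2},M\}$, which gives the stated logarithm. On vacuous instances the inequality holds trivially.
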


\section{Extending to data re-uploading quantum models}\label{a:data-re-uploading}

The results in \Cref{a:perturbation-bounds,a:generalization-bounds} are phrased for encoding-first quantum models, in which the input vector $\vec{x}$ is initially encoded into a state $\rho(\vec{x})$, but then all consecutive layers are independent of $\vec{x}$. In this appendix, we argue that the results straightforwardly extend to data re-uploading quantum models~\cite{perez2020data}, which alternate between input-dependent and parameter-dependent layers. 

\begin{corollary}[PAC-Bayes generalization bounds for data re-uploading quantum models (restatement of \Cref{cor:reuploading}]\label{a-cor:reuploading} 
The PAC-Bayes generalization bounds established in \Cref{th:pac-bayes-PM} and \Cref{th:pac-bayes-PTM}  also apply to data re-uploading quantum models of the form
\begin{equation}
   \rho_{\mathrm{\mathrm{out}}}(\vec{x}) = \left(\phi_L \circ \mathcal{E}_L^{\vec{x}} \circ \phi_{L-1} \circ \mathcal{E}_{L-1}^{\vec{x}} \circ \cdots \circ \phi_1 \circ \mathcal{E}_1^{\vec{x}}\right)(\rho_0)\,,
\end{equation}
where each $\phi_j=\phi_{\mathrm{P(T)M}}^{W_j}$ is a parameter-dependent quantum channel, and each $\mathcal{E}_j^{\vec{x}}$ is an input-dependent quantum channel implementing the data re-uploading step.
\end{corollary}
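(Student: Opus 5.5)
The plan is to observe that the whole generalization argument only touches the model through the perturbation bound of Lemma~\ref{lemma:pert_bound_template}. That lemma is applied through \Cref{a-lemma:pert_bound_PM,a-lemma:pert_bound_PTM}. The Gaussian tail bound, the margin condition, the KL computation and the covering-net union bound over $\tilde\beta_{\rm P(T)M}$ depend only on the parameter vector $\vec{w}=\mathrm{vec}(\{W_j\}_{j=1}^L)$ and on the output deviation being controlled by $e\beta_{\rm P(T)M}\max_j\|U_j\|_{1,1}$. None of these steps uses that layers after the first are independent of $\vec{x}$. So it suffices to re-establish \Cref{a-lemma:pert_bound_PM,a-lemma:pert_bound_PTM}, with the same constants, for the re-uploading architecture. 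After that, \Cref{th:pac-bayes-PM,th:pac-bayes-PTM} follow verbatim.

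Concretely, I would define the effective $j$-th layer $\calC_j^{W_j}\coloneqq \phi^{W_j}_{\mathrm{P(T)M}}\circ\mathcal{E}_j^{\vec{x}}$ and the fixed input state $\rho(\vec{x})\coloneqq\rho_0$. The effective layer depends on $\vec{x}$, but Lemma~\ref{lemma:pert_bound_template} is a pointwise statement for a fixed $\vec{x}$. Moreover, its constants $\mu_j,\nu_j$ must be independent only of $W_j,U_j$, not of $\vec{x}$, and here they are independent of $\vec{x}$ anyway. I then verify the two assumptions of the template.

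For \ref{eq:assumption1}, note that $X=\rho_j^{\vec{w}+\vec{u}}-\rho_j^{\vec{w}}$ is traceless Hermitian. Since $\mathcal{E}_{j+1}^{\vec{x}}$ is trace preserving and a channel, $\mathcal{E}_{j+1}^{\vec{x}}(X)$ is again traceless. Contractivity in trace norm gives $\|\mathcal{E}_{j+1}^{\vec{x}}(X)\|_1\le\|X\|_1$. The estimate \eqref{eq:app_PM_2} for the PM case (using the Pauli twirl, which needs only tracelessness), and its PTM analogue, then give the bound with the same $\mu_{j+1}$. For \ref{eq:assumption2}, the operator $\mathcal{E}_{j+1}^{\vec{x}}(\rho_j)$ is again a density matrix. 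The computation \eqref{eq:app_PM_3}, and its PTM counterpart, holds for any state and yields the same $\nu_{j+1}\|U_{j+1}\|_{1,1}$. The dimensional identity $d_{\rm out}^{(j)}=d_{\rm in}^{(j+1)}$ needed in the PTM simplification requires that the $\mathcal{E}_j^{\vec{x}}$ are dimension preserving. If they are not, I would simply redefine $d_{\rm in}^{(j)}$ as the input dimension of $\phi_j$ and accept the extra ratio factors in $\beta_{\rm PTM}$.

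The main, mild obstacle is the data-independence bookkeeping. The prior, the covering net for $\tilde\beta$, and the ranges $[\beta_{\min},\beta_{\max}]$ from \Cref{a-lemma:beta_bounds-PM,a-lemma:beta_bounds-PTM} must not involve $\vec{x}$. They do not, because they depend only on the norms of the $W_\ell$ and on the CPTP structure of the $\phi_\ell$. Also, the condition $\max_{\vec{x}}\|f_{\vec{w}+\vec{u}}(\vec{x})-f_{\vec{w}}(\vec{x})\|_\infty<\gamma/4$ in Lemma~\ref{lemma:pac-bayes_margin} holds uniformly in $\vec{x}$, since the perturbation bound is $\vec{x}$-independent. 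With these checks the corollary follows.
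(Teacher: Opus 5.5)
Your proposal is correct and takes essentially the same route as the paper. You treat $\phi^{W_j}_{\mathrm{P(T)M}}\circ\mathcal{E}_j^{\vec{x}}$ as the effective layer and use that $\mathcal{E}_j^{\vec{x}}$ is trace preserving and trace-norm contractive, so the PM/PTM perturbation recursion closes with unchanged constants and the PAC-Bayes machinery then applies verbatim. Two details you check explicitly are left implicit in the paper's proof: that $\mathcal{E}_{j+1}^{\vec{x}}$ preserves tracelessness (needed to drop the depolarizing baseline term), and that the PTM telescoping requires dimension-preserving encodings.
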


\begin{proof}
    The idea of the proof is to refer back to the arguments used in~\Cref{a-lemma:pert_bound_PM,a-lemma:pert_bound_PTM}, and show that the perturbation bounds remain invariant even when quantum data-encoding channels are inserted at each layer.
    In both the PM and PTM settings, the proof relies on the recursive definitions of the intermediate states. Since the subsequent analysis is identical in the two cases, we present a unified notation by writing $\mathrm{P(T)M}$ to denote either model. We thus redefine the states to incorporate the encoding channels
    \begin{align}
        \rho_{\leq j}^{W+U}(\vec{x})&\coloneqq \left(\phi^{W_j+U_j}_{\mathrm{P(T)M}}\circ\mathcal{E}_j^{\vec{x}}\circ\dots\circ\phi^{W_1+U_1}_{\mathrm{P(T)M}}\circ \mathcal{E}_1^{\vec{x}}\right)\bigl(\rho_0\bigr), \\
    \rho_{\leq j}^{W}(\vec{x})&
    \coloneqq 
    \left(\phi^{W_j}_{\mathrm{P(T)M}}\circ\mathcal{E}_j^{\vec{x}}\circ\dots\circ\phi^{W_1}_{\mathrm{P(T)M}}\circ \mathcal{E}_1^{\vec{x}}\right)\bigl(\rho_0\bigr).
    \end{align}
    We define
    \begin{equation}
        \Delta_j=\norm{\rho_{\leq j}^{W+U}(\vec{x})-\rho_{\leq j}^{W}(\vec{x})}_1.
    \end{equation}
    A similar inductive argument applies, using
    \begin{equation}
        \begin{split}
            \Delta_{j+1}&=\norm{\left(\phi^{W_{j+1}+U_{j+1}}_{\mathrm{P(T)M}}\circ\mathcal{E}_{j+1}^{\vec{x}}\right)\left(\rho_{\leq j}^{W+U}(\vec{x})\right)-\left(\phi^{W_{j+1}}_{\mathrm{P(T)M}}\circ\mathcal{E}_{j+1}^{\vec{x}}\right)\left(\rho_{\leq j}^{W}(\vec{x})\right)}_1\\
            &=\norm{\left(\phi^{W_{j+1}+U_{j+1}}_{\mathrm{P(T)M}}\circ\mathcal{E}_{j+1}^{\vec{x}}\right)\left(\rho_{\leq j}^{W+U}(\vec{x})-\rho_{\leq j}^{W}(\vec{x})\right)+\left((\phi^{W_{j+1}+U_{j+1}}_{\mathrm{P(T)M}}-\phi^{W_{j+1}}_{\mathrm{P(T)M}})\circ\mathcal{E}_{j+1}^{\vec{x}}\right)\left(\rho_{\leq j}^{W}(\vec{x})\right)}_1\\
            &\leq CC_{1,1}\left(\phi^{W_{j+1}+U_{j+1}}_{\mathrm{P(T)M}}\right)\norm{\mathcal{E}_{j+1}^{\vec{x}}\left(\rho_{\leq j}^{W+U}(\vec{x})-\rho_{\leq j}^{W}(\vec{x})\right)}_1+\norm{\phi^{W_{j+1}+U_{j+1}}_{\mathrm{P(T)M}}-\phi^{W_{j+1}}_{\mathrm{P(T)M}}}_{r,1}\norm{\mathcal{E}_{j+1}^{\vec{x}}\left(\rho_{\leq j}^{W}(\vec{x})\right)}_r
        \end{split}
    \end{equation}
    Since $\mathcal{E}_{j+1}^{\vec{x}}$ is a quantum channel, it is contractive in the $1$-norm. Using sub-multiplicativity, we obtain
    \begin{equation}
        \Delta_{j+1}\leq CC_{1,1}\left(\phi^{W_{j+1}+U_{j+1}}_{\mathrm{P(T)M}}\right)\underbrace{\norm{\rho_{\leq j}^{W+U}(\vec{x})-\rho_{\leq j}^{W}(\vec{x})}_1}_{=\Delta_j}+\norm{\phi^{W_{j+1}+U_{j+1}}_{\mathrm{P(T)M}}-\phi^{W_{j+1}}_{\mathrm{P(T)M}}}_{r,1}\norm{\mathcal{E}_{j+1}^{\vec{x}}}_{r,r}\norm{\rho_{\leq j}^{W}(\vec{x})}_r
    \end{equation}
    Finally, by setting $r=1$, we obtain $\norm{\mathcal{E}_{j+1}^{\vec{x}}}_{r,r}=1$ for both the PM and PTM representations. Consequently, from this stage onwards the argument is identical in both cases and coincides with the standard proof of the perturbation bounds.
\end{proof}
While the analysis is carried out for the PM and PTM representations, the underlying argument remains unchanged and extends directly to the equivariant setting.

\section{PAC-Bayes bounds from the Kraus and Stinespring representations}\label{a:kraus}

We have described three parameterizations of quantum models and corresponding perturbation bounds that lead to meaningful PAC-Bayesian generalization bounds. The attentive reader may have wondered why we do not use more standard mathematical representation of quantum channels, such as the Kraus or Stinespring representations. In this appendix, we demonstrate that applying our recipe to the Kraus and Stinespring representations does not lead to useful generalization bounds. This highlights the importance of a well-chosen parameterization, not from the perspective of actually implementing a quantum model, but for its mathematical analysis.

\paragraph{Kraus parameterization.}
A natural way of parameterizing a quantum channel $\phi_j$ acting at layer $j$ is via its Kraus representation,
\begin{equation}
\phi_j(\rho)=\sum_{i=1}^{r_j} K_i^{(j)} \rho K_i^{(j)\dagger},
\end{equation}
where $r_j$ denotes the Kraus rank, which can be taken to satisfy $r_j\leq d^{(j)}_{\mathrm{in}} d^{(j)}_{\mathrm{out}}$. For the sake of the analysis, we assume $r_j=d^2$, where $d=2^n$, for all layers.

We model perturbations at layer $j$ by assuming that each Kraus operator is additively perturbed. Concretely, the perturbed channel is defined as
\begin{equation}
\tilde{\phi}_j(\rho)
=
\sum_{i=1}^{r_j}
\bigl(K_i^{(j)} + \tilde K_i^{(j)}\bigr)
\, \rho \,
\bigl(K_i^{(j)} + \tilde K_i^{(j)}\bigr)^\dagger,
\end{equation}
where $\tilde K_i^{(j)}$ denotes a perturbation of the $i$-th Kraus operator at layer $j$.
Applying the telescoping argument used throughout this work, one can derive the following perturbation bound
\begin{equation}
\bigl\|\tilde{\phi}_L(\dots(\tilde{\phi}_1(\rho))) - \phi_L(\dots(\phi_1(\rho)))\bigr\|_1
\;\leq\;
\sum_{j=1}^L\sum_{i=1}^{r_j}
3\,\|K_i^{(j)}\|_\infty\,\|\tilde K_i^{(j)}\|_\infty .
\end{equation}

To place this within our PAC-Bayesian framework, we collect all Kraus operators into a parameter vector
$\vec w=\mathrm{vec}(\{K_i^{(j)}\}_{i,j})$
and endow it with the Frobenius norm
\begin{equation}
\|\vec w\|_F^2=\sum_{j=1}^L\sum_{i=1}^{r_j}\|K_i^{(j)}\|_F^2.
\end{equation}
However, the complete positivity and trace-preservation constraints imply
$\sum_i K_i^{(j)\dagger}K_i^{(j)}=\mathbb{I}$, from which it follows that
$\|\vec W\|_F^2=\sum_{j=1}^L\mathrm{tr}(\mathbb{I})=Ld$,
independently of the actual learned parameters.
As a consequence, the resulting generalization bound becomes essentially parameter-blind.
In particular, one obtains a bound of the form
\begin{equation}
L_0(f_W) \leq \hat{L}_\gamma(f_W) + \widetilde{\mathcal{O}}\Bigg(
\sqrt{\frac{L d^2 \ln(L d^3) \Big(\sum_{j=1}^L\sum_{i=1}^{d^2} \|K_i^{(j)}\|_\infty\Big)^2}{\gamma^2 N}}
\Bigg).
\end{equation}
Noting that the Kraus operator norms satisfy $1 \le \sum_{i=1}^{d^2} \|K_i\|_\infty \le d$, 
even in the best possible case $\sum_{i=1}^{d^2} \|K_i\|_\infty = 1$, 
this bound is at best comparable to  uniform generalization error bounds in the literature~\cite{caro2022generalization}.

\paragraph{Stinespring parameterization.}
An alternative standard parameterization of quantum channels is given by the Stinespring dilation. 
Each layer $\phi_j$ can be represented via an isometry
\begin{equation}
V^{(j)}=\sum_{i=1}^{r_j} K_i^{(j)}\otimes\lvert i\rangle
\end{equation}
acting on an enlarged Hilbert space, such that
\begin{equation}
\phi_j(\rho)
=
\mathrm{tr}_{\mathrm{env}}\!\left(V^{(j)} \rho V^{(j)\dagger}\right).
\end{equation}
We consider again $r_j=d^2$ for all layers.

We model perturbations at layer $j$ by assuming that the Stinespring isometry is additively perturbed. Concretely, the perturbed channel is defined as
\begin{equation}
\tilde{\phi}_j(\rho)
=
\mathrm{tr}_{\mathrm{env}}\!\left(
\bigl(V^{(j)}+\tilde V^{(j)}\bigr)
\, \rho \,
\bigl(V^{(j)}+\tilde V^{(j)}\bigr)^\dagger
\right),
\end{equation}
where $\tilde V^{(j)}$ denotes a perturbation of the Stinespring isometry at layer $j$.

Assuming $\|\tilde V^{(j)}\|_\infty \le 1$, an argument analogous to the Kraus case yields the perturbation bound
\begin{equation}
\bigl\|\tilde{\phi}_L(\dots(\tilde{\phi}_1(\rho))) - \phi_L(\dots(\phi_1(\rho)))\bigr\|_1
\;\leq\;
3\sum_{j=1}^L \|\tilde V^{(j)}\|_\infty .
\end{equation}

In this case, the parameter vector is
$\vec w=\mathrm{vec}(\{V^{(j)}\}_{j=1}^L)$
and when computing its Frobenius norm, one again finds that
\begin{equation}
\|\vec w\|_F^2=\sum_{j=1}^L\|V^{(j)}\|_F^2=Ld,
\end{equation}
independently of the specific channel.
Consequently, the induced PAC-Bayesian generalization bound is entirely uniform, taking the form
\begin{equation}
L_0(f_W) \leq \hat{L}_\gamma(f_W) + \widetilde{\mathcal{O}}\Bigg(
\sqrt{\frac{L^3 d^2 \ln(L d^3)}{\gamma^2 N}}
\Bigg),
\end{equation}
scaling only with the depth and system dimension and not with any learned structure of the model.
This mirrors the situation encountered with the Kraus parameterization and further illustrates that, while the Stinespring representation is mathematically elegant and physically meaningful, it might be ill-suited for deriving parameter-dependent generalization bounds.

\section{Alternative convention for the Pauli Transfer Matrix and its impact on the PAC-Bayes bound}

The choice of convention for the Pauli Transfer Matrix (PTM) directly affects the weight-norm and channel-norm contributions appearing in the PAC-Bayes generalization bound of~\Cref{th:pac-bayes-PTM}. In this section we contrast the convention used in the main text with a closely related alternative and report a numerically intriguing—though not yet theoretically understood—effect that appears when combining elements of both.

In the main text the PTM is defined such that the channel acts on a density operator $\rho$ as
\begin{align}
    \phi_{\mathrm{PTM}}^{W}(\rho) 
    = \frac{1}{\sqrt{d_{\mathrm{in}} d_{\mathrm{out}}}} 
    \sum_{A,B} R(A,B)\, \mathrm{Tr}(\sigma_B \rho)\, \sigma_A ,
\end{align}
with coefficients
\begin{align}
    R(A,B) = \delta_{A,0}\delta_{B,0}\sqrt{\frac{d_{\mathrm{in}}}{d_{\mathrm{out}}}} + W(A,B), 
    \qquad
    R(A,B) = \frac{1}{\sqrt{d_{\mathrm{in}} d_{\mathrm{out}}}}
    \mathrm{Tr}\!\left( \sigma_A\, \phi_{\mathrm{PTM}}^{W}(\sigma_B) \right).
\end{align}
Under this convention the Frobenius norm of the weight matrix satisfies
\begin{align}
    \|W\|_F^2 = \mathrm{Tr}(J_\phi^2) - \frac{d_{\mathrm{in}}}{d_{\mathrm{out}}}.
\end{align}

An alternative, equally valid convention absorbs the dimensional prefactor into the matrix entries:
\begin{align}
    \phi_{\mathrm{PTM}}^{W}(\rho) 
    = \sum_{A,B} \tilde{R}(A,B)\, \mathrm{Tr}(\sigma_B \rho)\, \sigma_A ,
\end{align}
with
\begin{align}
    \tilde{R}(A,B) 
    = \delta_{A,0}\delta_{B,0}\frac{1}{d_{\mathrm{out}}} + \tilde{W}(A,B),
    \qquad
    \tilde{R}(A,B) 
    = \frac{1}{d_{\mathrm{in}} d_{\mathrm{out}}}
    \mathrm{Tr}\!\left( \sigma_A\, \phi_{\mathrm{PTM}}^{W}(\sigma_B) \right).
\end{align}
Within this convention the squared Frobenius norm becomes
\begin{align}
    \|\tilde{W}\|_F^2 
    = \frac{\mathrm{Tr}(J_\phi^2)}{d_{\mathrm{in}} d_{\mathrm{out}}} 
    - \frac{1}{d_{\mathrm{out}}^2}.
\end{align}
When $d_{\mathrm{in}} = d_{\mathrm{out}} = 4^n$, this convention yields a Frobenius norm equivalent to that obtained in the PM representation. However, the dimensional factors that appear dividing the Frobenius norm in this alternative convention reappear multiplicatively in the factor $\beta_{\mathrm{PTM}}$, which constitutes the other non-uniform contribution in the generalization error bound.

While the alternative convention therefore improves the Frobenius-norm contribution to the bound, this advantage is offset by the corresponding increase in the $\beta_{\mathrm{PTM}}$ term, where the dimensional factors reappear. This observation motivated us to explore a hypothetical ``best-of-both-worlds'' scenario: we retain the $\beta_{\mathrm{PTM}}$ factor from the main-text convention while recomputing the Frobenius norms entering the KL-divergence term using the rescaled weights $\tilde{W}$ associated with the alternative convention.

Although we have no proof of a generalization bound that uses only the more favorable contributions from the two different normalization conventions, each convention is individually physically valid. Figure~\ref{fig:hybrid} illustrates the outcome of this hybrid evaluation for our QCNN experiment. The plotted density corresponds to the same models reported in the main text, but with each $\|W_j\|_F^2$ recomputed using the alternative normalization implied by $\tilde{R}$. As can be seen, this modification leads to a substantially smaller value of the complexity term while preserving essentially the same degree of correlation with the observed generalization error, namely $r = 0.46$.

\begin{figure}[h]
    \centering    \includegraphics[width=0.6\textwidth]{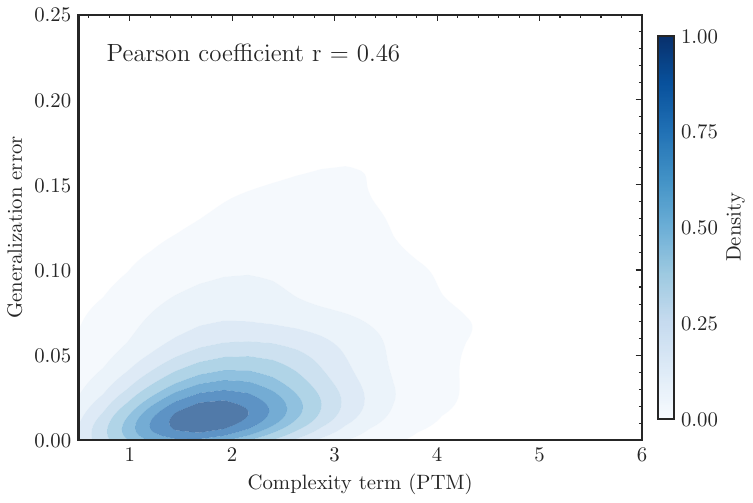}
    \caption{\textbf{Hybrid evaluation of the PAC-Bayes bound on the QCNN experiment.} Density corresponds to the same models as in \Cref{fig:numerics_results}(b) in the main text, but with each $\|W_j\|_F^2$ replaced by the dimensionally rescaled value suggested by the alternative convention $\|\tilde{W}_j\|_F^2$, while $\beta_{\mathrm{PTM}}$ is kept unchanged. The complexity term becomes significantly smaller, yet the correlation with the observed generalization error remains $r = 0.46$.}
    \label{fig:hybrid}
\end{figure}

At present we do not have a theoretical explanation for why this particular combination preserves the empirical correlation while substantially tightening the complexity term. It remains unclear whether a principled reweighting or renormalization of this kind can be formally justified, or whether the observed improvement is specific to the architecture, dataset, or training procedure considered here. We therefore report this result simply as a numerical observation that may merit further investigation when studying how PTM conventions and normalization choices influence PAC-Bayes generalization bounds for quantum models.

\end{document}

%% file: commands.tex
\newcommand{\Id}{\ensuremath{\mathbb{I}}}

\newcommand{\uu}{\ensuremath{\vec{u}}}
\newcommand{\ww}{\ensuremath{\vec{w}}}
\newcommand{\xx}{\ensuremath{\vec{x}}}

\renewcommand{\norm}[1]{\left\lVert#1\right\rVert}

\DeclareMathOperator{\poly}{poly}

\renewcommand{\exp}{\ensuremath{\mathrm{exp}}}

\providecommand{\hL}{\ensuremath{\hat{L}}}

\providecommand{\to}{\ensuremath{\Tilde{o}}}

\providecommand{\tr}{\ensuremath{\Tilde{r}}}

\providecommand{\tt}{\ensuremath{\Tilde{t}}}

\providecommand{\calC}{\ensuremath{\mathcal{C}}}
\providecommand{\calD}{\ensuremath{\mathcal{D}}}

\providecommand{\calH}{\ensuremath{\mathcal{H}}}

\providecommand{\calX}{\ensuremath{\mathcal{X}}}

\providecommand{\bbN}{\ensuremath{\mathbb{N}}}

\providecommand{\bbP}{\ensuremath{\mathbb{P}}}

\providecommand{\bbR}{\ensuremath{\mathbb{R}}}
